\algnewcommand{\IfThen}[2]{
	\State \algorithmicif\ #1\ \algorithmicthen\ #2}
\newtheorem{definition}{Definition}
\newcommand{\sset}{\mathcal{S}}
\newcommand{\preg}{\mathcal{P}}
\newcommand{\A}{\mathcal{A}}
\newcommand{\X}{\mathcal{X}}
\newcommand{\Hsquare}{%
	\text{\fboxsep=-.2pt\fbox{\rule{0pt}{1.2ex}\rule{1.2ex}{0pt}}}%
}
\newcommand*\circled[1]{\tikz[baseline=(char.base)]{\node[shape=circle,draw,inner sep=1pt] (char) {\scriptsize #1};}}
\DeclareMathOperator*{\argmax}{arg\,max}
\DeclareMathOperator*{\argmin}{arg\,min}
\definecolor{mygreen}{rgb}{0.0, 0.5, 0.0}
\definecolor{myorange}{rgb}{0.55, 0.62, 1}
\newcommand{\nb}[3]{{\colorbox{#2}{\bfseries\sffamily\scriptsize\textcolor{white}{#1}}}{\textcolor{#2}{\sf\small\textsf{#3}}}}
\newcommand{\bollo}[1]{\nb{Bollo}{red}{#1}}
\title{Online Bayesian Persuasion Without a Clue}
\author{
	Francesco Bacchiocchi \\
	Politecnico di Milano\\
	\texttt{francesco.bacchiocchi@polimi.it} \\
	\And
	Matteo Bollini \\
	Politecnico di Milano\\
	\texttt{matteo.bollini@polimi.it} \\
	\AND
	Matteo Castiglioni \\
	Politecnico di Milano\\
	\texttt{matteo.castiglioni@polimi.it} \\
	\And
	Alberto Marchesi \\
	Politecnico di Milano\\
	\texttt{alberto.marchesi@polimi.it} \\
	\And
	Nicola Gatti \\
	Politecnico di Milano\\
	\texttt{nicola.gatti@polimi.it}  \\}
\begin{document}
\maketitle

\begin{abstract}
	We study \emph{online Bayesian persuasion} problems in which an informed sender repeatedly faces a receiver with the goal of influencing their behavior through the provision of payoff-relevant information.
	Previous works assume that the sender has \emph{knowledge about} either the prior distribution over states of nature or receiver's utilities, or both.
	%
	%
	We relax such unrealistic assumptions by considering settings in which the sender does \emph{not know anything} about the prior and the~receiver.
	%
	%
	We design an algorithm that achieves sublinear---in the number of rounds---regret with respect to an optimal signaling scheme, and we also provide a collection of lower bounds showing that the guarantees of such an algorithm are tight.
	Our algorithm works by searching a suitable space of signaling schemes in order to learn receiver's best responses.
	To do this, we leverage a non-standard representation of signaling schemes that allows to cleverly overcome the challenge of \emph{not} knowing anything about the prior over states of nature and receiver's utilities.
	Finally, our results also allow to derive lower/upper bounds on the \emph{sample complexity} of learning signaling schemes in a related Bayesian persuasion PAC-learning problem. 
\end{abstract}

\section{Introduction}

\emph{Bayesian persuasion} has been introduced by~\citet{kamenica2011bayesian} to model how strategically disclosing information to decision makers influences their behavior.  
%
Over the last years, it has received a terrific attention in several fields of science, since it is particularly useful for understanding strategic interactions involving individuals with different levels of information, which are ubiquitous in the real world.
%
As a consequence, Bayesian persuasion has been applied in several settings, such as online advertising~\citep{emek2014signaling,badanidiyuru2018targeting,ijcai2022p6,Shipra2023},
voting~\citep{alonso2016persuading,castiglioni2019persuading,semipublic}, traffic routing~\citep{vasserman2015implementing,bhaskar2016hardness,castiglioni2020signaling}, recommendation systems~\citep{cohen2019optimal,mansour2022bayesian},
security~\citep{rabinovich2015information,xu2016signaling}, e-commerce~\citep{bro2012send,castiglioni2022signaling}
%
%
medical research~\citep{kolotilin2015experimental}, 
and financial regulation~\citep{goldstein2018stress}.

In its simplest form, Bayesian persuasion involves a \emph{sender} observing some information about the world, called \emph{state of nature}, and a \emph{receiver} who has to take an action.
Agents' utilities are misaligned, but they both depend on the state of nature and receiver's action.
Thus, sender's goal is to devise a mechanism to (partially) disclose information to the receiver, so as to induce them to take a favorable action.
This is accomplished by committing upfront to a \emph{signaling scheme}, encoding a randomized policy that defines how to send informative signals to the receiver based on the observed state.

Classical Bayesian persuasion models (see, \emph{e.g.},~\citep{dughmi2016algorithmic,dughmi2017algorithmicExternalities,xu2020tractability}) rely on rather stringent assumptions that considerably limit their applicability in practice. 
%
Specifically, they assume that the sender perfectly knows the surrounding environment, including receiver's utilities and the probability distribution from which the state of nature is drawn, called \emph{prior}.
This has motivated a recent shift of attention towards Bayesian persuasion models that incorporate concepts and ideas from \emph{online learning}, with the goal of relaxing some of such limiting assumptions.
However, existing works only partially fulfill this goal, as they still assume some knowledge of either the prior (see, \emph{e.g.},~\citep{Castiglioni2020online,Castiglioni2021MultiReceiver,castiglioni2023regret,Babichenko2022,Bernasconi2023Optimal}) or receiver's utilities (see, \emph{e.g.},~\citep{Zu2021,Bernasconi2022Sequential,jibang2022}).

\subsection{Original contributions}

We address---for the first time to the best of our knowledge---Bayesian persuasion settings where the sender \emph{has no clue} about the surrounding environment.
%
In particular, we study the online learning problem faced by a sender who repeatedly interacts with a receiver over multiple rounds, \emph{without} knowing anything about both the prior distribution over states of nature and receiver's utilities.
At each round, the sender commits to a signaling scheme, and, then, they observe a state realization and send a signal to the receiver based on that.
After each round, the sender gets \emph{partial feedback}, namely, they only observe the best-response action played by the receiver in that round.
In such a setting, the goal of the sender is to minimize their \emph{regret}, which measures how much utility they lose with respect to committing to an optimal (\emph{i.e.}, utility-maximizing) signaling scheme in every round.

We provide a learning algorithm that achieves regret of the order of $\widetilde{\mathcal{O}}(\sqrt{T})$, where $T$ is the number of rounds.
%
We also provide lower bounds showing that the regret guarantees attained by our algorithm are tight in $T$ and in the parameters characterizing the Bayesian persuasion instance, \emph{i.e.}, the number of states of nature $d$ and that of receiver's actions $n$.
Our algorithm implements a sophisticated \emph{explore-then-commit} scheme, with exploration being performed in a suitable space of signaling schemes so as to learn receiver's best responses \emph{exactly}.
This is crucial to attain tight regret guarantees, and it is made possible by employing a non-standard representation of signaling schemes, which allows to cleverly overcome the challenging lack of knowledge about both the prior and receiver's utilities.

Our results also allow us to derive lower/upper bounds on the \emph{sample complexity} of learning signaling schemes in a related Bayesian persuasion PAC-learning problem, where the goal is to find, with high probability, an approximately-optimal signaling scheme in the minimum possible number of rounds.


\subsection{Related works}\label{sec:related_works}

\citet{Castiglioni2020online} were the first to introduce \emph{online learning} problems in Bayesian persuasion scenarios, with the goal of relaxing sender's knowledge about receiver's utilities (see also follow-up works~\citep{Castiglioni2021MultiReceiver,castiglioni2023regret,Bernasconi2023Optimal}).
In their setting, sender's uncertainty is modeled by means of an \emph{adversary} selecting a receiver's \emph{type} at each round, with types encoding information about receiver's utilities.
However, in such a setting, the sender still needs knowledge about the finite set of possible receiver's types and their associated utilities, as well as about the prior.

A parallel research line has focused on relaxing sender's knowledge about the prior.
\citet{Zu2021} study online learning in a repeated version of Bayesian persuasion.
Differently from this paper, they consider the sender's learning problem of issuing \emph{persuasive} action recommendations (corresponding to signals in their case), where persuasiveness is about correctly incentivizing the receiver to actually follow such recommendations.   
They provide an algorithm that attains sublinear regret while being persuasive at every round with high probability, despite having \emph{no} knowledge of the prior.
\citet{jibang2022,gan2023sequential,bacchiocchi2024markov} achieve similar results for Bayesian persuasion in episodic Markov decision processes, while~\citet{Bernasconi2022Sequential} in non-Markovian environments.
%
All these works crucially differ from ours, since they strongly rely on the assumption that receiver's utilities are known to the sender, which is needed in order to meet persuasiveness requirements.
As a result, the techniques employed in such works are fundamentally different from ours as well.

Finally, learning receiver's best responses exactly (a fundamental component of our algorithm) is related to learning in Stackelberg games~\citep{letchford2009learning,Peng2019,bacchiocchi2024sample}.
For more details on these works and other related works, we refer the reader to Appendix~\ref{appendix:additional_rleated}.

\section{Preliminaries}

In Section~\ref{sec:prelimin_persuasion}, we introduce all the needed ingredients of the classical \emph{Bayesian persuasion} model by~\citet{kamenica2011bayesian}, while, in the following Section~\ref{sec:prelimin_learning}, we formally define the Bayesian persuasion setting faced in the rest of the paper and its related \emph{online learning} problem.
%

\subsection{Bayesian persuasion}\label{sec:prelimin_persuasion}

%
A Bayesian persuasion instance is characterized by a finite set $\Theta\coloneqq \{\theta_i\}_{i=1}^d$ of $d$ states of nature and a finite set $\A \coloneqq  \{a_i\}_{i=1}^n$ of $n$ receiver's actions.
Agents' payoffs are encoded by utility functions $u, u^\text{s}: \Theta \times \A \to [0,1]$, with $u_\theta(a) \coloneqq u(\theta,a)$, respectively $u^\text{s}_\theta(a) \coloneqq u^\text{s}(\theta,a)$, denoting the payoff of the receiver, respectively the sender, when action $a \in \A$ is played~in~state~$\theta \in \Theta$.
The sender observes a state of nature drawn from a commonly-known \emph{prior} probability distribution $\mu \in\textnormal{int}(\Delta_\Theta)$,\footnote{Given a finite set $X$, we denote by $\Delta_X$ the set of all the probability distributions over $X$.} with $\mu_\theta \in (0,1]$ denoting the probability of~$\theta \in \Theta$.
To disclose information about the realized state, the sender can publicly commit upfront to a \emph{signaling scheme} $\phi:\Theta \to \Delta_\sset$, which defines a randomized mapping from states of nature to signals being sent to the receiver, for a finite set $\sset$ of signals.
For ease of notation, we let $\phi_\theta \coloneqq \phi(\theta)$ be the probability distribution over signals prescribed by $\phi$ when the the sate of nature is $\theta \in \Theta$, with $\phi_\theta(s) \in [0,1]$ denoting the probability of sending signal $s \in \sset$.

The sender-receiver interaction goes as follows: (1) the sender commits to a signaling scheme $\phi$; (2) the sender observes a state of nature $\theta \sim \mu$ and sends a signal $s \sim \phi_\theta$ to the receiver; (3) the receiver updates their belief over states of nature according to {\em Bayes rule}; and (4) the receiver plays a best-response action $a \in \A$, with sender and receiver getting payoffs $u_\theta(a)$ and $u^\text{s}_\theta(a)$, respectively.
Specifically, an action is a \emph{best response} for the receiver if it maximizes their expected utility given the belief computed in step~(3) of the interaction.
Formally, given a signaling scheme $\phi:\Theta \to \Delta_\sset$ and a signal $s \in \sset$, we let $\A^\phi(s) \subseteq \A$ be the set of receivers' best-response actions, where:
\begin{equation}\label{eq:br_set}
	\A^\phi(s)  \coloneqq \left\{ a_i \in \A \mid \sum_{\theta \in \Theta} \mu_\theta \phi_\theta(s) u_\theta(a_i) \geq \sum_{\theta \in \Theta} \mu_\theta \phi_\theta(s) u_\theta(a_j) \quad \forall a_j \in \A  \right\}.
\end{equation}
As customary in the literature on Bayesian persuasion (see, \emph{e.g.},~\citep{dughmi2016algorithmic}), we assume that, when the receiver has multiple best responses available, they break ties in favor of the sender.
In particular, we let $a^\phi(s)$ be the best response that is actually played by the receiver when observing signal $s \in \sset$ under signaling scheme $\phi$, with $a^\phi(s) \in \argmax_{a \in \A^\phi(s)} \sum_{\theta \in \Theta} \mu_\theta \phi_\theta(s) u^\text{s}_\theta(a)$.

The goal of the sender is to commit to an \emph{optimal} signaling scheme, namely, a $\phi:\Theta \to \Delta_\sset$ that maximizes sender's expected utility, defined as $u^\text{s}(\phi)\coloneqq \sum_{s \in \sset} \sum_{\theta \in \Theta} \mu_{\theta} \phi_{\theta}(s) u^\text{s}_{\theta}(a^\phi(s))$.
In the following, we let $\textnormal{OPT} \coloneqq \max_{\phi} u^\text{s}(\phi)$ be the optimal value of sender's expected utility.
Moreover, given an additive error $\gamma \in (0,1)$, we say that a signaling scheme $\phi$ is $\gamma$-optimal if $u^\text{s}(\phi) \geq \text{OPT} - \gamma$.

\subsection{Learning in Bayesian persuasion}\label{sec:prelimin_learning}

We study settings in which the sender \emph{repeatedly} interacts with the receiver over multiple rounds, with each round involving a one-shot Bayesian persuasion interaction (as described in Section~\ref{sec:prelimin_persuasion}).
We assume that the sender has \emph{no} knowledge about both the prior $\mu$ and receiver's utility $u$, and that the only feedback they get after each round is the best-response action played by the receiver.
%

At each round $t \in [T]$,\footnote{We denote by $[n] \coloneqq \left\{ 1, \ldots, n \right\}$ the set of the first $n \in \mathbb{N}$ natural numbers.} the sender commits to a signaling scheme $\phi_t : \Theta \to \Delta_\sset$ and observes a state of nature $\theta^t \sim \mu$.
Then, they draw a signal  $s^t \sim \phi_{t, \theta^t}$ and send it to the receiver, who plays a best-response action $a^t \coloneqq a^{\phi_t}(s^t)$.
Finally, the sender gets payoff $u^\text{s}_t \coloneqq u_{\theta^t}^\text{s} (a^t)$ and observes a \emph{feedback} consisting in the action $a^t$ played by the receiver.
%
%
The goal of the sender is to learn how to maximize their expected utility while repeatedly interacting with the receiver.
When the sender commits to a sequence $\{ \phi_t \}_{t \in [T]}$ of signaling schemes, their performance over the $T$ rounds is measured  by means of the following notion of \emph{cumulative (Stackelberg) regret}:
%
\[
	R_T( \{ \phi_t \}_{t \in [T]}) \coloneqq T \cdot \text{OPT} - \mathbb{E}\left[\sum_{t=1}^T u^\text{s} ( \phi_t )\right],
\]
where the expectation is with respect to the randomness of the algorithm.
In the following, for ease of notation, we omit the dependency on $\{ \phi_t \}_{t \in [T]}$ from the cumulative regret, by simply writing~$R_T$.
Then, our goal is to design \emph{no-regret} learning algorithms for the sender, which prescribe a sequence of signaling schemes $\phi_t$ that results in the regret $R_T$ growing sublinearly in $T$, namely $R_T = o(T)$.

\section{Warm-up: A single signal is all you need}\label{sec:slices}

In order to design our learning algorithm in Section~\ref{sec:algo_noregret}, we exploit a non-standard representation of signaling schemes, which we introduce in this section.
Adopting such a representation is fundamental to be able to learn receiver's best responses \emph{without} any knowledge of both the prior $\mu$ and receiver's utility function $u$.  
The crucial observation that motivates its adoption is that receiver's best responses $a^\phi(s)$ only depend on the components of $\phi$ associated with $s \in \sset$, namely $\phi_\theta(s)$ for $\theta \in \Theta$.
Thus, in order to learn them, it is sufficient to learn how $a^\phi(s)$ varies as a function of such components.

The signaling scheme representation introduced in this section revolves around the concept of \emph{slice}.
\begin{definition}[Slice]
	Given a signaling scheme $\phi : \Theta \to \Delta_\sset$, the \emph{slice of $\phi$ with respect to signal $s \in \sset$} is the $d$-dimensional vector $x \in [0,1]^d$ with components $x_\theta \coloneqq \phi_\theta(s)$ for $\theta \in \Theta$.
\end{definition}
%
%
%
%
In the following, we denote by $\X^{\Hsquare} \coloneqq [0,1]^d$ the set of \emph{all} the possible slices of signaling schemes.
%
Moreover, we let $\X^{\triangle}$ be the set of \emph{normalized} slices, which is simply obtained by restricting slices to lie in the $(d-1)$-dimensional simplex.
%
%
Thus, it holds that $\X^{\triangle} \coloneqq \left\{ x \in [0,1]^d \mid \sum_{\theta \in \Theta} x_\theta = 1 \right\} $.

\begin{wrapfigure}[14]{R}{0.44\textwidth}
	\vspace{-.8cm}
	\begin{minipage}{0.44\textwidth}
		\begin{figure}[H]
			\centering
			\resizebox{0.65\linewidth}{!}{\tikzset{every picture/.style={line width=0.75pt}} 

\begin{tikzpicture}[x=0.75pt,y=0.75pt,yscale=-1,xscale=1]
	
	\draw  (92.81,279.18) -- (368.81,279.18)(120.41,34) -- (120.41,306.42) (361.81,274.18) -- (368.81,279.18) -- (361.81,284.18) (115.41,41) -- (120.41,34) -- (125.41,41)  ;
	\draw   (120.41,70.15) -- (331.04,70.15) -- (331.04,279.18) -- (120.41,279.18) -- cycle ;
	\draw  [fill={rgb, 255:red, 208; green, 2; blue, 27 }  ,fill opacity=0.25 ] (120.41,69.94) -- (252.5,70.59) -- (120.41,278.96) -- cycle ;
	\draw  [fill={rgb, 255:red, 74; green, 144; blue, 226 }  ,fill opacity=0.39 ] (252.5,70.8) -- (331.04,70.15) -- (331.04,139) -- (120.41,279.18) -- cycle ;
	\draw  [fill={rgb, 255:black, 65; green, 117; blue, 5 }  ,fill opacity=0.29 ] (331.04,139) -- (331.04,279.18) -- (120.41,279.18) -- cycle ;
	\draw [color={rgb, 255:black, 65; black, 117; black, 5 }  ,draw opacity=1 ][line width=2]    (245.52,195.78) -- (331.04,279.18) ;
	\draw [color={rgb, 255:black, 0; black, 55; black, 122 }  ,draw opacity=1 ][line width=2]    (200.8,151.4) -- (245.52,195.78) ;
	\draw [color={rgb, 255:black, 163; black, 2; black, 24 }  ,draw opacity=1 ][line width=2]    (119.66,69.42) -- (200.8,151.4) ;
	\draw  [fill={rgb, 255:red, 0; green, 0; blue, 0 }  ,fill opacity=1 ] (117.78,70.15) .. controls (117.78,71.6) and (118.96,72.77) .. (120.41,72.77) .. controls (121.86,72.77) and (123.04,71.6) .. (123.04,70.15) .. controls (123.04,68.71) and (121.86,67.54) .. (120.41,67.54) .. controls (118.96,67.54) and (117.78,68.71) .. (117.78,70.15) -- cycle ;
	\draw  [fill={rgb, 255:red, 0; green, 0; blue, 0 }  ,fill opacity=1 ] (328.4,279.18) .. controls (328.4,280.62) and (329.58,281.79) .. (331.04,281.79) .. controls (332.49,281.79) and (333.67,280.62) .. (333.67,279.18) .. controls (333.67,277.74) and (332.49,276.57) .. (331.04,276.57) .. controls (329.58,276.57) and (328.4,277.74) .. (328.4,279.18) -- cycle ;
	\draw  [fill={rgb, 255:red, 0; green, 0; blue, 0 }  ,fill opacity=1 ] (198.91,151.92) .. controls (198.91,153.37) and (200.09,154.54) .. (201.55,154.54) .. controls (203,154.54) and (204.18,153.37) .. (204.18,151.92) .. controls (204.18,150.48) and (203,149.31) .. (201.55,149.31) .. controls (200.09,149.31) and (198.91,150.48) .. (198.91,151.92) -- cycle ;
	\draw  [fill={rgb, 255:red, 0; green, 0; blue, 0 }  ,fill opacity=1 ] (243.42,196.83) .. controls (243.42,198.27) and (244.6,199.44) .. (246.05,199.44) .. controls (247.51,199.44) and (248.69,198.27) .. (248.69,196.83) .. controls (248.69,195.38) and (247.51,194.21) .. (246.05,194.21) .. controls (244.6,194.21) and (243.42,195.38) .. (243.42,196.83) -- cycle ;
	
	\draw (128.82,140.65) node [anchor=north west][inner sep=0.75pt]  [font=\large] [align=left] {$\displaystyle \mathcal{X}^{\square }( a_{1})$};
	\draw (267.09,83.33) node [anchor=north west][inner sep=0.75pt]  [font=\large] [align=left] {$\displaystyle \mathcal{X}^{\square }( a_{2})$};
	\draw (174.37,252.61) node [anchor=north west][inner sep=0.75pt]  [font=\large] [align=left] {$\displaystyle \mathcal{X}^{\square}( a_{3})$};
	\draw (91.07,42.36) node [anchor=north west][inner sep=0.75pt]   [align=left] {\Large $\displaystyle x_{\theta _{1}}$};
	\draw (342.67,284.84) node [anchor=north west][inner sep=0.75pt]   [align=left] {\Large $\displaystyle x_{\theta _{2}}$};
	\draw (326.08,287.5) node [anchor=north west][inner sep=0.75pt]   [align=left] {\Large $\displaystyle 1$};
	\draw (101.76,67.5) node [anchor=north west][inner sep=0.75pt]   [align=left] {\Large $\displaystyle 1$};
	\draw (168.88,82.5) node [anchor=north west][inner sep=0.75pt]  [font=\large,rotate=-45] [align=left] {$\displaystyle \mathcal{X}^{\triangle }( a_{1})$};
	\draw (225.96,135.88) node [anchor=north west][inner sep=0.75pt]  [font=\large,rotate=-45] [align=left] {$\displaystyle \mathcal{X}^{\triangle }( a_{2})$};
	\draw (288.9,197.4) node [anchor=north west][inner sep=0.75pt]  [font=\large,rotate=-45] [align=left] {$\displaystyle \mathcal{X}^{\triangle }( a_{3})$};

\end{tikzpicture}}
			\caption{Representation of sets $\mathcal{X}^\square(a_i)$ and $\mathcal{X}^\triangle(a_i)$ for an instance with $d=2$ states of nature and $n=3$ receivers' actions.}
			\label{fig:cones}
		\end{figure}
	\end{minipage}
\end{wrapfigure}
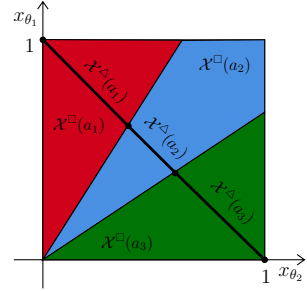

Next, we show that receiver's actions induce particular coverings of the sets $\X^{\Hsquare}$ and $\X^\triangle$, which also depend on both the prior $\mu$ and receiver's utility $u$.
First, we introduce $\mathcal{H}_{ij} \subseteq \mathbb{R}^d$ to denote the halfspace of slices under which action $a_i \in A$ is (weakly) better than action $a_j \in A$ for the receiver.
%
%
\[
	\mathcal{H}_{ij} \hspace{-0.5mm} \coloneqq \hspace{-0.5mm}   \left\{ \hspace{-0.2mm} \hspace{-0.3mm}  x \in \mathbb{R}^{d}  \mid  \sum_{\theta \in \Theta}  x_\theta \mu_\theta \big( u_\theta(a_i) \hspace{-0.3mm}  - \hspace{-0.3mm} u_\theta(a_j) \big)  \hspace{-0.4mm} \geq  \hspace{-0.4mm} 0 \hspace{-0.2mm} \right\} \hspace{-0.6mm}.
\]
Moreover, we denote by $H_{ij} \coloneqq \partial \mathcal{H}_{ij}$ the hyperplane constituting the boundary of the halfspace $\mathcal{H}_{ij}$, which we call the \emph{separating hyperplane} between actions $a_i$ and $a_j$.\footnote{We let $\partial \mathcal{H}$ be the boundary hyperplane of halfspace $\mathcal{H} \subseteq \mathbb{R}^d$.	Notice that $H_{ij}$ and $H_{ji}$ actually refer to the same hyperplane. In this paper, we use both names for ease of presentation.}
Then, for every $a_i \in \A$, we introduce the polytopes $\X^{\Hsquare} (a_i) \subseteq\X^{\Hsquare} $ and $\X^\triangle(a_i) \subseteq \X^\triangle$:
\[
	\X^{\Hsquare} (a_i) \coloneqq \X^{\Hsquare} \cap \Bigg(  \bigcap_{a_j \in \A: a_i \neq a_j} \mathcal{H}_{ij} \Bigg) \quad \text{and} \quad  \X^\triangle(a_i) \coloneqq \X^\triangle \cap \Bigg(  \bigcap_{a_j \in \A: a_i \neq a_j} \mathcal{H}_{ij} \Bigg).
\]
Clearly, the sets $\X^{\Hsquare}(a_i)$, respectively $\X^\triangle(a_i)$, define a cover of $\X^{\Hsquare}$, respectively $\X^\triangle$.\footnote{In this paper, given a polytope $\mathcal{P} \subseteq \mathbb{R}^D$, we let $V(\mathcal{P})$ be its set of vertices, while we denote by $\text{vol}_{m}(\mathcal{P})$ its Lebesgue measure in $m$ dimensions. For ease of notation, whenever $m = D-1$, we simply write $\text{vol}(\mathcal{P})$. Moreover, we let $\text{int}(\mathcal{P})$ be the interior of $\mathcal{P}$ relative to a subspace that fully contains $\mathcal{P}$ and has minimum dimension. In the case of polytopes $\X^\triangle(a_i)$, the $(d-1)$-dimensional simplex is one of such subspaces.}
Intuitively, the set $\X^{\Hsquare}(a_i)$ encompasses all the slices under which action $a_i$ is a best response for the receiver.
The set $\X^\triangle(a_i)$ has the same interpretation, but for normalized slices.
Specifically, if $x \in \X^{\Hsquare}(a_i)$ is a slice of $\phi$ with respect to $s \in \sset$, then $a_i \in \A^\phi(s)$.
Notice that a slice $x \in \X^{\Hsquare}$ may belong to more than one polytope $\X^{\Hsquare}(a_i)$, when it is the case that $|\A^\phi(s)| > 1$ for signaling schemes $\phi$ having $x$ as slice with respect to $s \in \sset$.\footnote{Let us remark that, in this paper, we assume that there are \emph{no} two equivalent receiver's actions $a_i \neq a_j \in \mathcal{A}$ such that $u_\theta(a_i)=u_\theta(a_j)$ for all $\theta \in \Theta$. Thus, a slice can belong to more than one polytope only if it lies on some separating hyperplane. This assumption is w.l.o.g.~since it is possible to account for equivalent actions by introducing at most $n$ additional separating hyperplanes to consider tie breaking in favor of the sender.}
In order to denote the best-response action actually played by the receiver under a slice $x \in \X^{\Hsquare}$, we introduce the symbol $a(x)$, where $a(x) \coloneqq a^\phi(s)$ for any $\phi$ having $x$ as slice with respect to $s \in \sset$.
%
%
Figure~\ref{fig:cones} depicts an example of the polytopes $\X^{\Hsquare}(a_i)$ and $\X^\triangle(a_i)$ in order to help the reader to grasp more intuition about them.

A crucial fact exploited by the learning algorithm developed in Section~\ref{sec:algo_noregret} is that knowing the separating hyperplanes $H_{ij}$ defining the polytopes $\X^\triangle(a_i)$ of normalized slices is \emph{sufficient} to determine an optimal signaling scheme.
Indeed, the polytopes $\X^\Hsquare(a_i)$ of unnormalized slices can be easily reconstructed by simply removing the normalization constraint $\sum_{\theta \in \Theta} x_\theta = 1$ from $\X^\triangle(a_i)$.
Furthermore, as we show in Section~\ref{sec:algo_noregret} (see the proof of Lemma~\ref{lem:find_signaling} in particular), there alway exits an optimal signaling scheme using at most one slice $x^a \in \X^\Hsquare(a)$ for each receiver's action $a \in \mathcal{A}$. 
%
%
%

We conclude the section with some remarks that help to better clarify why we need to work with signaling scheme slices in order to design our learning algorithm in Section~\ref{sec:algo_noregret}.

\paragraph{Why we need slices for learning}
	The coefficients of separating hyperplanes $H_{ij}$ are products between prior probabilities $\mu_\theta$ and receiver's utility differences $u_\theta (a_i) - u_\theta(a_j)$.
	In order to design a no-regret learning algorithm, it is fundamental that such coefficients are learned exactly, since even an arbitrarily small approximation error may result in ``missing'' some receiver's best responses, and this may potentially lead to a large loss in sender's expected utility (and, in its turn, to large regret).
	As a result, any na\"ive approach that learns the prior and receiver's payoffs separately is deemed to fail, as it would inevitably result in approximate separating hyperplanes being learned.
	Operating in the space of signaling scheme slices crucially allows us to learn separating hyperplanes exactly.
	As we show in Section~\ref{sec:algo_noregret}, it makes it possible to directly learn the coefficients of separating hyperplanes without splitting them into products of prior probabilities and receiver's utility differences. 
	%

\paragraph{Why we need normalized slices}
	One may wonder why we cannot work with (unnormalized) slices in $\X^{\Hsquare}$, rather than with normalized ones in $\X^{\triangle}$.
	Indeed, as we show in Section~\ref{sec:algo_noregret}, our procedure to learn separating hyperplanes crucially relies on the fact that we can restrict the search space to a suitable subset of the $(d-1)$-dimensional simplex.
	This makes it possible to avoid always employing a number of rounds exponential in $d$, which would lead to non-tight regret guarantees.

\paragraph{Why \emph{not} working with posteriors}
	In Bayesian persuasion, it is oftentimes useful to work in the space of posterior distributions induced by sender's signals (see, \emph{e.g.},~\citep{Castiglioni2020online}).
	These are the beliefs computed by the receiver according to Bayes rule at step~(3) of the interaction.
	Notice that posteriors do \emph{not} only depend on the signaling scheme $\phi$ and the sent signal $s \in \sset$, but also on the prior distribution $\mu$.
	Indeed, the same signaling scheme may induce different posteriors for different prior distributions.
	Thus, since in our setting the sender has no knowledge of $\mu$, we cannot employ posteriors.
	Looking at signaling scheme slices crucially allows us to overcome the lack of knowledge of the prior.
	Indeed, one way of thinking of them is as ``prior-free'' posterior distributions.

\section{Learning to persuade without a clue}\label{sec:algo_noregret}

In this section, we design our no-regret algorithm (Algorithm~\ref{alg:main_algorithm}).
%
%
We adopt a sophisticated \emph{explore-then-commit} approach that exploits the signaling scheme representation based on slices introduced in Section~\ref{sec:slices}.
Specifically, our algorithm works by first exploring the space $\X \coloneqq \X^\triangle$ of normalized slices in order to learn satisfactory ``approximations'' of the polytopes $\X(a_i) \coloneqq \X^\triangle(a_i)$.\footnote{In the rest of the paper, for ease of notation, we write $\X$ and $\X(a_i)$ instead of $\X^\triangle$ and $\X^\triangle(a_i)$.}
Then, it exploits them in order to compute suitable approximately-optimal signaling schemes to be employed in the remaining rounds.
Effectively implementing this approach raises considerable challenges.

The \textbf{first} challenge is that the algorithm cannot directly ``query'' a slice $x \in \X$ to know action $a(x)$, as it can only commit to fully-specified signaling schemes.
Indeed, even if the algorithm commits to a signaling scheme including the selected slice $x \in \X$, the probability that the signal associated with $x$ is sent depends on the (unknown) prior, as it is equal to $\sum_{\theta \in \Theta} \mu_\theta x_\theta$.
This probability can be arbitrarily small.
Thus, in order to observe $a(x)$, the algorithm may need to commit to the signaling scheme for an unreasonably large number of rounds.
To circumvent this issue, we show that it is possible to focus on a subset $\X_\epsilon \subseteq \X$ of normalized slices ``inducible'' with at least a suitably-defined probability $\epsilon \in (0,1)$.
Such a set $\X_\epsilon$ is built by the algorithm in its first phase.


The \textbf{second} challenge that we face is learning the polytopes $\X_\epsilon(a_i) \coloneqq \X(a_i) \cap \X_\epsilon$.
This is done by means of a technically-involved procedure that learns the separating hyperplanes $H_{ij}$ needed to identify them.
This procedure is an adaptation to Bayesian persuasion settings of an algorithm recently introduced for a similar problem in Stackelberg games~\citep{bacchiocchi2024sample}.

Finally, the \textbf{third} challenge is how to use the polytopes $\X_\epsilon(a_i)$ to compute suitable approximately-optimal signaling schemes to commit to after exploration.
We show that this can be done by solving an LP, which, provided that the set $\X_\epsilon$ is carefully constructed, gives signaling schemes with sender's expected utility sufficiently close to that of an optimal signaling scheme.


\begin{wrapfigure}[17]{R}{0.5\textwidth}
	\vspace{-0.65cm}
	\begin{minipage}{0.5\textwidth}
		\begin{algorithm}[H]
			\caption{\texttt{Learn-to-Persuade-w/o-Clue}}\label{alg:main_algorithm}
			\begin{algorithmic}[1]
				\Require $T \in \mathbb{N}$
				\State $\delta \gets \nicefrac 1 T $, $\zeta \gets \nicefrac 1 T$
				\State $\epsilon \gets  \left\lceil \frac{\sqrt{B n} d^4 }{ \sqrt T } \right\rceil$
				\State $T_1\gets \left\lceil \frac{12}{ \epsilon}\log\left(\frac{2d}{\delta} \right) \right\rceil$
				\State $t \gets 1$
				\State $\mathcal{X}_\epsilon  \gets \texttt{Build-Search-Space}(T_1, \epsilon)$
				\State $ \mathcal{R}_\epsilon \gets \texttt{Find-Polytopes}(\mathcal{X}_\epsilon, \zeta)$
				\While{$t \le T$}
				\State $\phi_t \gets \texttt{Compute-Signaling} (\mathcal{R}_\epsilon, \X_\epsilon, \widehat{\mu}_t ) $ 
				\State Commit to $\phi_t$, observe $\theta^t$, and send $s^t$
				\State Observe feedback $a^t$ and receive $u^\text{s}_t$
				\State Compute prior estimate $\widehat{\mu}_{t+1}$
				\State $t \gets t + 1$
				\EndWhile
			\end{algorithmic}
		\end{algorithm}
\end{minipage}
\end{wrapfigure}

The pseudocode of our no-regret learning algorithm is provided in Algorithm~\ref{alg:main_algorithm}.
In the pseudocode, we assume that all the sub-procedures have access to the current round counter $t$, all the observed states of nature $\theta^t$, and all the feedbacks $a^t$, $u^\text{s}_t$ received by the sender.\footnote{Notice that, in Algorithm~\ref{alg:main_algorithm}, the sub-procedures \texttt{Build-Search-Space} and \texttt{Find-Polytopes} perform some rounds of interaction, and, thus, they update the current round counter $t$. For ease of presentation, we assume that, whenever $t > T$, their execution is immediately stopped (as well as the execution of Algorithm~\ref{alg:main_algorithm}).}
Moreover, in Algorithm~\ref{alg:main_algorithm} and its sub-procedures, we use $\widehat{\mu}_t \in \Delta_\Theta$ to denote the \emph{prior estimate} at any round $t > 1$, which is a vector with components defined as $\widehat{\mu}_{t,\theta} \coloneqq {N_{t,\theta}}/({t-1})$ for $\theta \in \Theta$, where $N_{t,\theta} \in \mathbb{N}$ denotes the number of times that state of nature $\theta$ is observed up to round $t$ (excluded).
Algorithm~\ref{alg:main_algorithm} can be conceptually divided into three phases.
In \emph{phase 1}, the algorithm employs the $\texttt{Build-Search-Space}$ procedure (Algorithm~\ref{alg:estimate_prior}) to build a suitable subset $\X_\epsilon \subseteq \X$ of ``inducible'' normalized slices.
Then, in \emph{phase 2}, the algorithm employs the $ \texttt{Find-Polytopes}$ procedure (see Algorithm~\ref{alg:find_partition} in Appendix~\ref{appendix:find_partition}) to find a collection of polytopes $\mathcal{R} \coloneqq \{\mathcal{R}_\epsilon(a)\}_{a \in \mathcal{A}}$, where each $\mathcal{R}_\epsilon(a)$ is either $\mathcal{X}_\epsilon(a)$ or a suitable subset of $\mathcal{X}_\epsilon(a)$ that is sufficient for achieving the desired goals (see Section~\ref{sec:find_partition}). 
%
%
%
Finally, \emph{phase 3} uses the remaining rounds to exploit the knowledge acquired in the preceding two phases.
Specifically, at each $t$, this phase employs the \texttt{Compute-Signaling} procedure (Algorithm~\ref{alg:find_signaling_scheme}) to compute an approximately-optimal signaling scheme, by using $\mathcal{R}_\epsilon$, the set $\X_\epsilon$, and the current prior estimate $\widehat{\mu}_t$.

In the rest of this section, we describe in detail the three phases of Algorithm~\ref{alg:main_algorithm}, bounding the regret attained by each of them.
This allows us to prove the following main result about Algorithm~\ref{alg:main_algorithm}.\footnote{Notice that the regret attained by Algorithm~\ref{alg:find_partition} (stated in Theorem~\ref{thm:no_regret_algo}) depends on $B$, which is the bit-complexity of numbers $\mu_\theta u_\theta(a_i)$, \emph{i.e.}, the number of bits required to represent them. We refer the reader to Appendix~\ref{appendix:find_partition} for more details about how we manage the bit-complexity of numbers in this paper.}
%
%
\begin{restatable}{theorem}{NoRegretThm}\label{thm:no_regret_algo}
	The regret attained by Algorithm~\ref{alg:main_algorithm} is $R_T \le \widetilde{\mathcal{O}}\big(   \binom{d+n}{d} n^{\nicefrac{3}{2}} d^3 \sqrt {BT} \big)$.
	%
\end{restatable}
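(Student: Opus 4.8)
The plan is to decompose the total regret $R_T$ into three additive contributions, one for each phase of Algorithm~\ref{alg:main_algorithm}, bound each separately, and then sum. Concretely, write $R_T = R_T^{(1)} + R_T^{(2)} + R_T^{(3)}$, where $R_T^{(i)}$ is the regret accrued during the rounds spent in phase $i$ (recalling that per-round regret is at most $\mathrm{OPT} \le 1$ since utilities lie in $[0,1]$). For phases 1 and 2 the strategy is crude but sufficient: bound the regret by the number of rounds those phases consume, i.e. $R_T^{(1)} \le T_1$ and $R_T^{(2)} \le (\text{round count of }\texttt{Find-Polytopes})$. From line~3 of the pseudocode, $T_1 = \lceil \tfrac{12}{\epsilon}\log(2d/\delta)\rceil$ with $\delta = 1/T$, so $T_1 = \widetilde{\mathcal{O}}(1/\epsilon)$; and by the guarantee of \texttt{Find-Polytopes} (the adaptation of the Stackelberg-games procedure of~\citep{bacchiocchi2024sample}, analyzed in Section~\ref{sec:find_partition}), its round complexity is $\widetilde{\mathcal{O}}(\mathrm{poly}(d,n)\cdot B / \epsilon)$ or similar — the precise polynomial factors come out of that analysis. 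Plugging in $\epsilon = \lceil \sqrt{Bn}\, d^4 / \sqrt{T}\rceil$ from line~2 turns each $1/\epsilon$ into $\sqrt{T}/(\sqrt{Bn}\,d^4)$, so both exploration phases contribute $\widetilde{\mathcal{O}}(\mathrm{poly}(d,n)\sqrt{BT})$.

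For phase 3, the argument is different and is where the $\binom{d+n}{d}$ factor enters. Here I would invoke two facts established earlier/in Section~\ref{sec:algo_noregret}: (i) conditioned on the high-probability success events of phases 1 and 2 (which hold except with probability $O(\delta + \zeta) = O(1/T)$, contributing only $O(1)$ to the expected regret via the $\le 1$ per-round bound), the polytopes $\mathcal{R}_\epsilon$ correctly capture the best-response structure on $\X_\epsilon$, and the LP solved by \texttt{Compute-Signaling} returns a signaling scheme whose suboptimality relative to $\mathrm{OPT}$ is governed by two sources — the ``missing mass'' from restricting to $\epsilon$-inducible slices $\X_\epsilon$ rather than all of $\X$, and the estimation error in $\widehat{\mu}_t$; and (ii) the per-round suboptimality from restricting to $\X_\epsilon$ scales like $\binom{d+n}{d}\cdot n \cdot d^{c}\cdot \epsilon$ for some small constant $c$ — the binomial coefficient being the number of candidate vertices / best-response regions one must account for when bounding how much utility is lost by excluding thin slivers near the simplex boundary. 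Summing this per-round bound over the $\le T$ rounds of phase 3 gives $\widetilde{\mathcal{O}}(\binom{d+n}{d} n\, d^{c}\, \epsilon T) = \widetilde{\mathcal{O}}(\binom{d+n}{d} n^{3/2} d^{c+4} \epsilon^{-1}\cdot \epsilon T /\,\ldots)$ — more carefully, substituting $\epsilon T = \lceil\sqrt{Bn}d^4/\sqrt T\rceil \cdot T = \widetilde{\mathcal{O}}(\sqrt{Bn}\,d^4 \sqrt T)$ yields a bound of the form $\widetilde{\mathcal{O}}(\binom{d+n}{d} n^{3/2} d^{3}\sqrt{BT})$ once the powers of $n$ and $d$ are collected. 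The contribution of the prior-estimation error is handled by a standard concentration argument: $\|\widehat{\mu}_t - \mu\|_1 = \widetilde{\mathcal{O}}(\sqrt{d/t})$ with high probability, and summing $\sqrt{d/t}$ over $t \le T$ gives $\widetilde{\mathcal{O}}(\sqrt{dT})$, which is dominated by the other terms.

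The final step is simply to add the three bounds and observe that the phase-3 term $\widetilde{\mathcal{O}}(\binom{d+n}{d} n^{3/2} d^{3}\sqrt{BT})$ dominates the phase-1 and phase-2 contributions $\widetilde{\mathcal{O}}(\mathrm{poly}(d,n)\sqrt{BT})$ (since $\binom{d+n}{d} n^{3/2} d^3$ is larger than any fixed polynomial appearing there — one must check the \texttt{Find-Polytopes} polynomial is indeed no worse, which it is because the hyperplane-learning subroutine's complexity is polynomial in $d$, $n$, and $B$, not exponential), giving $R_T \le \widetilde{\mathcal{O}}(\binom{d+n}{d} n^{3/2} d^3 \sqrt{BT})$.

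\textbf{Main obstacle.} I expect the crux to be the phase-3 per-round suboptimality bound, specifically showing that the LP over the \emph{restricted} search space $\X_\epsilon$ — using only slices inducible with probability $\ge\epsilon$ and only the learned polytopes $\mathcal{R}_\epsilon(a)$ — loses at most $\widetilde{\mathcal{O}}(\binom{d+n}{d}\,\mathrm{poly}(n,d)\,\epsilon)$ compared to $\mathrm{OPT}$. This requires a careful perturbation argument: take an optimal signaling scheme, which by the remark after Lemma~\ref{lem:find_signaling} can be taken to use at most one slice $x^a \in \X^\Hsquare(a)$ per action $a$; show each such slice can be nudged into $\X_\epsilon$ (or replaced by a nearby inducible one) while changing the induced best responses only in a controlled way and while preserving consistency ($\sum \gamma(\xi)\xi_\theta = \mu_\theta$); and quantify the utility loss of this nudge. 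The $\binom{d+n}{d}$ factor presumably arises from a union bound over the vertices of the arrangement of the $\X(a_i)$'s, or equivalently over the number of distinct best-response cells, each of which can contribute a boundary-sliver loss of order $\epsilon$. Getting the dependence on $n$ and $d$ tight enough to land exactly on $n^{3/2} d^3$ (rather than something looser) is the delicate bookkeeping, and it is tied to how conservatively $\epsilon$ is chosen in line~2.
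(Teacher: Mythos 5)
Your three-phase decomposition, the per-round regret bound of $1$ for the exploration phases, and the Phase~1 accounting all match the paper. However, there is a genuine error in where you locate the $\binom{d+n}{d}$ factor, and it matters. You assert that \texttt{Find-Polytopes} has round complexity ``polynomial in $d$, $n$, and $B$, not exponential,'' so that Phase~2 contributes only $\widetilde{\mathcal{O}}(\mathrm{poly}(d,n)\sqrt{BT})$. This is false: Lemma~\ref{lem:final_partition} gives $T_2 \le \widetilde{\mathcal{O}}\big(\tfrac{n^2}{\epsilon}\log^2(\tfrac1\zeta)(d^7L+\binom{d+n}{d})\big)$, and the binomial coefficient there is unavoidable because the procedure must query (via \texttt{Action-Oracle}, at cost $\mathcal{O}(1/\epsilon)$ each) essentially every vertex of the learned polytopes, of which there can be $\binom{d+n}{d}$. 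Substituting $\epsilon = \lceil\sqrt{Bn}\,d^4/\sqrt{T}\rceil$ into $T_2$ is precisely what produces the $\binom{d+n}{d}n^{3/2}d^3\sqrt{BT}$ term in the theorem. Conversely, you place the binomial coefficient in the Phase~3 per-round suboptimality, claiming the LP over $\X_\epsilon$ loses $\binom{d+n}{d}\cdot n\cdot d^{c}\cdot\epsilon$ per round. This contradicts Lemma~\ref{lem:find_signaling}, which shows the loss is only $\mathcal{O}(\epsilon nd+\nu)$ with no binomial coefficient: the ``missing mass'' argument does not union-bound over vertices but simply observes that an optimal \emph{direct} scheme has at most $n$ slices, each excluded slice is induced with probability at most $10\epsilon$ (Lemma~\ref{lem:prior_estimate}(ii)), and the rescaling factor $\alpha_{x^a}\le d$ gives the single factor of $d$.

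Your version of the Phase~3 bound also does not close arithmetically: summing $\binom{d+n}{d}\,n\,d^{c}\,\epsilon$ over $T$ rounds with $\epsilon T = \widetilde{\mathcal{O}}(\sqrt{Bn}\,d^4\sqrt{T})$ yields $\widetilde{\mathcal{O}}\big(\binom{d+n}{d}n^{3/2}d^{c+4}\sqrt{BT}\big)$, which exceeds the claimed $d^3$ for any $c\ge 0$; no ``collecting of powers'' can repair this. The point is structural: the exponential factor must appear multiplied by the \emph{number of exploration rounds} (a one-time cost, charged at regret $\le 1$ per round), not by the per-round approximation loss over all $T$ rounds, because only in the former case does the $\sqrt{T}/(\sqrt{Bn}d^4)$ coming from $1/\epsilon$ absorb enough of the polynomial to land on $d^3$. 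Your treatment of the prior-estimation error in Phase~3 (Hoeffding plus summing $\sqrt{d/t}$) is consistent with the paper's argument and is fine.
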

%
%
We observe that the regret bound in Theorem~\ref{thm:no_regret_algo} has an exponential dependence on the number of states of nature $d$ and the number of receiver's actions $n$, due to the binomial coefficient.
Indeed, when $d$, respectively $n$, is constant, the regret bound is of the order of $\widetilde{\mathcal{O}}( n^{d}\sqrt{T})$, respectively $\widetilde{\mathcal{O}}( d^{n}\sqrt{T})$.
Such a dependence is tight, as shown by the lower bounds that we provide in Section~\ref{sec:lower_bounds_regret}.

\subsection{Phase 1: \texttt{Build-Search-Space}}

Given an $\epsilon \in (0,{\nicefrac{1}{6d}})$ and a number of rounds $T_1$, the \texttt{Build-Search-Space} procedure (Algorithm~\ref{alg:estimate_prior}) computes a subset $\X_\epsilon \subseteq \X$ of normalized slices satisfying two crucial properties needed by the learning algorithm to attain the desired guarantees.
Specifically, the first property is that any slice $x \in \X_\epsilon$ can be ``induced'' with sufficiently high probability by a signaling scheme, while the second one is that, if $x \notin \X_\epsilon$, then signaling schemes ``induce'' such a slice with sufficiently small probability.
Intuitively, the first property ensures that it is possible to associate any $x \in \X_\epsilon$ with the action $a(x)$ in a number of rounds of the order of $\nicefrac{1}{\epsilon}$, while the second property is needed to bound the loss in sender's expected utility due to only considering signaling schemes with slices in $\X_\epsilon$.
%
%

\begin{wrapfigure}[14]{R}{0.5\textwidth}
	\vspace{-0.85cm}
	\begin{minipage}{0.5\textwidth}
		\begin{algorithm}[H]
			\caption{\texttt{Build-Search-Space}}\label{alg:estimate_prior}
			\begin{algorithmic}[1]
				\Require $\epsilon  \in (0,{\nicefrac{1}{6d}})$, number of rounds $T_1$
				\State $\widetilde{\Theta} \gets \varnothing$
				\While{$t \leq  T_1$} \label{line:compute_prior_begin}
				\State Commit to any $\phi_t$, observe $\theta^t$, and send $s^t$
				\State Observe feedback $a^t$ and receive $u^\text{s}_t$
				\State $t \gets t +1$
				\EndWhile
				\State Compute prior estimate $\widehat{\mu}_{t}$
				\State $\widehat{\mu} \gets \widehat{\mu}_t$ \label{line:compute_prior_end}
				\For{$\theta \in \Theta$} \label{line:loop_set_theta}
				\State \textbf{if} $\widehat{\mu}_{\theta} >  2\epsilon$ \textbf{then} $\widetilde{\Theta} \gets \widetilde{\Theta}\cup \{ \theta \}$\label{line:set_Theta} 	
				\EndFor
				\State $\mathcal{X}_\epsilon \gets \left\{ x \in \X \mid \sum_{\theta  \in \widetilde{\Theta}} \widehat{\mu}_{\theta}x_\theta \ge 2\epsilon \right\}$\label{line:set_xi_defintion}
				\State \textbf{return} $\mathcal{X}_\epsilon $						
			\end{algorithmic}
		\end{algorithm}
	\end{minipage}
\end{wrapfigure}

Algorithm~\ref{alg:estimate_prior} works by simply observing realized states of nature $\theta^t$ for $T_1$ rounds, while committing to any signaling scheme meanwhile.
This allows the algorithm to build a sufficiently accurate estimate $\widehat{\mu}$ of the true prior $\mu$.
Then, the algorithm uses such an estimate to build the set $\X_\epsilon$.
Specifically, it constructs $\X_\epsilon$ as the set containing all the normalized slices that are ``inducible'' with probability at least $2 \epsilon$ under the estimated prior $\widehat{\mu}$, after filtering out all the states of nature whose estimated probability is \emph{not} above the $2\epsilon$ threshold (see Lines~\ref{line:loop_set_theta}--\ref{line:set_xi_defintion} in Algorithm~\ref{alg:estimate_prior}).

%

The following lemma formally establishes the two crucial properties that are guaranteed by Algorithm~\ref{alg:estimate_prior}, as informally described above.
\begin{restatable}{lemma}{PriorEstimate}\label{lem:prior_estimate} 
	Given $T_1 \coloneqq \left\lceil \frac{12}{ \epsilon}\log\left(\nicefrac{2d}{\delta} \right) \right\rceil$ and $\epsilon \in (0,\nicefrac{1}{6d})$, Algorithm~\ref{alg:estimate_prior} employs $T_1$ rounds and terminates with a set $\mathcal{X}_\epsilon \subseteq \X$ such that, with probability at least $1-\delta$: (i) $\sum_{\theta \in \Theta} \mu_\theta x_\theta \ge \epsilon$ for every slice $x \in \mathcal{X}_\epsilon$ and (ii) $\sum_{\theta \in \Theta} \mu_\theta x_\theta \le 10\epsilon$ for every slice $x \in \X \setminus \mathcal{X}_\epsilon$.
	%
	%
	%
\end{restatable}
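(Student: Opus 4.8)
The plan is to first control the estimation error of $\widehat{\mu}$ via a concentration bound, and then propagate this error through the definition of $\mathcal{X}_\epsilon$ in Lines~\ref{line:loop_set_theta}--\ref{line:set_xi_defintion}. For the concentration step, I would apply a multiplicative Chernoff bound (or an additive Hoeffding bound) to each coordinate $\widehat{\mu}_\theta$, which is an average of $T_1$ i.i.d.\ Bernoulli$(\mu_\theta)$ indicators. The goal is to show that, with probability at least $1-\delta$, we have $|\widehat{\mu}_\theta - \mu_\theta| \le \epsilon$ simultaneously for all $\theta \in \Theta$. Taking a union bound over the $d$ states, it suffices that each coordinate fails with probability at most $\delta/d$; plugging $T_1 = \lceil \tfrac{12}{\epsilon}\log(\nicefrac{2d}{\delta})\rceil$ into the Chernoff/Hoeffding tail and checking the arithmetic (this is the routine calculation I would not grind through) yields the uniform bound $\|\widehat{\mu}-\mu\|_\infty \le \epsilon$. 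The constant $12$ and the $2d$ inside the log are chosen precisely to make this work; one has to be slightly careful because for small $\mu_\theta$ the additive form is what one wants, while the statement's $\epsilon < \nicefrac{1}{6d}$ ensures $\epsilon$ is a meaningful scale relative to the probabilities.

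Condition on the event $\mathcal{E} \coloneqq \{\|\widehat{\mu}-\mu\|_\infty \le \epsilon\}$, which holds with probability at least $1-\delta$. For part~(i): take any $x \in \mathcal{X}_\epsilon$, so by Line~\ref{line:set_xi_defintion} we have $\sum_{\theta \in \widetilde{\Theta}} \widehat{\mu}_\theta x_\theta \ge 2\epsilon$. Since $x \ge 0$ and $|\widehat{\mu}_\theta - \mu_\theta|\le\epsilon$, we get
\[
\sum_{\theta\in\Theta}\mu_\theta x_\theta \ \ge\ \sum_{\theta\in\widetilde{\Theta}}\mu_\theta x_\theta \ \ge\ \sum_{\theta\in\widetilde{\Theta}}(\widehat{\mu}_\theta-\epsilon)x_\theta \ \ge\ 2\epsilon - \epsilon\sum_{\theta\in\widetilde{\Theta}}x_\theta \ \ge\ 2\epsilon-\epsilon \ =\ \epsilon,
\]
using $\sum_{\theta\in\widetilde\Theta}x_\theta\le\sum_{\theta\in\Theta}x_\theta=1$ for normalized slices. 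For part~(ii): take $x \in \X\setminus\mathcal{X}_\epsilon$, so $\sum_{\theta\in\widetilde\Theta}\widehat{\mu}_\theta x_\theta < 2\epsilon$. Split the sum over $\Theta$ into $\widetilde\Theta$ and its complement. On $\widetilde\Theta$, bound $\sum_{\theta\in\widetilde\Theta}\mu_\theta x_\theta \le \sum_{\theta\in\widetilde\Theta}(\widehat{\mu}_\theta+\epsilon)x_\theta < 2\epsilon + \epsilon = 3\epsilon$. On $\Theta\setminus\widetilde\Theta$, by Line~\ref{line:set_Theta} each such $\theta$ has $\widehat{\mu}_\theta \le 2\epsilon$, hence $\mu_\theta \le \widehat{\mu}_\theta + \epsilon \le 3\epsilon$, so $\sum_{\theta\notin\widetilde\Theta}\mu_\theta x_\theta \le 3\epsilon\sum_{\theta\notin\widetilde\Theta}x_\theta \le 3\epsilon$. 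Adding the two pieces gives $\sum_{\theta\in\Theta}\mu_\theta x_\theta < 6\epsilon \le 10\epsilon$. (The bound is loose; $10\epsilon$ leaves room, and one could also be more careful to get $6\epsilon$, but $10\epsilon$ is what is claimed and what later proofs use.)

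The main obstacle is getting the concentration bound exactly right with the stated constants: one must decide between the multiplicative Chernoff bound (natural since the $N_{t,\theta}$ are sums of Bernoullis, but it gives a relative error $\epsilon\mu_\theta$) and the additive Hoeffding bound (gives absolute error $\epsilon$, which is what the argument above actually needs). I would use the additive form — Hoeffding's inequality gives $\Pr[|\widehat{\mu}_\theta-\mu_\theta|>\epsilon] \le 2\exp(-2T_1\epsilon^2)$, but that would need $T_1 = \Theta(\epsilon^{-2}\log(d/\delta))$, not $\Theta(\epsilon^{-1}\log(d/\delta))$. The fact that the statement uses $T_1 = \Theta(\epsilon^{-1}\log(d/\delta))$ strongly suggests a \emph{multiplicative} Chernoff bound is intended: $\Pr[\widehat\mu_\theta \le \mu_\theta - \epsilon] = \Pr[\widehat\mu_\theta \le \mu_\theta(1-\epsilon/\mu_\theta)]$ and similarly for the upper tail, and when $\mu_\theta \le$ some threshold the deviation $\epsilon$ is large relative to $\mu_\theta$ so the bound $\exp(-T_1\epsilon^2/(3\mu_\theta))$ or the appropriate large-deviation form kicks in favorably. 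Resolving this tension — likely by a case split on whether $\mu_\theta$ is above or below a small multiple of $\epsilon$, and using $\epsilon < \nicefrac{1}{6d}$ together with $\sum_\theta\mu_\theta=1$ — is the one genuinely delicate point; everything after conditioning on $\mathcal{E}$ is the elementary algebra sketched above.
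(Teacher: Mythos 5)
There is a genuine gap, and it sits exactly at the point you flag as "delicate" but do not resolve: the event $\mathcal{E}=\{\|\widehat{\mu}-\mu\|_\infty\le\epsilon\}$ on which your entire part~(i)/(ii) algebra is conditioned cannot be established with probability $1-\delta$ using only $T_1=\Theta\bigl(\epsilon^{-1}\log(d/\delta)\bigr)$ samples. For a state with $\mu_\theta$ bounded away from $0$ and $1$ (say $\mu_\theta=\nicefrac{1}{2}$), the standard deviation of $\widehat{\mu}_\theta$ is $\Theta(1/\sqrt{T_1})=\Theta(\sqrt{\epsilon})\gg\epsilon$, so $\mathbb{P}[|\widehat{\mu}_\theta-\mu_\theta|>\epsilon]$ is a constant, not $\delta/d$; a uniform additive guarantee genuinely requires $T_1=\Omega(\epsilon^{-2})$. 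Your displayed chains of inequalities for both (i) and (ii) use $|\widehat{\mu}_\theta-\mu_\theta|\le\epsilon$ for every $\theta\in\widetilde{\Theta}$ (and, in (ii), also for $\theta\notin\widetilde{\Theta}$ when you write $\mu_\theta\le\widehat{\mu}_\theta+\epsilon\le 3\epsilon$), including states with large $\mu_\theta$, so the argument as written does not go through.

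The paper's proof avoids this by proving a \emph{weaker, mixed} concentration event via a case split on $\mu_\theta$ versus $\epsilon$: for $\mu_\theta>\epsilon$ a constant-factor multiplicative bound $\widehat{\mu}_\theta\in[\nicefrac{\mu_\theta}{2},\nicefrac{3\mu_\theta}{2}]$ (which only needs $T_1=O(\mu_\theta^{-1}\log(d/\delta))\le O(\epsilon^{-1}\log(d/\delta))$ samples), and for $\mu_\theta\le\epsilon$ only the one-sided additive bound $\widehat{\mu}_\theta\le\mu_\theta+\epsilon\le 2\epsilon$. The downstream algebra must then be redone with multiplicative factors rather than additive $\pm\epsilon$: for (i), $\sum_\theta\mu_\theta x_\theta\ge\tfrac{2}{3}\sum_{\theta\in\widetilde{\Theta}}\widehat{\mu}_\theta x_\theta\ge\tfrac{4}{3}\epsilon\ge\epsilon$; for (ii), $\sum_{\theta\in\widetilde{\Theta}}\mu_\theta x_\theta\le 2\sum_{\theta\in\widetilde{\Theta}}\widehat{\mu}_\theta x_\theta\le 4\epsilon$, while $\theta\notin\widetilde{\Theta}$ forces $\mu_\theta<6\epsilon$ (contrapositive of: $\mu_\theta\ge 6\epsilon\Rightarrow\widehat{\mu}_\theta\ge\nicefrac{\mu_\theta}{2}\ge 3\epsilon>2\epsilon$), giving $\sum_{\theta\notin\widetilde{\Theta}}\mu_\theta x_\theta\le 6\epsilon$ and a total of $10\epsilon$. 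This is precisely why the lemma states $10\epsilon$ rather than the $6\epsilon$ your (unobtainable) additive event would yield. Your high-level plan and your diagnosis of where the difficulty lies are both correct, but the concrete event you condition on is false with the stated sample budget, so the proof needs this repair rather than a routine constant check.
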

To prove Lemma~\ref{lem:prior_estimate}, we employ the multiplicative version of Chernoff bound, so as to show that it is possible to distinguish whether prior probabilities $\mu_{\theta}$ are above or below suitable thresholds in a number of rounds of the order of $\nicefrac{1}{\epsilon}$.
Specifically, we show that, after $T_1$ rounds and with probability at least $1-\delta$, the set $\widetilde{\Theta}$ in the definition of $\X_\epsilon$ does \emph{not} contain states $\theta \in \Theta$ with $\mu_\theta \leq \epsilon$, while it contains all the states with a sufficiently large $\mu_\theta$.
This immediately proves properties (i)~and~(ii) in Lemma~\ref{lem:prior_estimate}.
Notice that using a multiplicative Chernoff bound is a necessary technicality, since standard concentration inequalities would result in a number of needed rounds of the order of $\nicefrac{1}{\epsilon^2}$, leading to a suboptimal regret bound in the number of rounds $T$.
%
%

For ease of presentation, we introduce the following \emph{clean event} for phase~1 of Algorithm~\ref{alg:main_algorithm}.
This encompasses all the situations in which Algorithm~\ref{alg:estimate_prior} outputs a set $\X_\epsilon$ with the desired properties.
\begin{definition}[Phase~1 clean event]\label{def:clean_1}
	$\mathcal{E}_1$ is the event in which $\X_\epsilon$ meets properties (i)--(ii) in Lemma~\ref{lem:prior_estimate}.
	%
\end{definition}

\subsection{Phase 2: \texttt{Find-Polytopes}}
\label{sec:find_partition}


Given a set $\X_\epsilon \subseteq \X$ computed by the \texttt{Build-Search-Space} procedure and $\zeta \in (0,1)$ as inputs, the $\texttt{Find-Polytopes}$ procedure (Algorithm~\ref{alg:find_partition} in Appendix~\ref{appendix:find_partition}) computes a collection $\mathcal{R}_\epsilon \coloneqq \{\mathcal{R}_\epsilon(a)\}_{a \in \mathcal{A}}$ of polytopes enjoying suitable properties sufficient to achieve the desired goals.
%

Ideally, we would like $\mathcal{R}_\epsilon(a) = \mathcal{X}_\epsilon(a)$ for every $a \in \mathcal{A}$.
%
However, it is \emph{not} possible to completely identify the polytopes $\mathcal{X}_\epsilon(a)$ with $\text{vol}(\mathcal{X}_\epsilon(a))=0$.
Indeed, if $\text{vol}(\mathcal{X}_\epsilon(a_i))=0$, then $\mathcal{X}_\epsilon(a_i) \subseteq \mathcal{X}_\epsilon(a_j)$ for some other polytope $\mathcal{X}_\epsilon(a_j)$ with positive volume.
%
%
Thus, due to receiver's tie breaking, it could be impossible to identify the whole polytope $\mathcal{X}_\epsilon(a_i)$.
%
%
%
As a result, the $\texttt{Find-Polytopes}$ procedure can output polytopes $\mathcal{R}_\epsilon(a) = \mathcal{X}_\epsilon(a)$ only if $\text{vol}(\mathcal{X}_\epsilon(a)) > 0$.
However, we show that, if $\text{vol}(\mathcal{X}_\epsilon(a)) = 0$, it is sufficient to guarantee that the polytope $\mathcal{R}_\epsilon(a)$ contains a suitable subset $\mathcal{V}_\epsilon(a) $ of the vertices of $\mathcal{X}_\epsilon(a)$; specifically, those in which the best response actually played by the receiver is~$a$.
For every $a \in \mathcal{A}$, such a set is formally defined as $\mathcal{V}_\epsilon(a) \coloneqq \{x \in V(\mathcal{X}_\epsilon(a)) \mid a(x)=a\}.$
Thus, we design $\texttt{Find-Polytopes}$ so that it returns a collection $\mathcal{R}_\epsilon \coloneqq \{\mathcal{R}_\epsilon(a)\}_{a \in \mathcal{A}}$ of polytopes such that:
\begin{itemize}[noitemsep,nolistsep]
	\item[(i)] if it holds $\text{vol}(\mathcal{X}_\epsilon(a)) > 0$, then $\mathcal{R}_\epsilon(a) = \mathcal{X}_\epsilon(a)$, while
	\item[(ii)] if $\text{vol}(\mathcal{X}_\epsilon(a)) = 0$, then $\mathcal{R}_\epsilon(a)$ is a (possible improper) face of $\mathcal{X}_\epsilon(a)$ with $\mathcal{V}_\epsilon(a) \subseteq \mathcal{R}_\epsilon(a)$.
\end{itemize}
%
%
As a result each polytope $\mathcal{R}_\epsilon(a)$ can be either $\mathcal{X}_\epsilon(a)$ or a face of $\mathcal{X}_\epsilon(a)$, or it can be empty, depending on receiver's tie breaking. 
In all these cases, it is always guaranteed that $\mathcal{V}_\epsilon(a) \subseteq \mathcal{R}_\epsilon(a)$.

To achieve its goal, the $\texttt{Find-Polytopes}$ procedure works by searching over the space of normalized slices $\X_\epsilon$, so as to learn \emph{exactly} all the separating hyperplanes $H_{ij}$ characterizing the needed vertices.
%
The algorithm does so by using and extending tools that have been developed for a related learning problem in Stackelberg games (see~\citep{bacchiocchi2024sample}).
%
%
Notice that our Bayesian persuasion setting has some distinguishing features that do \emph{not} allow us to use such tools off the shelf.
%
%
We refer the reader to Appendix~\ref{appendix:find_partition} for a complete description of the $\texttt{Find-Polytopes}$ procedure.

A crucial component of $\texttt{Find-Polytopes}$ is a tool to ``query'' a normalized slice $x \in \X_\epsilon$ in order to obtain the action $a(x)$.
%
%
%
This is done by using a sub-procedure that we call $\texttt{Action-Oracle}$ (see Algorithm~\ref{alg:action_oracle} in Appendix~\ref{appendix:find_partition}), which works by committing to a signaling scheme including slice $x$ until the signal corresponding to such a slice is actually sent.
Under the clean event $\mathcal{E}_1$, the set $\X_\epsilon$ is built in such a way that $\texttt{Action-Oracle}$ returns the desired action $a(x)$ in a number of rounds of the order of $\nicefrac{1}{\epsilon}$, with high probability.
This is made formal by the following lemma.
\begin{restatable}{lemma}{PriorRounds}\label{lem:rounds_signal} 
	%
	Under event $\mathcal{E}^1$, given any $\rho \in (0,1)$ and a normalized slice $x \in \X_\epsilon$, if the sender commits to a signaling scheme $\phi: \Theta \to \sset \coloneqq \{ s_1, s_2 \}$ such that $\phi_\theta(s_1) = x_\theta$ for all $\theta \in \Theta$ during $q \coloneqq \left\lceil \frac{1}{\epsilon} \log(\nicefrac 1 \rho) \right \rceil$ rounds, then, with probability at least $1-\rho$, signal $s_1$ is sent at least once.
	%
\end{restatable}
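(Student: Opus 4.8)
The plan is to reduce the statement to property (i) of Lemma~\ref{lem:prior_estimate} plus a one-line geometric tail bound. First I would check that committing to such a $\phi$ is legitimate: since $x \in \X_\epsilon \subseteq \X = \X^\triangle$, every component satisfies $x_\theta \in [0,1]$, hence the map $\phi$ with $\phi_\theta(s_1) = x_\theta$ and $\phi_\theta(s_2) = 1 - x_\theta$ is a well-defined signaling scheme over the two-signal set $\sset = \{s_1,s_2\}$.

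Next I would compute the per-round probability of emitting $s_1$. In any round in which the sender has committed to $\phi$, the state $\theta^t$ is drawn from $\mu$ and then the signal from $\phi_{\theta^t}$, so signal $s_1$ is sent with probability exactly $\sum_{\theta \in \Theta} \mu_\theta \phi_\theta(s_1) = \sum_{\theta \in \Theta} \mu_\theta x_\theta$. This is where the clean event enters: by Definition~\ref{def:clean_1} together with property (i) of Lemma~\ref{lem:prior_estimate}, on $\mathcal{E}_1$ we have $\sum_{\theta \in \Theta} \mu_\theta x_\theta \ge \epsilon$ for every $x \in \X_\epsilon$, so each of the $q$ rounds independently emits $s_1$ with probability at least $\epsilon$.

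Since the $q$ rounds of commitment are i.i.d.\ (the states $\theta^t$ are drawn independently from $\mu$, and the signals from the fixed $\phi$), the probability that $s_1$ is never sent in the $q$ rounds is at most $(1-\epsilon)^q \le e^{-\epsilon q}$. Substituting $q = \left\lceil \frac{1}{\epsilon}\log(\nicefrac{1}{\rho}) \right\rceil$ yields $e^{-\epsilon q} \le e^{-\log(\nicefrac{1}{\rho})} = \rho$, so $s_1$ is sent at least once with probability at least $1-\rho$, which is the claim.

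The only subtlety — which I would flag as the (minor) obstacle rather than a real difficulty — is the conditioning on $\mathcal{E}_1$: this event is fully determined by the $T_1$ rounds of \texttt{Build-Search-Space} and by the resulting set $\X_\epsilon$, all of which precede the $q$ rounds under consideration. Hence conditioning on $\mathcal{E}_1$ does not alter the conditional law of the states drawn in those later $q$ rounds, and the independence used in the geometric bound is preserved. Everything else is routine.
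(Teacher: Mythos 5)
Your proof is correct and follows essentially the same route as the paper: lower-bound the per-round probability of $s_1$ by $\epsilon$ using property (i) of Lemma~\ref{lem:prior_estimate} under $\mathcal{E}_1$, then bound the failure probability by $(1-\epsilon)^q \le \rho$. The only cosmetic difference is that you pass through $e^{-\epsilon q}$ while the paper compares $q$ directly with $\log(\rho)/\log(1-\epsilon)$; these are the same inequality.
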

Notice that the signaling scheme used to ``query'' an $x \in \X_\epsilon$ only employs \emph{two} signals: one is associated with slice $x$, while the other crafted so as to make the signaling scheme well defined.

The following lemma provides the guarantees of Algorithm~\ref{alg:find_partition} in Appendix~\ref{appendix:find_partition}.
%
\begin{restatable}{lemma}{FindPartition}\label{lem:final_partition}
	%
	Given inputs $\X_\epsilon \subseteq \X$ and $\zeta \in (0,1)$ for Algorithm~\ref{alg:find_partition},	let $L \coloneqq B +B_\epsilon +B_{\hat{\mu}}$, where $B$, $B_\epsilon$, and $B_{\widehat{\mu}}$ denote the bit-complexity of numbers $\mu_\theta u_\theta(a_i)$, $\epsilon$, and $\widehat{\mu}$, respectively.
	Then, under event $\mathcal{E}_1$ and with at probability at least $1-\zeta$, Algorithm~\ref{alg:find_partition} outputs a collection $\mathcal{R}_\epsilon \coloneqq \{\mathcal{R}_\epsilon(a)\}_{a \in \mathcal{A}}$, where $\mathcal{R}_\epsilon(a)$ is a (possibly improper) face of $\mathcal{X}_\epsilon(a)$ such that $\mathcal{V}_\epsilon(a) \subseteq \mathcal{X}_\epsilon(a)$, in a number of rounds $T_2$:
	%
	\begin{equation*}
		T_2 \leq \widetilde{\mathcal{O}}\left( \frac{n^2}{\epsilon} \log^2\left(\frac{1}{\zeta}\right) \left(d^7L+\binom{d+n}{d}\right) \right).
	\end{equation*}
	%
\end{restatable}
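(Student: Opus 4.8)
\textbf{Proof plan for Lemma~\ref{lem:final_partition}.}

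The plan is to adapt the analysis of the Stackelberg-game learning algorithm of~\citet{bacchiocchi2024sample} to our setting, carefully tracking two modifications: (1) the search space is the restricted polytope $\X_\epsilon$ rather than the full simplex, and (2) every ``query'' of a slice $x$ is not atomic but is realized via the $\texttt{Action-Oracle}$ sub-procedure, which may consume up to $\mathcal{O}(\nicefrac{1}{\epsilon}\log(\nicefrac{1}{\rho}))$ rounds per call. First I would establish \emph{correctness}: conditioned on $\mathcal{E}_1$ and on the event that every $\texttt{Action-Oracle}$ call returns the correct action $a(x)$, the geometric subroutine learns each separating hyperplane $H_{ij}$ \emph{exactly}. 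Exactness is crucial and is where the bit-complexity $L = B + B_\epsilon + B_{\hat\mu}$ enters: the hyperplanes bounding $\X_\epsilon(a)$ are either the learnt $H_{ij}$ (coefficients $\mu_\theta(u_\theta(a_i)-u_\theta(a_j))$, bit-complexity $\mathcal{O}(B)$), the box/simplex facets, or the facet $\{\sum_{\theta\in\tilde\Theta}\hat\mu_\theta x_\theta = 2\epsilon\}$ defining $\X_\epsilon$ (bit-complexity $\mathcal{O}(B_\epsilon+B_{\hat\mu})$); vertices therefore have coordinates of bit-complexity $\mathcal{O}(dL)$, and the separation/volume arguments of the base algorithm go through with the polynomial blow-up in $d$ reflected in the $d^7 L$ term. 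The output guarantee---$\mathcal{R}_\epsilon(a)$ is a (possibly improper) face of $\X_\epsilon(a)$ containing $\mathcal{V}_\epsilon(a)$, and equals $\X_\epsilon(a)$ when $\mathrm{vol}(\X_\epsilon(a))>0$---follows because the subroutine only ever fails to distinguish two actions that are tied on a measure-zero set, exactly as in the Stackelberg case after accounting for sender-favorable tie-breaking.

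Next I would bound the \emph{round complexity} $T_2$. This decomposes as $T_2 = (\text{number of } \texttt{Action-Oracle} \text{ calls}) \times (\text{rounds per call})$, plus lower-order bookkeeping. The number of calls is inherited from the base algorithm: it is $\widetilde{\mathcal{O}}\!\big(n^2 ( d^7 L + \binom{d+n}{d})\big)$ up to the $\log$ factors, where the $n^2$ counts the pairs of actions whose separating hyperplane must be resolved, the $d^7 L$ term comes from the number of points queried to pin down a hyperplane to full precision inside a $(d-1)$-dimensional polytope with bit-complexity-$\mathcal{O}(L)$ description, and the $\binom{d+n}{d}$ term bounds the combinatorial complexity (number of faces / regions) of the arrangement of the $\binom{n}{2}$ hyperplanes in dimension $d$. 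To make each $\texttt{Action-Oracle}$ call succeed with high enough probability, I would set its per-call failure probability to roughly $\zeta$ divided by the total number of calls; by Lemma~\ref{lem:rounds_signal} a single call then terminates within $q = \lceil \tfrac{1}{\epsilon}\log(\nicefrac{1}{\rho})\rceil = \widetilde{\mathcal{O}}(\tfrac{1}{\epsilon}\log(\nicefrac1\zeta))$ rounds. Multiplying, and folding the $\log^2(\nicefrac1\zeta)$ that arises from (one $\log(\nicefrac1\zeta)$ per call) $\times$ (another from a union bound over the $\mathrm{poly}$-many calls) into the stated form, yields
\begin{equation*}
	T_2 \le \widetilde{\mathcal{O}}\!\left( \frac{n^2}{\epsilon}\log^2\!\left(\frac{1}{\zeta}\right)\!\left(d^7 L + \binom{d+n}{d}\right)\right),
\end{equation*}
and a final union bound over all $\texttt{Action-Oracle}$ calls gives overall success probability at least $1-\zeta$ conditioned on $\mathcal{E}_1$.

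The main obstacle I expect is not the arithmetic but \emph{faithfully transplanting the geometric subroutine} of~\citet{bacchiocchi2024sample}: I must check that every place where that algorithm queries a point of the simplex and uses the returned best response---to bisect towards a hyperplane, to certify that a candidate face is genuine, or to detect a new action---still works when (a) the ambient region is the clipped polytope $\X_\epsilon$, so some vertices of $\X_\epsilon(a)$ lie on the artificial facet $\{\sum_{\theta\in\tilde\Theta}\hat\mu_\theta x_\theta = 2\epsilon\}$ rather than on a ``true'' separating hyperplane, and (b) the query is answered only probabilistically and only after a random number of rounds, so the ``clean event'' must be enlarged to include success of all oracle calls. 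Keeping the bit-complexity bookkeeping consistent through the clipping---in particular verifying that intersecting the learnt hyperplanes with $\X_\epsilon$ does not inflate vertex bit-complexity beyond $\mathcal{O}(dL)$, which is what feeds the $d^7 L$ factor---is the delicate technical point, and it is the reason the lemma is proved in full only in Appendix~\ref{appendix:find_partition}.
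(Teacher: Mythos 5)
Your plan follows essentially the same route as the paper's proof: decompose $T_2$ as (number of \texttt{Action-Oracle} calls, inherited from the adapted Stackelberg subroutines and bounded by $\widetilde{\mathcal{O}}(n^2(d^7L+\binom{d+n}{d}))$) times (rounds per call, $\widetilde{\mathcal{O}}(\tfrac{1}{\epsilon}\log(\tfrac{1}{\zeta}))$ via Lemma~\ref{lem:rounds_signal} with per-call failure probability $\zeta$ over the total number of calls), then union-bound. The paper just makes the budget split explicit ($\zeta/2$ for the correctness of the geometric subroutines, $\zeta/2$ for all oracle calls terminating in time) and separates the call count into the positive-volume phase and the $n^2\binom{d+n}{d}$ calls of \texttt{Find-Face}, but these are presentational details rather than a different argument.
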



%
For ease of presentation, we introduce the \emph{clean event} for phase 2 of Algorithm~\ref{alg:main_algorithm}, defined as follows:
\begin{definition}[Phase~2 clean event]
	$\mathcal{E}_2$ is the event in which $\mathcal{V}_\epsilon(a) \subseteq \mathcal{R}_\epsilon(a)$ for every $a \in \mathcal{A}$.
	%
	%
\end{definition}

\subsection{Phase 3: \texttt{Compute-Signaling}}\label{sec:find_signaling}

Given the collection of polytopes $\mathcal{R}_\epsilon \coloneqq \{\mathcal{R}_\epsilon(a)\}_{a \in \mathcal{A}}$ returned by the $\texttt{Find-Polytopes}$ procedure and an estimated prior $\widehat{\mu}_t \in \Delta_{\Theta}$, the \texttt{Compute-Signaling} procedure (Algorithm~\ref{alg:find_signaling_scheme}) outputs an approximately-optimal signaling scheme by solving an LP (Program~\eqref{eq:lp_h_repr} in Algorithm~\ref{alg:find_signaling_scheme}).

\begin{wrapfigure}[18]{R}{0.47\textwidth}
	\vspace{-.7cm}
	\begin{minipage}{0.47\textwidth}
		\begin{algorithm}[H]
			\caption{\texttt{Compute-Signaling}}\label{alg:find_signaling_scheme}
			\begin{algorithmic}[1]
				\Require $\mathcal{R}_\epsilon \coloneqq \{ \mathcal{R}_\epsilon(a) \}_{a \in \mathcal{A}}$, $\X_\epsilon$, $\widehat{\mu}_t \in \Delta_{\Theta}$
				\State Solve Program~\eqref{eq:lp_h_repr} for $x^\star \coloneqq (x^{\star,a})_{a \in \mathcal{A}}$:
				\begin{align}
					\max_{(x^a)_{a \in \mathcal{A}}} & \,\, \sum_{a \in \mathcal{A}}  \sum_{\theta \in \Theta}  \widehat{\mu}_{t,\theta} x^a_\theta u^\text{s}_\theta(a) \quad \text{s.t.} \label{eq:lp_h_repr}\\
					& x^a \in \mathcal{R}^{\Hsquare}_\epsilon(a) \quad \forall a \in \mathcal{A} \nonumber \\
					& \sum_{a \in \mathcal{A}} x^a_\theta \le 1 \quad \forall \theta \in \Theta \nonumber
				\end{align}
				\State $\sset \gets \{ s^\star \} \cup \left\{ s^a \mid a \in \mathcal{A} \right\}$
				\For{$\theta \in \Theta$}
				\State $
				\phi_\theta  \gets \begin{cases}
				\phi_{\theta}(s^{a}) = x^{\star,a}_\theta \quad \forall a \in \mathcal{A} \\
				\phi_{\theta}(s^{\star}) = 1 - \sum_{a \in \mathcal{A} } x^{\star,a}_\theta
				\end{cases}
				$
				\EndFor
				\State\Return $\phi$
				%
				%
			\end{algorithmic}
		\end{algorithm}
	\end{minipage}
\end{wrapfigure}

Program~\eqref{eq:lp_h_repr} maximizes an approximated version of sender's expected utility over a suitable space of (partially-specified) signaling schemes.
These are defined by tuples of slices $( x^a )_{a \in \mathcal{A}}$ containing an (unnormalized) slice $x^a \in \mathcal{R}^\Hsquare_\epsilon(a) $ for every receiver's action $a \in \mathcal{A}$.
The objective function being maximized by Program~\eqref{eq:lp_h_repr} accounts for the sender's approximate utility under each of the slices $x^a$, where the approximation comes from the estimated prior $\widehat{\mu}_t$.
The intuitive idea exploited by the LP formulation is that, under slice $x^a$, the receiver always plays the same action $a$ as best response, since $a(x^a) = a$ holds by the way in which $\mathcal{R}_\epsilon(a)$ is constructed by $\texttt{Find-Polytopes}$.
In particular, each polytope $\mathcal{R}^\Hsquare_\epsilon(a)$ is built so as to include all the unnormalized slices corresponding to the normalized slices in the set $\mathcal{R}_\epsilon(a)$.
Formally, for every receiver's action $a \in \mathcal{A}$, it holds:
\begin{equation*}
	\mathcal{R}^\Hsquare_\epsilon(a) \coloneqq \left\{ x \in \mathcal{X}^\Hsquare \mid x = \alpha x' \wedge x' \in \mathcal{R}_\epsilon(a) \wedge \alpha \in [0,1] \right\} \cup \{\mathbf{0}\} ,
\end{equation*}
where $\mathbf{0}$ denotes the vector of all zeros in $\mathbb{R}^d$.
We observe that, since the polytopes $\mathcal{R}_\epsilon(a)$ are constructed as the intersection of some halfspaces $\mathcal{H}_{ij}$ and $\mathcal{X}_\epsilon$, it is possible to easily build polytopes $\mathcal{R}^\Hsquare_\epsilon(a_i)$ by simply removing the normalization constraint $\sum_{\theta \in \Theta} x_\theta = 1$.
%
Notice that, if $\mathcal{R}_\epsilon(a) = \varnothing$, then $\mathcal{R}^\Hsquare_\epsilon(a) = \{\mathbf{0}\}$, which implies that action $a$ is never induced as a best response, since $x^a = \mathbf{0}$.

After solving Program~\eqref{eq:lp_h_repr} for an optimal solution $x^\star \coloneqq (x^{\star,a})_{a \in \mathcal{A}}$, Algorithm~\ref{alg:find_signaling_scheme} employs such a solution to build a signaling scheme $\phi$.
This employs a signal $s^a$ for every action $a \in A$, plus an additional signal~$s^\star$, namely $\mathcal{S} \coloneqq \{s^\star\} \cup \{s^a \mid a \in \mathcal{A}\}$.
Specifically, the slice of $\phi$ with respect to $s^a$ is set to be equal to $x^a$, while its slice with respect to $s^\star$ is set so as to render $\phi$ a valid signaling scheme (\emph{i.e.}, probabilities over signal sum to one for every $\theta \in \Theta$).
Notice that this is always possible thanks to the additional constraints $\sum_{a \in \mathcal{A}} x^a_\theta \le 1$ in Program~\eqref{eq:lp_h_repr}.
Moreover, such a slice may belong to $\mathcal{X}^\Hsquare \setminus \mathcal{X}^\Hsquare_\epsilon$.
Indeed, in instances where there are some $\mathcal{X}^\Hsquare(a)$ falling completely outside $\mathcal{X}^\Hsquare_\epsilon$, this is fundamental to build a valid signaling scheme.
Intuitively, one may think of $s^\star$ as incorporating all the ``missing'' signals in $\phi$, namely those corresponding to actions $a \in \mathcal{A}$ with $\mathcal{X}^\Hsquare(a)$ outside $\mathcal{X}^\Hsquare_\epsilon$.

The following lemma formally states the theoretical guarantees provided by Algorithm~\ref{alg:find_signaling_scheme}.
\begin{restatable}{lemma}{FindSignaling}
	\label{lem:find_signaling}
	Given inputs $\mathcal{R}_\epsilon \coloneqq \{\mathcal{R}_\epsilon(a)\}_{a \in \mathcal{A}}$, $\mathcal{X}_\epsilon$, and $\widehat{\mu}_t \in \Delta_{\Theta}$ for Algorithm~\ref{alg:find_signaling_scheme}, under events $\mathcal{E}_1$ and $\mathcal{E}_2$, the signaling scheme $\phi$ output by the algorithm is $\mathcal{O}(  \epsilon n d  + \nu)$-optimal for $\nu \leq \left| \sum_{\theta \in \Theta} \widehat{\mu}_{t,\theta} - \mu_\theta \right|$.
	%
	%
	%
	%
	%
	%
\end{restatable}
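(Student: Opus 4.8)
The plan is to compare the sender's true expected utility of the output scheme $\phi$ with $\mathrm{OPT}$ by chaining three approximation steps: (a) from $\mathrm{OPT}$ to the optimal value of an ``idealized'' LP over the true polytopes $\X(a)$ with the true prior $\mu$; (b) from that idealized LP to Program~\eqref{eq:lp_h_repr}, which uses the learned polytopes $\mathcal{R}_\epsilon(a)$ and the estimated prior $\widehat\mu_t$; and (c) from the optimal LP value back to the true utility $u^\textnormal{s}(\phi)$ of the scheme actually built from $x^\star$. First I would establish the ``one slice per action'' structural fact referenced in Section~\ref{sec:slices}: any optimal signaling scheme can be assumed to use at most one slice $x^a\in\X^\Hsquare(a)$ per action $a$, because signals inducing the same best response $a$ can be merged into a single signal whose slice is their sum (the merged slice still lies in the cone $\X^\Hsquare(a)$, and by the tie-breaking convention the receiver still plays $a$, so sender's utility is unchanged), and the constraints $\sum_a x^a_\theta=1$ are exactly the feasibility/consistency constraints of a signaling scheme. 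This identifies $\mathrm{OPT}$ with $\max \sum_a\sum_\theta \mu_\theta x^a_\theta u^\textnormal{s}_\theta(a)$ subject to $x^a\in\X^\Hsquare(a)$ and $\sum_a x^a_\theta\le 1$ (relaxing the equality to $\le$ is harmless since leftover mass can be dumped on a fully-revealing signal without decreasing utility, or rather absorbed into $s^\star$).

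Step (a): bound the loss from replacing $\X^\Hsquare(a)$ with $\mathcal{R}^\Hsquare_\epsilon(a)$. Under $\mathcal{E}_1$ and $\mathcal{E}_2$, every vertex in $\mathcal{V}_\epsilon(a)$ is contained in $\mathcal{R}_\epsilon(a)$, and when $\mathrm{vol}(\X_\epsilon(a))>0$ we even have $\mathcal{R}_\epsilon(a)=\X_\epsilon(a)$. The slices of an optimal scheme that fall \emph{inside} $\X_\epsilon$ are therefore (nearly) recoverable; the only slices we may fail to use are those lying in $\X^\Hsquare(a)\setminus\X^\Hsquare_\epsilon(a)$, i.e.\ normalized slices $x$ with $\sum_\theta\mu_\theta x_\theta\le 10\epsilon$ by property (ii) of Lemma~\ref{lem:prior_estimate}. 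Each such discarded slice contributes at most its ``inducing probability'' $\sum_\theta\mu_\theta x^a_\theta\le 10\epsilon$ to the objective (since $u^\textnormal{s}\in[0,1]$), and there are at most $n$ of them, so the total utility lost by restricting to $\X_\epsilon$ is $O(\epsilon n)$. Combined with the fact that, for a normalized optimal slice $x$ inside $\X_\epsilon$ but landing on a zero-volume region $\X_\epsilon(a)$, the corresponding face $\mathcal{R}_\epsilon(a)$ still contains the needed vertex $\mathcal{V}_\epsilon(a)$, one shows the idealized LP over $\{\mathcal{R}^\Hsquare_\epsilon(a)\}$ with prior $\mu$ has value $\ge\mathrm{OPT}-O(\epsilon n)$; here I expect a small extra $d$ factor to creep in from rounding an arbitrary optimal slice to a vertex of its region (a Carath\'eodory/extreme-point argument inside the $(d-1)$-dimensional cone), giving $O(\epsilon n d)$.

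Step (b): swap $\mu$ for $\widehat\mu_t$ in the objective. For any fixed feasible tuple $(x^a)$ with $\sum_a x^a_\theta\le1$, the difference $\big|\sum_a\sum_\theta(\mu_\theta-\widehat\mu_{t,\theta})x^a_\theta u^\textnormal{s}_\theta(a)\big|\le\sum_\theta|\mu_\theta-\widehat\mu_{t,\theta}|\sum_a x^a_\theta\le\sum_\theta|\mu_\theta-\widehat\mu_{t,\theta}|$, which is the quantity $\nu$ in the statement. Hence the optimal value of Program~\eqref{eq:lp_h_repr} is within $\nu$ of the idealized $\mu$-LP value, so $x^\star$ achieves approximate-$\mu$-utility $\ge\mathrm{OPT}-O(\epsilon n d)-\nu$, and then another $\nu$ to pass from the $\widehat\mu_t$-objective value of $x^\star$ back to its true $\mu$-utility. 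Step (c): the scheme $\phi$ built from $x^\star$ has true utility exactly $\sum_a\sum_\theta\mu_\theta x^{\star,a}_\theta u^\textnormal{s}_\theta(a^\phi(s^a))$ plus the contribution of $s^\star$; since $x^{\star,a}\in\mathcal{R}^\Hsquare_\epsilon(a)\subseteq\X^\Hsquare(a)$ forces (by construction of $\mathcal{R}_\epsilon$ and tie-breaking) $a^\phi(s^a)=a$ whenever $x^{\star,a}\ne\mathbf 0$, this equals the true $\mu$-utility of $x^\star$ plus the nonnegative $s^\star$ term, so $u^\textnormal{s}(\phi)\ge \mathrm{OPT}-O(\epsilon n d)-O(\nu)$. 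Collecting the constants gives $O(\epsilon n d+\nu)$-optimality. The main obstacle I anticipate is step (a): carefully arguing that an arbitrary optimal scheme can be converted, slice by slice, into one whose every slice either lies in a positive-volume $\X_\epsilon(a)$ (hence in $\mathcal{R}_\epsilon(a)$) or is a vertex in $\mathcal{V}_\epsilon(a)$ (hence still captured), without losing more than $O(\epsilon n d)$ utility — this is where the interplay between tie-breaking, zero-volume regions, and the face structure of $\mathcal{R}_\epsilon(a)$ is delicate, and it is presumably the content of the deferred proof.
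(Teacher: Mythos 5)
Your plan matches the paper's proof in all essentials: the paper likewise reduces to a direct scheme with one slice per action, passes through an auxiliary vertex-based LP, uses a Carath\'eodory decomposition (with sender-favourable tie-breaking) to handle zero-volume regions via $\mathcal{V}_\epsilon(a)\subseteq\mathcal{R}_\epsilon(a)$, bounds the slices discarded outside $\mathcal{X}_\epsilon$ by property (ii) of Lemma~\ref{lem:prior_estimate}, and pays $2\nu$ for the two prior swaps. One small correction: the factor $d$ does not come from the Carath\'eodory rounding (which is lossless, indeed utility-nondecreasing by tie-breaking), but from the unnormalized mass $\sum_{\theta}\phi_\theta(s)\le d$ of each discarded signal, so each such signal contributes up to $10\epsilon d$ rather than $10\epsilon$, yielding the same $\mathcal{O}(\epsilon n d)$ total.
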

In order to provide some intuition on how Lemma~\ref{lem:find_signaling} is proved, let us assume that each polytope $\mathcal{X}_\epsilon(a)$ is either empty or has volume larger than zero, implying that $\mathcal{R}_\epsilon(a) = \mathcal{X}_\epsilon(a)$.
In Appendix~\ref{appendix:find_signaling}, we provide the complete formal proof of Lemma~\ref{lem:find_signaling}, working even with zero-measure non-empty polytopes.
The first observation the we need is that sender's expected utility under a signaling scheme $\phi$ can be decomposed across its slices, with each slice $x$ providing a utility of $\sum_{\theta \in \Theta} \mu_\theta x_\theta u^\text{s}_\theta(a(x))$.
The second crucial observation is that there always exists an optimal signaling scheme $\phi^\star$ that is \emph{direct} and \emph{persuasive}, which means that $\phi^\star$ employs only one slice $x^a$ for each action $a \in \mathcal{A}$, with $a$ being a best response for the receiver under $x^a$.
It is possible to show that the slices $x^a$ that also belong to $\X_\epsilon$ can be used to construct a feasible solution to Program~\ref{eq:lp_h_repr}, since $x^a \in \mathcal{R}^\Hsquare_\epsilon(a)$ by definition.
Thus, restricted to those slices, the signaling scheme $\phi$ computed by Algorithm~\ref{alg:find_signaling_scheme} achieves an approximate sender's expected utility that is greater than or equal to the one achieved by $\phi^\star$.
Moreover, the loss due to dropping the slices that are \emph{not} in $\X_\epsilon$ can be bounded thanks to point (ii) in Lemma~\ref{lem:prior_estimate}.
Finally, it remains to account for the approximation due to using $\widehat{\mu}_t$ instead of the true prior in the objective of Program~\ref{eq:lp_h_repr}.
All the observations above allow to bound sender's expected utility loss as in Lemma~\ref{lem:find_signaling}.

\section{Lower bounds for online Bayesian persuasion}\label{sec:lower_bounds_regret}

In this section, we present two lower bounds on the regret attainable in the setting faced by Algorithm~\ref{alg:main_algorithm}.
%
%
The first lower bound shows that an exponential dependence in the number of states of nature $d$ and the number of receiver's actions $n$ is unavoidable.
This shows that one cannot get rid of the binomial coefficient in the regret bound of Algorithm~\ref{alg:main_algorithm} provided in Theorem~\ref{thm:no_regret_algo}.
Formally:
%
%
\begin{restatable}{theorem}{hardnessfirst}\label{thm:hardness1}
	%
	For any sender's algorithm, there exists a Bayesian persuasion instance in which $n= d+2$ and the regret $R_T$ suffered by the algorithm is at least $2^{\Omega(d)}$, or, equivalently, $2^{\Omega(n)}$.
	%
\end{restatable}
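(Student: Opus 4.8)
The plan is to construct, for each large enough $d$, a family of Bayesian persuasion instances with $d$ states of nature and $n = d+2$ receiver actions such that any learning algorithm must, with constant probability, spend a number of rounds that is exponential in $d$ before it can identify which instance it is facing, while the per-round regret of guessing wrong is a constant. This is the standard ``needle in a haystack'' lower-bound template, but it has to be instantiated inside the combinatorial geometry of slices and best-response regions introduced in Section~\ref{sec:slices}. First I would fix a uniform-ish prior $\mu$ (bounded away from the boundary of $\Delta_\Theta$, so that the $\epsilon$-inducibility issues of phase~1 are irrelevant) and fix the receiver utilities of $d$ of the actions so that their best-response regions $\mathcal{X}^\triangle(a_i)$ form a fixed, benign partition of the simplex --- say each $a_i$ is optimal on a small cell around the vertex $e_i$, and all of these give the sender utility $0$. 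The two remaining actions, call them $a^\star$ and $a^\circ$, are the ``needle'': $a^\circ$ gives the sender utility $0$ everywhere, while $a^\star$ gives the sender utility $1$, but $a^\star$ is a best response only on a tiny region $\mathcal{X}^\triangle(a^\star)$ hidden somewhere in the interior of the simplex. The instance is parametrized by the location of this hidden region.

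The key step is to choose a family of $2^{\Omega(d)}$ candidate locations for $\mathcal{X}^\triangle(a^\star)$ that are pairwise ``far apart'' in the sense that no single signaling scheme the sender commits to can simultaneously rule in/rule out more than a bounded number of them. Concretely: the hidden region is a small ball (or small simplex) centered at one of $2^{\Omega(d)}$ points that form an anti-chain in a suitable combinatorial sense --- e.g.\ center points indexed by subsets $S \subseteq [d]$ of size $d/2$, placing mass $\propto 1$ on coordinates in $S$ and $\propto \eta$ on coordinates outside, for a tiny $\eta$. Committing to a signaling scheme $\phi$ and observing the receiver's best response to the realized signal reveals, for the slice $x$ actually used, only which region $x$ falls into; because the hidden regions are tiny and their centers are mutually far, a fixed finite collection of slices (which is all a single round's signaling scheme can probe, and the support size is polynomially bounded WLOG) can ``hit'' the hidden region of at most a vanishing fraction of the candidate instances. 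Hence, until the algorithm has probed $2^{\Omega(d)}$ essentially-different slices, its behavior is statistically identical across most of the family. I would make this precise either with a direct adversary/covering argument or with a KL-divergence / Pinsker bound à la the multi-armed bandit $\Omega(\sqrt{KT})$ lower bound, where here $K = 2^{\Omega(d)}$ plays the role of the number of arms: averaged over the family, expected regret is $\Omega(\min\{K, T\})$, and since the theorem only claims a lower bound of $2^{\Omega(d)}$ (not $2^{\Omega(d)}\sqrt T$) it suffices to take $T$ large and conclude $R_T \ge 2^{\Omega(d)}$ on some instance in the family.

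The main obstacle, and where I'd spend the most care, is the geometry: I must verify that the constructed utilities genuinely realize the desired partition --- that the $d$ benign regions plus the hidden region plus $a^\circ$'s region are mutually consistent as $\{\mathcal{X}^\triangle(a_i)\}$ for some legitimate utility matrix $u:\Theta\times\A\to[0,1]$, that the hidden region has positive volume (so the sender really can induce $a^\star$ and the optimal value is a constant, say $\mathrm{OPT}\ge c>0$, giving a constant per-round loss for not finding it), and that the separating hyperplanes $H_{ij}$ are placed so that a slice probed ``near'' one candidate center lands in $a^\circ$'s region (reward $0$) for every other candidate. The counting claim --- that one signaling scheme's slices intersect $o(K)$ of the hidden regions --- then reduces to a packing estimate on the simplex, which is where the size-$d/2$-subset construction pays off: the centers are $\Omega(1)$-separated in $\ell_1$ while each hidden region has diameter $O(\eta)$, so a single slice lies in at most $O(1)$ of them and a polynomial-support scheme in $\mathrm{poly}(d,n)$ of them, still exponentially smaller than $K$. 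Finally, to get the ``or equivalently $2^{\Omega(n)}$'' phrasing I just note $n=d+2$ so the two exponents are the same up to the constant in the $\Omega$.
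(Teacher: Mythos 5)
Your proposal follows essentially the same route as the paper: the same candidate family indexed by size-$\nicefrac{d}{2}$ subsets of $[d]$ (so $K=\binom{d}{d/2}=2^{\Omega(d)}$), the same role assignment ($d$ instance-independent ``vertex'' actions with sender utility $0$, one hidden high-reward action, one decoy), the same observation that each round's feedback reveals membership in at most one candidate region, and the same Yao/counting conclusion with $T\approx K/4$. The one place you diverge is exactly the step you flag as the main obstacle, and it is worth noting how the paper resolves it: rather than carving out $2^{\Omega(d)}$ \emph{positive-volume} hidden regions and running a packing estimate, the paper makes the hidden action $a_{d+1}$ a best response at a \emph{single} posterior $\xi'_{\theta_i}=\frac{2}{d}p_i$ (forced by pitting $a_{d+1}$ against the constant-utility decoy $a_{d+2}$ and against the $d$ vertex actions, plus tie-breaking in the sender's favor). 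This renders the feedback exactly binary and the explicit utility matrix trivial to verify, whereas your construction still owes a proof that a legitimate $u\in[0,1]^{\Theta\times\mathcal{A}}$ with only $d+2$ actions realizes $2^{\Omega(d)}$ mutually consistent full-dimensional hidden cells with all other cells instance-independent. Relatedly, your stated reason for wanting positive volume --- ``so the sender really can induce $a^\star$ and the optimal value is a constant'' --- conflates the volume of a best-response region with the probability of inducing it: a Bayes-plausible scheme can place mass $\nicefrac{1}{2}$ on the single posterior $\xi'$ under the uniform prior, which is how the paper gets $\mathrm{OPT}=\nicefrac{1}{2}$ without any volume requirement. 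So the positive-volume detour is an unnecessary complication rather than an error, but the explicit geometric construction it requires is the unfinished part of your argument.
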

Theorem~\ref{thm:hardness1} is proved by constructing a collection of Bayesian persuasion instances in which an optimal signaling scheme has to induce the receiver to take an action that is a best response only for a unique posterior belief (among those computable by the receiver at step~(3) of the interaction).
This posterior belief belongs to a set of possible candidates having size exponential in the number of states of nature $d$.
As a result, in order to learn such a posterior belief, any algorithm has to commit to a number of signaling schemes that is exponential in $d$ (and, given how the instances are built, in $n$).
%

%
The second lower bound shows that the regret bound attained by Algorithm~\ref{alg:main_algorithm} is tight in $T$.
\begin{restatable}{theorem}{hardnessthird}\label{thm:hardness3}
	For any sender's algorithm, there exists a Bayesian persuasion instance in which the regret $R_T$ suffered by the algorithm is at least $\Omega(\sqrt T)$.
	%
\end{restatable}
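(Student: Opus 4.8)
The plan is to prove Theorem~\ref{thm:hardness3} by a two‑point (Le~Cam style) indistinguishability argument. I will exhibit two Bayesian persuasion instances $\mathcal{I}_1$ and $\mathcal{I}_2$, sharing the same sets $\Theta$ and $\mathcal{A}$ and the same sender utility $u^\text{s}$, that the sender cannot tell apart from the partial feedback in fewer than $\Omega(T)$ rounds, yet on which optimal behaviour genuinely differs. Set $\Delta \coloneqq c_0/\sqrt{T}$ for a small constant $c_0$, fix a point $p^\star$ in the interior of $\Delta_\Theta$ and a direction $e$ with $\sum_\theta e_\theta = 0$, and let the two priors be $\mu^{(1)} \coloneqq p^\star + \Delta e$ and $\mu^{(2)} \coloneqq p^\star - \Delta e$. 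The delicate point is that the receiver's played action depends on a committed scheme $\phi$ only through its slices and, by Section~\ref{sec:slices}, the best‑response map $x \mapsto a(x)$ on slices is determined by the separating hyperplanes $H_{ij}$, whose coefficients are the products $\mu_\theta\big(u_\theta(a_i)-u_\theta(a_j)\big)$. I will therefore pick receiver utilities $u^{(1)}, u^{(2)}$ so that these products coincide, i.e. $\mu^{(1)}_\theta\big(u^{(1)}_\theta(a_i)-u^{(1)}_\theta(a_j)\big)=\mu^{(2)}_\theta\big(u^{(2)}_\theta(a_i)-u^{(2)}_\theta(a_j)\big)$ for all $\theta, a_i, a_j$; since $\mu^{(1)}$ and $\mu^{(2)}$ differ by only $\Theta(\Delta)$, this is achieved by rescaling each coordinate of the utility differences by $1\pm O(\Delta)$, which keeps all utilities in $[0,1]$. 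Consequently the polytopes $\mathcal{X}^\Hsquare(a)$ and the map $a(\cdot)$ are \emph{identical} in $\mathcal{I}_1$ and $\mathcal{I}_2$, so that, given the scheme played and the realised state, the distribution of the whole feedback tuple (realised signal, played action, sender payoff) is the same in both instances; the only instance‑dependent randomness in a round is the realised state $\theta^t\sim\mu^{(k)}$.

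With this in place, the indistinguishability step is routine: conditioning on the history, the per‑round contribution to the KL divergence between the trajectory laws under $\mathcal{I}_1$ and $\mathcal{I}_2$ is exactly $\mathrm{KL}\big(\mu^{(1)}\,\|\,\mu^{(2)}\big)$, which, since $p^\star$ is bounded away from the simplex boundary, is $\Theta(\Delta^2)=\Theta(1/T)$; by the chain rule the divergence between the length‑$t$ trajectory laws is $O(t/T)$, so Pinsker's inequality bounds their total variation distance $\mathrm{TV}_t$ by a constant $c<1$ for every $t\le c'T$. For the separation I will additionally design $u^\text{s}$ and the hyperplanes (equivalently the utility differences, hence the feasible region of the linear program of Algorithm~\ref{alg:find_signaling_scheme}) so that this LP has a unique optimal vertex in each instance, the two optimal vertices are distinct, and each beats every other feasible point in its own instance by a margin $\Theta(\Delta)$; in particular the suboptimality gaps of any feasible scheme in the two instances sum to $\Omega(\Delta)$, so no scheme is $o(\Delta)$‑optimal in both. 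Let $\mathcal{G}_t$ be the history‑measurable event that $\phi_t$ is $\tfrac13\Delta$‑optimal for $\mathcal{I}_1$: on $\mathcal{G}_t^{\mathrm c}$ we have $u^\text{s}_{\mathcal{I}_1}(\phi_t)\le\mathrm{OPT}_{\mathcal{I}_1}-\tfrac13\Delta$, while on $\mathcal{G}_t$ the needle property forces $u^\text{s}_{\mathcal{I}_2}(\phi_t)\le\mathrm{OPT}_{\mathcal{I}_2}-\tfrac13\Delta$. Hence $\mathbb{E}_{\mathcal{I}_1}[R_T]+\mathbb{E}_{\mathcal{I}_2}[R_T]\ge\tfrac{\Delta}{3}\sum_{t\le c'T}\big(P_{\mathcal{I}_1}(\mathcal{G}_t^{\mathrm c})+P_{\mathcal{I}_2}(\mathcal{G}_t)\big)\ge\tfrac{\Delta}{3}\sum_{t\le c'T}\big(1-\mathrm{TV}_t\big)\ge\tfrac{\Delta}{3}(1-c)\,c'T=\Omega(\sqrt{T})$, and taking the worse of the two instances gives $R_T=\Omega(\sqrt{T})$.

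The main obstacle is the construction of the first paragraph: producing a concrete small instance (I expect $d$ and $n$ both constant, most plausibly $d=2$ or $d=3$) in which (a) the matched‑products trick makes the best‑response geometry identical across the two priors while the priors still differ, so the action feedback carries \emph{no} information beyond the $\Theta(1/\sqrt t)$‑accurate empirical prior $\widehat{\mu}_t$, and simultaneously (b) the sender's optimum sits at a ``kink'' of the value function, so that a $\Theta(1/\sqrt T)$ perturbation of the prior swings the optimum from one LP vertex to another and confusing the instances costs $\Theta(1/\sqrt T)$ per round. Checking that (a) and (b) coexist — that the two competing optimal vertices can be pushed $\Theta(\Delta)$ apart in value while the hyperplane coefficients remain matched across the two priors — is where the real work lies; once such an instance is fixed, the divergence computation and the two‑point bound above are standard, and one finally verifies that the regret incurred during the exploration phases of any reasonable algorithm (which is in any case only $\widetilde{\mathcal{O}}(1)$ in its $T$‑dependence, cf. Theorem~\ref{thm:no_regret_algo}) cannot absorb the $\Omega(\sqrt{T})$ term.
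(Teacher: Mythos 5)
Your plan is essentially the paper's proof: a two-point argument with priors perturbed by $\pm\Theta(1/\sqrt{T})$, receiver utilities rescaled coordinatewise so that the products $\mu_\theta u_\theta(a)$ (hence all separating hyperplanes and the map $x\mapsto a(x)$) coincide across the two instances, making the action feedback uninformative so that only the observed states distinguish them; then a KL-of-trajectories bound of order $T\epsilon^2$ plus an indistinguishability inequality (the paper uses Bretagnolle--Huber where you use Pinsker, an immaterial difference) yields $R_T^1+R_T^2=\Omega(\epsilon T e^{-O(T\epsilon^2)})=\Omega(\sqrt{T})$. The only piece you leave open --- whether requirements (a) and (b) can coexist in a concrete instance --- is exactly what the paper supplies, with $d=2$ and $n=4$: priors $(\tfrac12\pm\epsilon,\tfrac12\mp\epsilon)$, utilities of $a_1,a_2$ scaled by $\tfrac{1}{2\pm4\epsilon}$ and $\tfrac{1}{10\pm20\epsilon}$ so the products match, and two ``valuable'' actions $a_3$ (best response only on slices with $\phi_{\theta_1}(s)=\phi_{\theta_2}(s)$, paying the sender only in $\theta_1$) and $a_4$ (only on slices with $\phi_{\theta_1}(s)=0$, paying only in $\theta_2$), so that the sender's choice reduces to a single parameter $\phi_1^t\in[0,1]$ whose optimal value flips between $1$ and $0$ across the two instances with a per-round gap of $2\epsilon$. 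So your worry about the coexistence of (a) and (b) is resolved affirmatively, and the rest of your argument goes through as written.
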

To prove Theorem~\ref{thm:hardness3}, we construct two Bayesian persuasion instances with $\Theta = \{ \theta_1, \theta_2 \}$ such that, in the first instance, $\mu_{\theta_1}$ is slightly greater than $\mu_{\theta_2}$, while the opposite holds in the second instance.
%
%
Furthermore, the two instances are built so that the sender does \emph{not} gain any information that helps to distinguish between them by committing to signaling schemes.
As a consequence, to make a distinction, the sender can only leverage the information gained by observing the states of nature realized at each round, and this clearly results in the regret being at least $\Omega(\sqrt T)$.



\section{The sample complexity of Bayesian persuasion}\label{sec:sample_complexity}

In this section, we show how the no-regret learning algorithm developed in Section~\ref{sec:algo_noregret} can be easily adapted to solve a related \emph{Bayesian persuasion PAC-learning problem}.
Specifically, given an (additive) approximation error $\gamma \in (0,1)$ and a probability $\eta \in (0,1)$, the goal of such a problem is to learn a $\gamma$-optimal signaling scheme with probability at least $1-\eta$, by using the minimum possible number of rounds.
This can be also referred to as the \emph{sample complexity} of learning signaling schemes.
%

As in the regret-minimization problem addressed in Section~\ref{sec:algo_noregret}, we assume that the sender does \emph{not} know anything about both the prior distribution $\mu$ and receiver's utility function $u$.

\begin{wrapfigure}[12]{R}{0.45\textwidth}
	\vspace{-0.75cm}
	\begin{minipage}{0.45\textwidth}
		\begin{algorithm}[H]
			\caption{\texttt{PAC-Persuasion-w/o-Clue}}\label{alg:main_algorithm_sc}
			\begin{algorithmic}[1]
				\Require $\gamma \in (0,1)$, $\eta \in (0,1)$
				%
				%
				%
				\State $\delta \gets \nicefrac{\eta}{2}, \zeta \gets \nicefrac{\eta}{2}, \epsilon_1 \gets \nicefrac{\gamma}{12nd}$, $t \gets 1$
				\State $\epsilon \gets \texttt{Compute-Epsilon}(\epsilon_1)$ \label{line:compute_epsilon_sc}
				\State $T_1 \gets \left\lceil \frac{1}{2\epsilon^2} \log\left(\frac{2d}{\delta} \right) \right\rceil$
				\State $\mathcal{X}_\epsilon \gets \texttt{Build-Search-Space}(T_1, \epsilon)$ 
				\State $ \mathcal{R}_\epsilon \gets \texttt{Find-Polytopes}(\mathcal{X}_\epsilon, \zeta)$ \label{line:find_slices_sc}
				\State $\phi \gets \texttt{Compute-Signaling} (\mathcal{R}_\epsilon,\mathcal{X}_\epsilon, \widehat{\mu}) $  		
				\State \textbf{return} $\phi $
			\end{algorithmic}
		\end{algorithm}
	\end{minipage}
\end{wrapfigure}

We tackle the Bayesian persuasion PAC-learning problem with a suitable adaptation of Algorithm~\ref{alg:main_algorithm}, provided in Algorithm~\ref{alg:main_algorithm_sc}.
The first two phases of the algorithm follow the line of Algorithm~\ref{alg:main_algorithm}, with the \texttt{Build-Search-Space} and \texttt{Find-Polytopes} procedures being called for suitably-defined parameters $\epsilon$, $\delta$, $T_1$, and $\zeta$ (taking different values with respect to their counterparts in Algorithm~\ref{alg:main_algorithm}).
In particular, the value of $\epsilon$ depends on $\gamma$ and is carefully computed so as to control the bit-complexity of numbers used in the \texttt{Find-Polytopes} procedure (see Lemma~\ref{lem:final_partition}), as detailed in Appendix~\ref{appendix:sample_complexity}.
Finally, in its third phase, the algorithm calls \texttt{Compute-Signaling} to compute a signaling scheme $\phi$ that can be proved to $\gamma$-optimal with probability at least $1-\eta$.
%

The most relevant difference between Algorithm~\ref{alg:main_algorithm_sc} and Algorithm~\ref{alg:main_algorithm} is the number of rounds used to build the prior estimate defining $\X_\epsilon$.
Specifically, while the latter has to employ $T_1$ of the order of $\nicefrac{1}{\epsilon}$ and rely on a multiplicative Chernoff bound to get tight regret guarantees, the former has to use $T_1$ of the order of $\nicefrac{1}{\epsilon^2}$ and standard concentration inequalities to get an $\mathcal{O}(\epsilon)$-optimal solution.
%
%
%
Formally:
\begin{restatable}{lemma}{PriorEstimateSC}\label{lem:prior_estimate_sc} 
	%
	%
	Given $T_1 \coloneqq \left\lceil \frac{1}{2\epsilon^2} \log\left(\nicefrac{2d}{\delta} \right) \right\rceil$ and $\epsilon \in (0,1)$, Algorithm~\ref{alg:estimate_prior} employs $T_1$ rounds and outputs $\X_\epsilon \subseteq \X$ such that, with probability at least $1-\delta$: (i) $\sum_{\theta \in \Theta} \mu_\theta x_\theta \ge \epsilon$ for every slice $x \in \X_\epsilon$, (ii) $\sum_{\theta \in \Theta} \mu_\theta x_\theta \le 6\epsilon$ for every slice $x \in \X \setminus \X_\epsilon$, and (iii) $|\widehat{\mu}_{\theta} - \mu_\theta| \le \epsilon$ for every $\theta \in \Theta$.
	%
	%
	%
\end{restatable}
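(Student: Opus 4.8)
The plan is to mirror the structure of the proof of Lemma~\ref{lem:prior_estimate}, but replacing the multiplicative Chernoff bound with a standard additive (Hoeffding-type) concentration inequality, which is what forces the $\nicefrac{1}{\epsilon^2}$ sample size rather than $\nicefrac{1}{\epsilon}$. First I would observe that during the $T_1$ rounds of the loop in Algorithm~\ref{alg:estimate_prior} the sender does nothing but commit to an arbitrary signaling scheme and record the realized states $\theta^t \sim \mu$; hence for each fixed $\theta \in \Theta$ the count $N_{T_1,\theta}$ is a sum of $T_1$ i.i.d.\ Bernoulli$(\mu_\theta)$ random variables, and $\widehat{\mu}_\theta = N_{T_1,\theta}/T_1$ is an unbiased estimator of $\mu_\theta$. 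Applying Hoeffding's inequality to each coordinate gives $\Pr[|\widehat{\mu}_\theta - \mu_\theta| > \epsilon] \le 2\exp(-2\epsilon^2 T_1)$, and plugging in $T_1 = \lceil \frac{1}{2\epsilon^2}\log(\nicefrac{2d}{\delta})\rceil$ makes this at most $\nicefrac{\delta}{d}$. A union bound over the $d$ states then yields that, with probability at least $1-\delta$, the event $\mathcal{G} \coloneqq \{|\widehat{\mu}_\theta - \mu_\theta| \le \epsilon \ \forall \theta\}$ holds; this immediately establishes claim (iii). The remainder of the argument is deterministic, conditioned on $\mathcal{G}$.

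Next I would derive (i) and (ii) from $\mathcal{G}$ together with the definitions of $\widetilde{\Theta}$ and $\X_\epsilon$ in Lines~\ref{line:loop_set_theta}--\ref{line:set_xi_defintion} of Algorithm~\ref{alg:estimate_prior}. For (i): if $x \in \X_\epsilon$, then by definition $\sum_{\theta \in \widetilde{\Theta}} \widehat{\mu}_\theta x_\theta \ge 2\epsilon$; since on $\mathcal{G}$ we have $\widehat{\mu}_\theta \le \mu_\theta + \epsilon$ and $x_\theta \le 1$ with $\sum_\theta x_\theta = 1$ (normalized slice), it follows that $\sum_{\theta \in \Theta} \mu_\theta x_\theta \ge \sum_{\theta \in \widetilde{\Theta}} \mu_\theta x_\theta \ge \sum_{\theta \in \widetilde{\Theta}} (\widehat{\mu}_\theta - \epsilon) x_\theta \ge 2\epsilon - \epsilon\sum_{\theta} x_\theta = 2\epsilon - \epsilon = \epsilon$, giving (i). For (ii): if $x \in \X \setminus \X_\epsilon$, then $\sum_{\theta \in \widetilde{\Theta}} \widehat{\mu}_\theta x_\theta < 2\epsilon$. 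Splitting the sum over $\Theta$ into $\widetilde{\Theta}$ and $\Theta \setminus \widetilde{\Theta}$, on the first part we bound $\mu_\theta \le \widehat{\mu}_\theta + \epsilon$ so that $\sum_{\theta \in \widetilde{\Theta}} \mu_\theta x_\theta \le \sum_{\theta \in \widetilde{\Theta}} \widehat{\mu}_\theta x_\theta + \epsilon \le 3\epsilon$; on the second part, each $\theta \notin \widetilde{\Theta}$ has $\widehat{\mu}_\theta \le 2\epsilon$ by Line~\ref{line:set_Theta}, hence $\mu_\theta \le \widehat{\mu}_\theta + \epsilon \le 3\epsilon$, so $\sum_{\theta \notin \widetilde{\Theta}} \mu_\theta x_\theta \le 3\epsilon \sum_{\theta} x_\theta = 3\epsilon$. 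Adding the two pieces gives $\sum_{\theta \in \Theta} \mu_\theta x_\theta \le 6\epsilon$, which is (ii). (One should double-check the threshold constant in Line~\ref{line:set_Theta}: the statement of Lemma~\ref{lem:prior_estimate} uses a $2\epsilon$ threshold, and I expect the same constant is intended here; if a different constant is used the bound in (ii) changes by a fixed factor, which is harmless.)

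The main obstacle, such as it is, is purely bookkeeping: keeping the chain of inequalities tight enough that the constants come out as $\epsilon$ in (i) and $6\epsilon$ in (ii) rather than something looser, and being careful that the normalization $\sum_\theta x_\theta = 1$ is used correctly (this is exactly why the lemma is stated for normalized slices $x \in \X = \X^\triangle$). There is no analytic difficulty beyond Hoeffding plus a union bound — the point of the lemma is precisely that the PAC setting tolerates the weaker $\nicefrac{1}{\epsilon^2}$ concentration, so unlike in the regret setting there is no need for the multiplicative Chernoff refinement. I would conclude by noting that the total number of rounds used is exactly $T_1$ as claimed, since Lines~\ref{line:loop_set_theta}--\ref{line:set_xi_defintion} involve no interaction with the receiver.
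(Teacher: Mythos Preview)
Your proposal is correct and follows essentially the same approach as the paper's proof: Hoeffding's inequality plus a union bound over the $d$ states to establish (iii), followed by the same deterministic splitting argument over $\widetilde{\Theta}$ and $\Theta \setminus \widetilde{\Theta}$ to derive (i) and (ii) with the matching constants $\epsilon$ and $6\epsilon$. The chains of inequalities and the use of $\sum_\theta x_\theta = 1$ coincide with the paper's argument line for line.
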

%
%
By Lemma~\ref{lem:prior_estimate_sc}, it is possible to show that the event $\mathcal{E}^1$ holds.
Hence, the probability that a signaling scheme including a slice $x \in \mathcal{X}_\epsilon$ actually ``induces'' such a slice is at least $\epsilon$, and, thus, the results concerning the second phase of Algorithm~\ref{alg:main_algorithm} are valid also in this setting.
Finally, whenever the events $\mathcal{E}_1$ and $\mathcal{E}_2$ hold, we can provide an upper bound on the number of rounds required by Algorithm~\ref{alg:main_algorithm_sc} to compute a $\gamma$-optimal signaling scheme as desired.
Formally:
\begin{restatable}{theorem}{MainSC}\label{th:main_th_sc} 
	Given $\gamma \in (0,1)$ and $\eta \in (0,1)$, with probability at least $1-\eta$, Algorithm~\ref{alg:main_algorithm_sc} outputs a $\gamma$-optimal signaling scheme in a number of rounds $T$ such that: 
	%
	%
	\[
		T \leq \widetilde{\mathcal{O}}\left( \frac{n^3}{\gamma^2} \log^2\left(\frac{1}{\eta}\right) \left(d^8B +d\binom{d+n}{d}\right) \right).
	\]
	%
\end{restatable}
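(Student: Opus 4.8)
The plan is to compose the guarantees of the three sub-procedures invoked by Algorithm~\ref{alg:main_algorithm_sc}, in the order they are executed, and then sum the rounds consumed by the only two phases that interact with the receiver, namely \texttt{Build-Search-Space} and \texttt{Find-Polytopes} (the final call to \texttt{Compute-Signaling} merely solves Program~\eqref{eq:lp_h_repr} and builds $\phi$, so it uses no rounds; hence $T = T_1 + T_2$ at termination). First I would control the success probability. With $\delta = \zeta = \nicefrac{\eta}{2}$ and $T_1 = \lceil \frac{1}{2\epsilon^2}\log(\nicefrac{2d}{\delta})\rceil$, Lemma~\ref{lem:prior_estimate_sc} guarantees that with probability at least $1-\delta$ the set $\mathcal{X}_\epsilon$ returned by \texttt{Build-Search-Space} satisfies properties (i)--(iii)---in particular $\sum_\theta\mu_\theta x_\theta\ge\epsilon$ on $\mathcal{X}_\epsilon$, $\sum_\theta\mu_\theta x_\theta\le 6\epsilon$ off $\mathcal{X}_\epsilon$, and $|\widehat\mu_\theta-\mu_\theta|\le\epsilon$ for all $\theta$---i.e.\ the clean event $\mathcal{E}_1$ holds. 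Conditioned on $\mathcal{E}_1$, the bound $\sum_\theta\mu_\theta x_\theta\ge\epsilon$ is exactly the hypothesis needed by \texttt{Action-Oracle} (Lemma~\ref{lem:rounds_signal}), so Lemma~\ref{lem:final_partition} applies and gives, with probability at least $1-\zeta$, a collection $\mathcal{R}_\epsilon$ with $\mathcal{V}_\epsilon(a)\subseteq\mathcal{R}_\epsilon(a)$ for all $a\in\mathcal{A}$ (event $\mathcal{E}_2$) together with the stated bound on $T_2$. A union bound then yields $\mathbb{P}[\mathcal{E}_1\cap\mathcal{E}_2]\ge(1-\delta)(1-\zeta)\ge 1-\eta$.

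Next, conditioning on $\mathcal{E}_1\cap\mathcal{E}_2$, I would invoke Lemma~\ref{lem:find_signaling}: the signaling scheme $\phi$ output by \texttt{Compute-Signaling} is $\mathcal{O}(\epsilon n d + \nu)$-optimal, where $\nu$ is bounded by the prior-estimation error, which by property~(iii) of Lemma~\ref{lem:prior_estimate_sc} is $\mathcal{O}(d\epsilon)$; hence $\phi$ is $\mathcal{O}(\epsilon n d)$-optimal. Since $\epsilon\gets\texttt{Compute-Epsilon}(\epsilon_1)$ returns (as argued in Appendix~\ref{appendix:sample_complexity}) a value $\epsilon\le\epsilon_1=\nicefrac{\gamma}{12nd}$ with $\epsilon=\Theta(\epsilon_1)$ and only logarithmic bit-complexity $B_\epsilon=\mathcal{O}(\log(\nicefrac{nd}{\gamma}))$, the constant hidden in the $\mathcal{O}(\cdot)$ is absorbed by the constant $12$ appearing in $\epsilon_1$, so $\phi$ is $\gamma$-optimal. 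This settles the correctness part.

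It then remains to bound $T=T_1+T_2$. Using $\epsilon=\Theta(\nicefrac{\gamma}{nd})$, we get $T_1=\widetilde{\mathcal{O}}(\nicefrac{1}{\epsilon^2})=\widetilde{\mathcal{O}}(\nicefrac{n^2d^2}{\gamma^2})$. For $T_2$, Lemma~\ref{lem:final_partition} gives $T_2\le\widetilde{\mathcal{O}}\big(\frac{n^2}{\epsilon}\log^2(\nicefrac1\zeta)(d^7L+\binom{d+n}{d})\big)$ with $L=B+B_\epsilon+B_{\widehat\mu}$; since $B_\epsilon$ is logarithmic and $B_{\widehat\mu}=\mathcal{O}(\log T_1)=\widetilde{\mathcal{O}}(1)$ is as well, we have $L=B+\widetilde{\mathcal{O}}(1)$, hence $d^7L=d^7B+\widetilde{\mathcal{O}}(d^7)$. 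Plugging in $\nicefrac1\epsilon=\mathcal{O}(\nicefrac{nd}{\gamma})$ and $\log^2(\nicefrac1\zeta)=\mathcal{O}(\log^2(\nicefrac1\eta))$ gives $T_2\le\widetilde{\mathcal{O}}\big(\frac{n^3d}{\gamma}\log^2(\nicefrac1\eta)(d^7B+\binom{d+n}{d})\big)$. Finally, because $\gamma\in(0,1)$ implies $\nicefrac1\gamma\le\nicefrac1{\gamma^2}$, both $T_1$ (trivially, since $n^2d^2\le n^3d^8B$) and $T_2$ are dominated by $\widetilde{\mathcal{O}}\big(\frac{n^3}{\gamma^2}(d^8B+d\binom{d+n}{d})\big)$, which is the claimed bound.

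I expect the main obstacle to be the bookkeeping of the bit-complexity term $L$ inside Lemma~\ref{lem:final_partition}: one has to verify that \texttt{Compute-Epsilon} produces an $\epsilon$ close enough to $\epsilon_1$ to keep the $\nicefrac1\gamma$-factors under control, while simultaneously keeping $B_\epsilon$---and therefore $B_{\widehat\mu}$, whose value depends on $T_1$ and hence on $\epsilon$---merely logarithmic, so that neither quantity inflates the final bound beyond the $d^8B$ term. Everything else is a routine composition of the already-established Lemmas~\ref{lem:prior_estimate_sc}, \ref{lem:final_partition} and~\ref{lem:find_signaling} together with a union bound over the two probabilistic phases.
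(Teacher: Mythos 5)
Your proposal is correct and follows essentially the same route as the paper's proof: a union bound over the two probabilistic phases ($\delta=\zeta=\nicefrac{\eta}{2}$), Lemma~\ref{lem:prior_estimate_sc}(iii) to bound the prior-estimation term $\nu$ by $d\epsilon$ so that Lemma~\ref{lem:find_signaling} yields a $12n\epsilon d$-optimal (hence $\gamma$-optimal) scheme for $\epsilon\le\epsilon_1=\nicefrac{\gamma}{12nd}$, and the same accounting of $T_1+T_2$ with $L=B+B_\epsilon+B_{\widehat\mu}$ kept logarithmic via \texttt{Compute-Epsilon}. No gaps.
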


We conclude by providing two negative results showing that the result above is tight.
%
%
%
\begin{restatable}{theorem}{hardnessfirstsc}\label{thm:hardness1_sc}
	There exist two absolute constants $\kappa,\lambda >0$ such that no algorithm is guaranteed to return a $\kappa$-optimal signaling scheme with probability of at least $1 - \lambda$ by employing less than $2^{\Omega(n)}$ and $2^{\Omega(d)}$ rounds, even when the prior distribution $\mu$ is known to the sender.
	%
\end{restatable}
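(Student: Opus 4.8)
The plan is to build a family of Bayesian persuasion instances in which the sender, even knowing the prior $\mu$, must essentially ``guess'' one distinguished receiver's action among exponentially many candidates, and for which guessing wrong leaves a constant gap to $\text{OPT}$. This is the PAC-analogue of Theorem~\ref{thm:hardness1}, so I would reuse the same geometric construction that underlies that lower bound: take $n = d+2$ receiver's actions, with $d+1$ of them being ``safe'' actions whose best-response regions tile most of the simplex $\X^\triangle$, and one special action $a^\star$ whose best-response region $\X^\triangle(a^\star)$ is a single vertex $v^\star$ of the arrangement of separating hyperplanes. The prior $\mu$ is placed so that the optimal signaling scheme must induce a posterior (equivalently, use a normalized slice) exactly at $v^\star$ in order to trigger $a^\star$, which alone yields high sender utility; every other slice yields sender utility bounded away by an absolute constant. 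The candidate vertices $v$ form a set of size $2^{\Omega(d)} = 2^{\Omega(n)}$, and the instances differ only in \emph{which} vertex is the special one and in the receiver's utilities off that vertex, while being indistinguishable to the sender until the special vertex is actually queried.

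The key steps, in order, are: (1) Describe the hyperplane arrangement and verify that it admits $2^{\Omega(d)}$ vertices, each of which can serve as the unique best-response point of a planted action $a^\star$; here I would cite or adapt the construction from the proof of Theorem~\ref{thm:hardness1}. (2) Fix sender utilities so that $u^\text{s}$ under the slice at $v^\star$ is, say, $1$, while the best achievable sender utility using only slices avoiding a small neighborhood of $v^\star$ is at most $1-\kappa$ for an absolute constant $\kappa>0$; this makes any signaling scheme that does not (with probability bounded away from zero) place mass near $v^\star$ fail to be $\kappa$-optimal. (3) Argue the information-theoretic lower bound: the only feedback is the best-response action $a^t$, and by construction two instances with distinct special vertices $v, v'$ produce identical feedback distributions for every signaling scheme whose slices avoid both $v$ and $v'$ (up to the resolution of the arrangement). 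Hence any algorithm that commits to fewer than $2^{\Omega(d)}$ signaling schemes fails to distinguish a constant fraction of the instances, and a standard averaging / Yao-type argument over a uniformly random choice of special vertex shows that with probability at least $\lambda$ it outputs a scheme that is not $\kappa$-optimal for the realized instance. (4) Finally observe the whole argument goes through with $\mu$ revealed to the sender, since $\mu$ is the \emph{same} across the instances in the family — only receiver's utilities vary — so knowing $\mu$ conveys nothing about the special vertex.

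The main obstacle I anticipate is step~(3): making precise that an algorithm querying slices \emph{other than the special vertices} genuinely gets no information separating the instances. One has to handle the continuum of possible slices carefully — a slice lying on a shared separating hyperplane, or infinitesimally close to a candidate vertex, could in principle leak information through tie-breaking. The clean way around this is to discretize: note that any signaling scheme the algorithm commits to uses finitely many slices, and design the instances so that the special vertex is chosen from a finite grid and the receiver's utilities are perturbed only within a tiny radius $\rho$ of that vertex, with $\rho$ smaller than any resolution the algorithm can exploit without literally hitting the vertex; then ``querying a slice within $\rho$ of some candidate vertex'' is the only informative move, and there are $2^{\Omega(d)}$ candidates. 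A secondary technical point is bookkeeping the constants $\kappa,\lambda$ so they are genuinely absolute (independent of $d,n$), which follows because both the utility gap and the fraction of instances left indistinguishable are controlled by fixed geometric quantities of the canonical arrangement, not by its dimension.
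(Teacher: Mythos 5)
Your plan is correct and follows essentially the same route as the paper: the paper's proof of Theorem~\ref{thm:hardness1_sc} directly reuses the construction from Theorem~\ref{thm:hardness1} (uniform prior shared across all $\binom{d}{d/2}=2^{\Omega(d)}$ instances, a single planted action $a_{d+1}$ inducible only at one candidate posterior, binary feedback, Yao's minimax principle), observing that since only the receiver's utilities vary, knowledge of $\mu$ gives no advantage, and that any scheme missing the special posterior has sender utility $0$ versus $\textnormal{OPT}=\nicefrac{1}{2}$, yielding absolute constants $\kappa<\nicefrac{1}{2}$ and $\lambda=\nicefrac{5}{8}$. The tie-breaking/discretization concern you raise is resolved in the paper's construction automatically, because the feedback at any posterior other than the unique special one is instance-independent, so no perturbation argument is needed.
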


\begin{restatable}{theorem}{hardnessthirssc}\label{thm:hardness3_sc}
	Given $\gamma \in (0,\nicefrac{1}{8})$ and $\eta \in (0,1)$, no algorithm is guaranteed to return a $\gamma$-optimal signaling scheme with probability at least $1 - 
	\eta$ by employing less than $ \Omega \big( \frac{1}{\gamma^2}{\log(\nicefrac 1 \eta)} \big)   $ rounds.
\end{restatable}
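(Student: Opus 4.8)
The plan is to adapt and sharpen the two--instance argument behind Theorem~\ref{thm:hardness3}. Pick a constant $c>1$ (say $c=2$) and build two Bayesian persuasion instances $I^{+}$ and $I^{-}$ sharing the sets $\Theta=\{\theta_1,\theta_2\}$ and $\A$ and the sender utility $u^{\text{s}}$, with priors $\mu^{+}=(\nicefrac{1}{2}+c\gamma,\,\nicefrac{1}{2}-c\gamma)$ and $\mu^{-}=(\nicefrac{1}{2}-c\gamma,\,\nicefrac{1}{2}+c\gamma)$; since $\gamma<\nicefrac{1}{8}$, both priors lie in $\textnormal{int}(\Delta_\Theta)$ and are bounded away from its boundary. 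The receiver's utilities are chosen so that, in both instances, the products $\mu_\theta\big(u_\theta(a_i)-u_\theta(a_j)\big)$ have the same signs and the same ratios across $\theta\in\Theta$; by the definition of the halfspaces $\mathcal{H}_{ij}$ this makes the covering $\{\X^{\triangle}(a)\}_{a\in\A}$ of the normalized--slice space, hence the map $x\mapsto a(x)$, \emph{identical} in $I^{+}$ and $I^{-}$. This is the first key property: committing to any signaling scheme $\phi$ gives away no information about the instance, since the drawn signal $s\sim\phi_{\theta^t}$, the observed best response $a^{\phi}(s)=a(x)$ (with $x$ the slice of $\phi$ at $s$), and the received payoff $u^{\text{s}}_{\theta^t}(a^{\phi}(s))$ all depend only on $\phi$ and on the realized state $\theta^t$.

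The second key property is that no signaling scheme is $\gamma$--optimal in \emph{both} instances. Write $u^{\text{s}}(\phi)=\mu_{\theta_1}g_1(\phi)+\mu_{\theta_2}g_2(\phi)$, where $g_\theta(\phi)\coloneqq\sum_{s\in\sset}\phi_\theta(s)\,u^{\text{s}}_\theta(a^{\phi}(s))$ depends, by the first property, only on the slices of $\phi$ and not on the prior; hence the attainable set $G\coloneqq\{(g_1(\phi),g_2(\phi)):\phi\}\subseteq[0,1]^2$ is common to both instances, and $\textnormal{OPT}$ at prior $\mu$ equals the support function of $G$ at $\mu$. I would design $u^{\text{s}}$ and the regions $\X^{\triangle}(a)$ so that $(1,0),(0,1)\in G$ while $G\subseteq\{(g_1,g_2):g_1+g_2\le 1\}$ --- informally, a scheme can make the receiver perfectly ``serve'' $\theta_1$ or perfectly ``serve'' $\theta_2$, but not both at once. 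Then $\textnormal{OPT}^{\pm}\ge\nicefrac{1}{2}+c\gamma$, so a $\gamma$--optimal scheme for $I^{+}$ satisfies $(\nicefrac{1}{2}+c\gamma)g_1(\phi)+(\nicefrac{1}{2}-c\gamma)g_2(\phi)\ge\nicefrac{1}{2}+(c-1)\gamma$, which together with $g_1+g_2\le1$ forces $g_1(\phi)\ge 1-\tfrac{1}{2c}$; symmetrically a $\gamma$--optimal scheme for $I^{-}$ has $g_2(\phi)\ge 1-\tfrac{1}{2c}$, and for $c>1$ these are incompatible with $g_1+g_2\le1$. Exhibiting such an instance --- realizing the facet $g_1+g_2=1$ of $G$ with $(1,0),(0,1)\in G$ while keeping all payoffs in $[0,1]$ and the slice covering prior--independent --- is the delicate point, and it is essentially the construction underlying Theorem~\ref{thm:hardness3}, now instantiated with gap parameter $\Theta(\gamma)$; for the mechanism just sketched two receiver actions do not suffice (then $G$ is an axis--aligned affine image of a triangle and cannot have the required facet on the right side), so a third receiver action would be introduced together with a careful placement of the separating hyperplanes $H_{ij}$.

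Granting these two properties, the rest is a routine reduction to binary hypothesis testing. Consider an algorithm that interacts for $T$ rounds and then outputs a signaling scheme. By the first property, for every history--dependent randomized sender strategy the observed transcript up to round $T$ is an instance--independent function of the realized--state sequence $(\theta^1,\dots,\theta^T)$ and instance--independent internal randomness; so, by the data--processing inequality, the total variation distance between the transcript laws --- and hence between the output laws --- under $I^{+}$ and $I^{-}$ is at most $\mathrm{TV}\big(\mathrm{Bern}(\nicefrac{1}{2}+c\gamma)^{\otimes T},\mathrm{Bern}(\nicefrac{1}{2}-c\gamma)^{\otimes T}\big)$. If the algorithm returned a $\gamma$--optimal scheme with probability at least $1-\eta$ on each instance, then, since by the second property the two $\gamma$--optimal sets are disjoint, classifying the instance by which of these two sets the output lies in is a test with error at most $\eta$ under each of $I^{+},I^{-}$. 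By the Bretagnolle--Huber inequality and tensorization of the KL divergence, the sum of that test's two error probabilities is at least $\tfrac{1}{2}\exp\big(-T\cdot\mathrm{KL}(\mathrm{Bern}(\nicefrac{1}{2}+c\gamma)\,\|\,\mathrm{Bern}(\nicefrac{1}{2}-c\gamma))\big)$, and since $\gamma<\nicefrac{1}{8}$ gives $\mathrm{KL}(\mathrm{Bern}(\nicefrac{1}{2}+c\gamma)\,\|\,\mathrm{Bern}(\nicefrac{1}{2}-c\gamma))=O(\gamma^2)$, we obtain $2\eta\ge\tfrac{1}{2} e^{-O(T\gamma^2)}$, i.e.\ $T=\Omega\big(\tfrac{1}{\gamma^2}\log\tfrac{1}{\eta}\big)$, which is the claim.

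The information--theoretic steps are standard; the main obstacle is the construction in the first two paragraphs: forcing the receiver's best--response covering of $\X^{\triangle}$ to be prior--independent across the two instances \emph{and} shaping the sender's attainable set $G$ so that near--optimality for one prior precludes near--optimality for the other. Once such an instance is available, only the elementary estimate $\mathrm{KL}(\mathrm{Bern}(\nicefrac{1}{2}+c\gamma)\,\|\,\mathrm{Bern}(\nicefrac{1}{2}-c\gamma))=O(\gamma^2)$ for $\gamma<\nicefrac{1}{8}$ and the Bretagnolle--Huber bound are needed to reach the stated rate.
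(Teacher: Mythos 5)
Your proposal is correct and follows essentially the same route as the paper: the paper's proof literally reuses the two instances from Theorem~\ref{thm:hardness3} (flipped priors $\nicefrac{1}{2}\pm\Theta(\gamma)$ with receiver utilities rescaled so that the best-response map on slices is identical across the two instances), observes that the sets of $\gamma$-optimal schemes are disjoint, and applies the Bretagnolle--Huber inequality with $\mathcal{KL}=O(T\gamma^2)$, exactly as you do. The only difference is presentational — your support-function description of the attainable set $G$ abstracts what the explicit four-action construction of Theorem~\ref{thm:hardness3} achieves concretely — and your deferral to that construction is precisely what the paper's proof does as well.
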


In Appendix~\ref{appendix:sample_comlexity_known_mu} , we also study the case in which the prior $\mu$ is known to the sender.
In such a case, we show that the sample complexity can be improved by a factor $\nicefrac{1}{\gamma}$, which is tight.

\section*{Acknowledgments}
This work was supported by the Italian MIUR PRIN 2022 Project “Targeted Learning Dynamics: Computing Efficient and Fair Equilibria through No-Regret Algorithms”, by the FAIR (Future Artificial Intelligence Research) project, funded by the NextGenerationEU program within the PNRR-PE-AI scheme (M4C2, Investment 1.3, Line on Artificial Intelligence), and by the EU Horizon project ELIAS (European Lighthouse of AI for Sustainability, No. 101120237). This work was also partially supported by project SERICS (PE00000014) under the MUR National Recovery and Resilience Plan funded by the European Union - NextGenerationEU.

\bibliographystyle{plainnat}
\bibliography{references}
\newpage
\appendix
\section*{Appendix}
The appendixes are organized as follows:
\begin{itemize}
	\item Appendix~\ref{appendix:additional_rleated} provides a discussion of the previous works most related to ours.
	\item Appendix~\ref{appendix:additional_preliminaries} presents additional preliminaries.
	\item Appendix~\ref{app:proof_regret_ub} presents the omitted proofs from Section~\ref{sec:algo_noregret}.
	\item Appendix~\ref{appendix:find_signaling} presents the proof of Lemma~\ref{lem:find_signaling} from Section~\ref{sec:find_signaling}.
	\item Appendix~\ref{appendix:find_partition} provides a description of the results and the procedures employed in \emph{phase 2}.
	\item Appendix~\ref{appendix:lower_bounds_regret} presents the omitted proofs from Section~\ref{sec:lower_bounds_regret}.
	\item Appendix~\ref{appendix:sample_complexity} presents the omitted proofs and some technical details from Section~\ref{sec:sample_complexity}.
	\item Appendix~\ref{appendix:sample_comlexity_known_mu} discusses the PAC-learning problem when the prior is known.
\end{itemize}

\section{Additional Related Works }\label{appendix:additional_rleated}
\paragraph{Learning in Bayesian persuasion settings} 
In addition to the works presented in Section~\ref{sec:related_works}, the problem of learning optimal signaling schemes in Bayesian persuasion settings has received growing attention over the last few years. \citet{Camara2020} study an adversarial setting where the receiver does not know the prior, and the receiver's behavior is aimed at minimizing internal regret.
\citet{Babichenko2022} consider an online Bayesian persuasion setting with binary actions when the prior is known, and the receiver's utility function has some regularities. \citet{FengOnline2002} study the online Bayesian persuasion problem faced by a platform that observes some relevant information about the state of a product and repeatedly interacts with a population of myopic receivers through a recommendation mechanism. \citet{Shipra2023} design a regret-minimization algorithm in an advertising setting based on the Bayesian persuasion framework, assuming that the receiver's utility function satisfies some regularity conditions. 

\paragraph{Online learning in problems with commitment} From a technical point of view, our work is related to the problem of learning optimal strategies in Stackelberg games when the leader has no knowledge of the follower's utility. \citet{letchford2009learning} propose the first algorithm to learn optimal strategies in Stackelberg games. Their algorithm is based on an initial random sampling that may require an exponential number of samples, both in the number of leader's actions $m$ and in the representation precision $L$. \citet{Peng2019} improve the algorithm of \citet{letchford2009learning}, while \citet{bacchiocchi2024sample} further improve the approach by~\citet{Peng2019} by relaxing some of their assumptions. 

Furthermore, our work is also related to the problem of learning optimal strategies in Stackelberg games where the leader and the follower interaction is modelled by a Markov Decision Process. \citet{lauffer2022noregret} study Stackelberg games with a state that influences the leader's utility and available actions. \citet{bai2021sample} consider a setting where the leader commits to a pure strategy and observes a noisy measurement of their utility.

Finally, our work is also related to online hidden-action principal-agent problems, in which a principal commits to a contract at each round to induce an agent to take favorable actions. \citet{ho2015adaptive} initiated the study by proposing an algorithm that adaptively refines a discretization over the space of contracts, framing the model as a multi-armed bandit problem where the discretization provides a finite number of arms to play with. \citet{cohen2022learning} similarly work in a discretized space but with milder assumptions. \citet{zhu2022online} provide a more general algorithm that works in hidden-action principal-agent problems with multiple agent types. Finally, \citet{Bacchiocchi2023learning} study the same setting and propose an algorithm with smaller regret when the number of agent actions is small.

\section{Additional preliminaries}\label{appendix:additional_preliminaries}
\subsection{Additional preliminaries on Bayesian persuasion}
%

In step~(3) of the sender-receiver interaction presented in Section~\ref{sec:prelimin_persuasion}, after observing $s \in \sset$, the receiver performs a Bayesian update and infers a posterior belief $\xi^s \in  \Delta_\Theta$ over the states of nature, according to the following equation:
\begin{equation}\label{eq:posterior}
	\xi^s_\theta =  \frac{\mu_\theta \, \phi_\theta(s)}{\sum_{\theta'\in\Theta}\mu_{\theta'}\,\phi_{\theta'}(s)} \quad\quad \forall \theta \in \Theta.
\end{equation}
%

Consequently, given a signaling scheme $\phi$, we can equivalently represent it as a distribution over the set of posteriors it induces. Formally, we say that $\phi$ induces $\gamma : \Delta_{\Theta} \to [0,1]$ if, for each posterior distribution $\xi \in \Delta_{\Theta}$, we have:
\begin{equation*}
	\gamma(\xi)=\sum_{s \in \sset: \xi^s = \xi }\sum_{\theta \in \Theta} \mu_\theta \phi_\theta(s) \quad\quad\textnormal{and} \quad\quad \sum_{\xi \in \textnormal{supp}(\gamma)} \gamma(\xi) = 1.
\end{equation*}
Furthermore, we say that a distribution over a set of posteriors $\gamma$ is consistent, \emph{i.e.}, there exists a valid signaling scheme $\phi$ inducing $\gamma$ if the following holds:
\begin{equation*}
	\sum_{\xi \in \textnormal{supp}(\gamma)} \gamma(\xi) \xi_\theta= \mu_\theta \,\,\,\ \forall \theta \in \Theta.
\end{equation*}
With an abuse of notation, we will sometimes refer to a consistent distribution over a set of posteriors $\gamma$ as a signaling scheme.
This is justified by the fact that there exists a signaling scheme $\phi$ inducing such distribution, but we are interested only in the distribution over the set of posteriors that $\phi$ induces.

\subsection{Additional preliminaries on the representation of numbers}
In the following, we assume that all the numbers manipulated by our algorithms are rational.
Furthermore, we assume that rational numbers are represented as fractions, by specifying two integers which encode their numerator and denominator. 
Given a rational number $q  \in \mathbb{Q}$ represented as a fraction $\nicefrac{b}{c}$ with $b,c \in \mathbb{Z}$, we denote the number of bits that $q$ occupies in memory, called \emph{bit-complexity}, as $B_q := B_b + B_c$, where $B_b$ ($B_c$) is the number of bits required to represent the numerator (denominator).
For the sake of the presentation, with an abuse of terminology, given a vector in $\mathbb{Q}^D$ of $D$ rational numbers represented as fractions, we let its bit-complexity be the maximum bit-complexity among its entries.

Furthermore, we assume that the bit-complexity encoding both the receiver's utility and the prior distribution is bounded. Formally, we denote by $B_\mu$ the bit-complexity of the prior $\mu$, while we assume $B_u $ to be an upper bound to the bit-complexity of each $d$-dimensional vector $u_\theta(a)$ with $\theta \in \Theta$. Moreover, we let $B \coloneqq B_\mu + B_u$. Finally, we also denote with $B_\epsilon$ the bit-complexity of the parameter $\epsilon$ computed by our algorithms, while we denote with $B_{\widehat{\mu}}$ the bit-complexity of the estimator $\widehat{\mu}$ computed by Algorithm~\ref{alg:estimate_prior}.

\section{Omitted proofs from  Section~\ref{sec:algo_noregret} }\label{app:proof_regret_ub}

\PriorEstimate*
\begin{proof}
	For each $\theta \in \Theta$ we consider two possible cases.
	\begin{enumerate}
		\item If $\mu_\theta > \epsilon$, then we employ the multiplicative Chernoff inequality as follows:
		\begin{equation*}
			\mathbb{P}\left(| \mu_\theta - \widehat{\mu}_\theta | \ge \frac{1}{2} \mu_\theta \right) \le 2 e^{- \frac{ T_1 \mu_\theta}{12}},
		\end{equation*} 
		where $T_1 \in \mathbb{N}_{+}$ is the number of rounds employed to estimate $\widehat{\mu}_\theta $. As a result, by setting the number of rounds to estimate $\mu_\theta$ equal to $T_1 = \left\lceil \nicefrac{12}{ \epsilon}\log\left(\nicefrac{2d}{\delta} \right) \right\rceil $ we get:
		\begin{equation*}
			\mathbb{P}\left(| \mu_\theta - \widehat{\mu}_\theta | \ge \frac{1}{2} \mu_\theta \right)  \le 2 \left(\frac{\delta}{2d}\right)^{\frac{\mu_\theta}{\epsilon}} \le \frac{\delta}{d},
		\end{equation*} 
		since $\mu_\theta > \epsilon$. Then, we get:
		\begin{equation}\label{eq:chernoff}
			\ \mathbb{P}\left(\frac{{\mu}_\theta}{2} \le \widehat{\mu}_\theta \le \frac{3 {\mu}_\theta}{2} \right) \ge 1-\frac{\delta}{d}.
		\end{equation} 
		
	Consequently, with a probability of at least $1-\delta/d$ the estimator $\widehat{\mu}_\theta$ is such that $ \widehat{\mu}_\theta \in [\nicefrac{{\mu}_\theta}{2},\nicefrac{3{\mu}_\theta}{2}]$.
	Thus, if $\mu_\theta \ge 6 \epsilon$, then $\widehat{\mu}_{\theta} \ge 3\epsilon>2\epsilon$ and $\theta \in \widetilde{\Theta}$. 
	We also notice that there always exits a $\theta \in \Theta$ such that $\mu_\theta \ge \nicefrac{1}{d} \ge 6 \epsilon$, since $\epsilon \in (0, \nicefrac{1}{6d})$. 
	Consequently, with a probability of at least $1-\delta/d$, there always exist a $\theta \in \widetilde \Theta$.
	

\item If $\mu_\theta \le \epsilon$, then we employ the multiplicative Chernoff inequality as follows:
\begin{equation*}
	\mathbb{P}\left(\widehat{\mu}_\theta \ge (1+c) \mu_\theta \right) \le e^{- \frac{c^2 T_1 \mu_\theta}{2+c}},
\end{equation*} 
with $c=\epsilon / \mu_\theta$. Thus, by setting the number of rounds employed to estimate $\mu_\theta$ equal to $T_1 = \left\lceil \nicefrac{12}{ \epsilon}\log\left(\nicefrac{2d}{\delta} \right) \right\rceil $, we get:
\begin{align*}
	\mathbb{P}\left(\widehat{\mu}_\theta \ge  \mu_\theta + \epsilon \right) & \le \exp \left( - \frac{ \frac{\epsilon^2}{\mu_\theta^2} \left(\frac{12}{\epsilon}\log\left(\frac{2d}{\delta}\right)\right)\mu_\theta}{ 2 + \frac{\epsilon}{\mu_\theta} }\right) \\
	& \le \exp \left( - \frac{  \frac{12 \epsilon}{\mu_\theta} \log\left(\frac{2d}{\delta}\right)
	}{ 2 + \frac{\epsilon}{\mu_\theta} }\right) \\
	& \le \left(\frac{\delta}{2d}\right)^{4} \le \frac{\delta}{d}, \\
\end{align*} 
since $x/(x+2) \ge 1/3$, for each $x \ge 1$. As a result, we have: 
\begin{equation*}
	\mathbb{P}\left(\widehat{\mu}_\theta \le  \mu_\theta + \epsilon \right) \ge 1-\frac{\delta}{d}.
\end{equation*} 
Thus, with a probability of at least $1-\delta/d$, if $\mu_\theta \le \epsilon$, then $\widehat{\mu}_\theta \le  \mu_\theta + \epsilon$, which implies that $\widehat{\mu}_\theta \le 2\epsilon$.
Furthermore, if $\mu_\theta \le \epsilon$, then $\widehat{\mu}_\theta \le 2\epsilon$ and $\theta \not \in \widetilde{\Theta}$.
\end{enumerate} 

Thus, by employing a union bound over the set of natures, we have that if $\mu_\theta \le \epsilon$, then its corresponding estimate $\widehat\mu_\theta$ falls within the interval $[0, 2\epsilon]$ and $\theta \not \in \widetilde{\Theta}$, while if $\mu_\theta \ge 6\epsilon$, then its corresponding estimate is such that $\widehat\mu_\theta > 2\epsilon$ and $\theta \in \widetilde{\Theta}$, with a probability of at least $1 - \delta$.
We also notice that, with the same probability, the set $\widetilde\Theta$ is always non empty.
%
%

Consequently, for each slice $x\in \mathcal{X}_\epsilon$ with respect to a signal $s$, the probability of observing $s$ can be lower bounded as follows:
\begin{equation*}
	\epsilon 
	\le \frac{1}{2} \sum_{\theta \in \widetilde{\Theta}} \widehat{\mu}_\theta x_\theta 
	\le \frac{3}{4} \sum_{\theta \in \widetilde{\Theta}} \mu_\theta x_\theta 
	\le \sum_{\theta \in \widetilde{\Theta}} \mu_\theta x_\theta 
	\le \sum_{\theta \in \Theta} \mu_\theta x_\theta,
\end{equation*}
where the inequalities above hold because of the definition of $\mathcal{X}_\epsilon$ and observing that each $\theta \in \widetilde{\Theta}$ satisfies Equation~\ref{eq:chernoff} with probability at least $1-\delta$. 

Furthermore, for each $x \not \in \mathcal{X}_\epsilon$, the two following conditions hold:
\begin{align*}
	 \frac{1}{2} \sum_{\theta \in \widetilde{\Theta}} \mu_\theta x_\theta
	\leq \sum_{\theta \in \widetilde{\Theta}} \widehat{\mu}_\theta x_\theta
	\leq 2 \epsilon \quad \textnormal{and}\quad
	 \sum_{\theta \not \in \widetilde{\Theta}} \mu_\theta x_\theta
	\leq 6 \epsilon \sum_{\theta \not \in \widetilde{\Theta}} x_\theta
	\leq 6 \epsilon, 
\end{align*}
with probability at least $1-\delta$. Thus, by putting the two inequalities above together, for each $x \not \in \mathcal{X}_\epsilon$, we have:
\begin{equation*}
	\sum_{\theta \in \Theta} \mu_\theta x_\theta
	\leq 10 \epsilon,
\end{equation*}
with probability at least $1-\delta$, concluding the proof.
\end{proof}

\PriorRounds*
\begin{proof}
	In the following, we let $\tau$ be the first round in which the sender commits to $\phi$.
	The probability of observing the signal $s_1$ at a given round $t \ge \tau$ is can be lower bounded as follows:
	\begin{equation*}
		\mathbb{P}\left(s^t = s_1 \right) = \sum_{\theta \in \Theta} \mu_\theta x_\theta \ge \epsilon,
	\end{equation*}
	where the inequality holds under the event $\mathcal{E}_1$.
	Thus, at each round, the probability of sampling the signal $s_1 \in \sset$ is greater or equal to $\epsilon > 0$. Consequently, the probability of never observing the signal $s_1 \in \sset$ in $q$ rounds is given by:
	\begin{align*}
		\mathbb{P}\left( \bigcap_{t=\tau}^{\tau+q-1} \{s^t \neq s_1 \} \right) \le (1- \epsilon)^q \le \rho,
	\end{align*} 
	where the last inequality holds by taking $q 
	= \left\lceil \frac{\log(\rho)}{\log(1-\epsilon)} \right \rceil  \le \left\lceil \frac{\log(1/\rho)}{\epsilon} \right \rceil $, for each $\epsilon \in (0,1)$.
	
	As a result, the probability of observing the signal $s_1$ at least once in $q$ rounds is greater or equal to:
	\begin{align*}
		\mathbb{P}\left( \bigcup_{t=\tau}^{\tau+q-1} \{s^t \neq s_1 \} \right)  = 1- \mathbb{P}\left( \bigcap_{t=\tau}^{\tau+q-1} \{s^t \neq s_1 \} \right) \ge 1- \rho,
	\end{align*} 
	concluding the proof.
\end{proof}

\NoRegretThm*
\begin{proof}
	In the following, we let $\delta = \zeta= \frac{1}{T}$ and $\epsilon= \frac{\lceil\sqrt{Bn} d^4 \rceil}{\lceil\sqrt T \rceil}$, as defined in Algorithm~\ref{alg:main_algorithm}. To prove the theorem, we decompose the regret suffered in the three phases of Algorithm~\ref{alg:main_algorithm}:
	\begin{enumerate}
		\item Phase~1. We observe that the number of rounds to execute the \texttt{Build-Search-Space} procedure (Algorithm~\ref{alg:estimate_prior}) is equal to $ T_1 =  \mathcal{O}\left(\nicefrac{1}{\epsilon}\log \left( \nicefrac{1}{\delta} \right) \log(d) \right).$ Thus, the cumulative regret of Phase~1 can be upper bounded as follows:
		\begin{equation*}
			R_T^1 \le \widetilde{\mathcal{O}} \left(\frac{1}{\epsilon}\log(T) \log(d) \right) 
			\le \widetilde{\mathcal{O}}\left(\frac{\sqrt{T}}{d^4  \sqrt{nB}} \log(d)  \right)	\le \widetilde{\mathcal{O}}\left({\sqrt{T}} \right) .
		\end{equation*}
		This is because, at each round, the regret suffered during the execution of Algorithm~\ref{alg:estimate_prior} is at most one.
		
		\item Phase~2. Under the event $\mathcal{E}_1$, which holds with probability $1-\delta$, Algorithm~\ref{alg:find_partition} correctly terminates with probability $1-\zeta$. Thus, with probability at least $1-\delta-\zeta$, the number of rounds employed by such algorithm is of the order:
		\begin{align*}
			T_2 &\le \widetilde{\mathcal{O}}\left( \frac{n^2}{\epsilon} \log^2\left(\frac{1}{\zeta}\right)  \left(d^7(B+B_\epsilon+B_{\widehat{\mu}}) +\binom{d+n}{d}\right)  \right)  \\
			&= \widetilde{\mathcal{O}}\left( \frac{n^2}{\epsilon} \log^2(T) \left(d^7(B+B_\epsilon)+\binom{d+n}{d}\right)  \right),
		\end{align*}
		where the last equality holds because $B_{\widehat{\mu}} = \mathcal{O}(\log(\nicefrac{1}{\epsilon})+\log(d)+\log(T))$.
		As a result, by taking the expectation, the regret suffered in Phase~2 by Algorithm~\ref{alg:main_algorithm} can be upper bounded as follows:
		\begin{equation*}
			R_T^2 \le \widetilde{\mathcal{O}}\left(  
			n^{\nicefrac{3}{2}} d^3 \binom{d+n}{d} \sqrt {BT} \right),
		\end{equation*}
		since, at each round, the regret suffered during the execution of Algorithm~\ref{alg:find_partition} is at most one.
		
		\item Phase~3. Let $\tau$ be the number of rounds required by Phase~1 and Phase~2  to terminate.
		Under the events $\mathcal{E}_1$ and $\mathcal{E}_2$, which hold with probability at least $1-\delta-\zeta$, thanks to Lemma~\ref{lem:find_signaling}, the solution returned by Algorithm~\ref{alg:find_signaling_scheme} at each round $t > \tau$ is $\mathcal{O}(dn\epsilon + \nu_t)$-optimal, where we define $\nu_t = \left| \sum_{\theta \in \Theta}  \mu_\theta - \widehat{\mu}_{t,\theta} \right|$. 
		We introduce the following event:
		\begin{equation*}
			E_t=\left\{|  \mu_\theta - \widehat{\mu}_{t,\theta} | \leq \epsilon_t \,\,\,\,  \forall \theta \in \Theta  \right\},
		\end{equation*}
		where we let $\epsilon_t >0$ be defined as follows:
		\begin{equation*}
			\epsilon_t=\sqrt{\frac{\log \left(2dT /\iota\right)}{2(t-\tau)}}, \ \ \ t > \tau.
		\end{equation*}
		Then, by Hoeffding's inequality and a union bound we have:
		\begin{equation*}
			\mathbb{P} \Big( \bigcap_{t > \tau} E_t \Big) \ge 1 - \iota.
		\end{equation*}
		Thus, by setting $\iota = 1/T$, the regret suffered in Phase~3 by Algorithm~\ref{alg:main_algorithm} can be upper bounded as follows:
		\begin{align*}
			R_T^3  & \le  \sum_{t=\tau +1 }^T \left| \sum_{\theta \in \Theta}  \mu_\theta - \widehat{\mu}_{t,\theta} \right|+ \mathcal{O}\left(d\epsilon T \right)\\
			& \le  d \sum_{t= \tau +1 }^T \epsilon_t + \mathcal{O}\left(dn\epsilon T \right)\\
			&\le \widetilde{\mathcal{O}}\left( d \sqrt T +  dn\epsilon T \right) \\
			&= \widetilde{\mathcal{O}}\left( d \sqrt T +n^{\nicefrac{3}{2}} d^5\sqrt{BT} \right).
		\end{align*}
	\end{enumerate}
	As a result, the regret of Algorithm~\ref{alg:main_algorithm} is in the order of:
	\begin{equation*}
		R_T \le  \widetilde{\mathcal{O}}\left( n^{\nicefrac{3}{2}} d^3\binom{d+n}{d}\sqrt{BT} +n^{\nicefrac{3}{2}} d^5\sqrt{BT} \right)
		=\widetilde{\mathcal{O}}\left(n^{\nicefrac{3}{2}} d^3\binom{d+n}{d}\sqrt{BT}\right),
	\end{equation*}
	concluding the proof.
\end{proof}

\section{Proof of Lemma~\ref{lem:find_signaling} from  Section~\ref{sec:find_signaling}}
\label{appendix:find_signaling}
In order to prove Lemma~\ref{lem:find_signaling}, we first consider an auxiliary LP (Program~\ref{eq:lp_vertices}) that works on the vertices of the regions $\mathcal{X}_\epsilon(a)$.
This is useful to take into account the polytopes $\mathcal{X}_\epsilon(a)$ with null volume.
Indeed, for every action $a \in \mathcal{A}$ such that $\text{vol}(\mathcal{X}_\epsilon(a))=0$, Algorithm~\ref{alg:find_signaling_scheme} takes in input only a face $\mathcal{R}_\epsilon(a)$ of $\mathcal{X}_\epsilon(a)$ such that $\mathcal{V}_\epsilon(a) \subseteq V(\mathcal{R}_\epsilon(a))$.
By working on the vertices of the regions $\mathcal{X}_\epsilon(a)$, we can show that the vertices in $\mathcal{V}_\epsilon(a)$ are sufficient to compute an approximately optimal signaling scheme.

The auxiliary LP that works on the vertices is the following:
\begin{maxi!}
	{\scriptstyle \alpha \ge \mathbf{0}}{  \sum_{x \in \mathcal{V}_\epsilon } \alpha_x \sum_{\theta \in \Theta }  \mu_\theta    x_\theta u^\text{s}_\theta(a(x))}{}{}\label{eq:lp_vertices}
	\addConstraint { \sum_{x \in \mathcal{V}_\epsilon} \alpha_x x_\theta \le 1 \quad \forall \theta \in \Theta}{}{}.
\end{maxi!}

Program~\ref{eq:lp_vertices} takes in input the set of vertices $\mathcal{V}_\epsilon \coloneqq \bigcup_{a \in \mathcal{A}} V(\mathcal{X}_\epsilon(a))$, along with the corresponding best-responses $(a(x))_{x \in \mathcal{V}_\epsilon}$ and the exact prior $\mu$.
 It then optimizes over the non-negative variables $\alpha_x \ge 0$, one for vertex $x \in \mathcal{V}_\epsilon$.
These variables $\alpha_x$ act as weights for the corresponding slices $x \in \mathcal{V}_\epsilon$, identifying a non-normalized slice $\alpha_x x \in \mathcal{X}^\Hsquare$.

In the following, we show that the value of an optimal solution $v^\star$ to Program~\ref{eq:lp_vertices} is at least $v^\star \ge \textnormal{OPT} -\mathcal{O}( \epsilon n d)$.
Then, we prove that the signaling scheme $\phi$ computed by Algorithm~\ref{alg:find_signaling_scheme} achieves a principal's expected utility of at least $v^\star$ minus a quantity related to the difference between the estimated prior $\widehat{\mu}_t$ and with the actual prior $\mu$.
Thus, by considering that $v^\star \ge \textnormal{OPT} -\mathcal{O}( \epsilon n d)$, we will be able to prove Lemma~\ref{lem:find_signaling}.

As a first step, we show that it is possible to decompose each slice $x \in \mathcal{X}_\epsilon$ into a weighted sum of the vertices $x' \in \mathcal{V}_\epsilon$ without incurring a loss in the sender's utility.
Thus, a generic slice $x$ of an optimal signaling scheme can be written as a convex combination of slices $x'$ with $x' \in \mathcal{V}_\epsilon$.
This property is formalized in the following lemma.
\begin{restatable}{lemma}{VertexDecomposition}\label{lem:decomposition_vertexes}
	For every $x \in \mathcal{X}_{\epsilon}$, there exists a distribution $\alpha \in \Delta_{\mathcal{V}_\epsilon}$ such that:
	\begin{equation*}
		x_\theta = \sum_{x' \in \mathcal{V}_\epsilon} \alpha_{x'} x_\theta' \quad \forall \theta \in \Theta.
	\end{equation*}
	Furthermore, the following holds: 
	\begin{align*}
		\sum_{\theta \in \Theta} \hspace{-0.5mm}\mu_\theta x_\theta  u^\textnormal{s}_\theta(a(x)) \hspace{-0.5mm} \le \hspace{-1mm} \sum_{x' \in \mathcal{V}_\epsilon} \hspace{-1.5mm}\alpha_{x'} \hspace{-1.5mm} \sum_{\theta \in \Theta} \hspace{-0.5mm} \mu_\theta   x_\theta'  u^\textnormal{s}_\theta(a({x'})).
	\end{align*}
\end{restatable}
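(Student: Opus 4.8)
The plan is to exploit the fact that $\mathcal{X}_\epsilon(a(x))$ is a polytope containing $x$, write $x$ as a convex combination of its vertices, and then argue that the tie-breaking rule only helps the sender. Concretely, first I would observe that since $x \in \mathcal{X}_\epsilon$, there is at least one action $a$ with $x \in \mathcal{X}_\epsilon(a)$; by the sender-favorable tie-breaking convention, the action actually played is $a(x) \in \argmax_{a : x \in \mathcal{X}_\epsilon(a)} \sum_{\theta}\mu_\theta x_\theta u^{\text{s}}_\theta(a)$. In particular $x \in \mathcal{X}_\epsilon(a(x))$, a bounded polytope, so by Minkowski--Weyl / Carath\'eodory there is a distribution $\beta \in \Delta_{V(\mathcal{X}_\epsilon(a(x)))}$ with $x_\theta = \sum_{x' \in V(\mathcal{X}_\epsilon(a(x)))} \beta_{x'} x'_\theta$ for all $\theta$. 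Since $V(\mathcal{X}_\epsilon(a(x))) \subseteq \mathcal{V}_\epsilon$, extending $\beta$ by zeros gives the desired $\alpha \in \Delta_{\mathcal{V}_\epsilon}$ satisfying the first displayed equation.

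For the inequality, the key point is to compare, for each vertex $x'$ in the support of $\alpha$, the sender's utility $\sum_\theta \mu_\theta x'_\theta u^{\text{s}}_\theta(a(x))$ evaluated at the action $a(x)$ with the utility $\sum_\theta \mu_\theta x'_\theta u^{\text{s}}_\theta(a(x'))$ evaluated at the action the receiver actually plays at $x'$. Since $x' \in \mathcal{X}_\epsilon(a(x))$ (every vertex of $\mathcal{X}_\epsilon(a(x))$ lies in $\mathcal{X}_\epsilon(a(x))$), the action $a(x)$ is a best response at $x'$, i.e.\ $a(x) \in \A^\phi(s)$ for a signaling scheme having $x'$ as its slice. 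By the sender-favorable tie-breaking rule, $a(x') = a^\phi(s)$ is chosen among best responses so as to maximize the sender's expected utility, hence $\sum_\theta \mu_\theta x'_\theta u^{\text{s}}_\theta(a(x')) \ge \sum_\theta \mu_\theta x'_\theta u^{\text{s}}_\theta(a(x))$ for every $x'$ in the support. Multiplying by $\alpha_{x'} \ge 0$, summing over $x'$, and using linearity together with $x_\theta = \sum_{x'} \alpha_{x'} x'_\theta$ yields
\[
	\sum_{\theta \in \Theta} \mu_\theta x_\theta u^{\text{s}}_\theta(a(x)) = \sum_{x' \in \mathcal{V}_\epsilon} \alpha_{x'} \sum_{\theta \in \Theta} \mu_\theta x'_\theta u^{\text{s}}_\theta(a(x)) \le \sum_{x' \in \mathcal{V}_\epsilon} \alpha_{x'} \sum_{\theta \in \Theta} \mu_\theta x'_\theta u^{\text{s}}_\theta(a(x')),
\]
which is exactly the claimed inequality.

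The main obstacle I anticipate is handling the subtlety that $\mathcal{X}_\epsilon(a(x))$ may have zero volume, or that $x$ may lie on a separating hyperplane shared by several polytopes; one must be careful that the chosen vertex decomposition stays inside a single polytope $\mathcal{X}_\epsilon(a(x))$ for which $a(x)$ is indeed a (weak) best response at every vertex, so that the tie-breaking comparison above is valid. This is guaranteed precisely because $\mathcal{X}_\epsilon(a(x)) = \mathcal{X}_\epsilon \cap \bigcap_{a_j \neq a(x)} \mathcal{H}_{a(x)\,j}$ is defined by the weak inequalities $\sum_\theta x_\theta \mu_\theta(u_\theta(a(x)) - u_\theta(a_j)) \ge 0$, which are closed and therefore satisfied at all vertices of the polytope. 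A secondary, minor point is ensuring that $\X_\epsilon(a(x))$ is nonempty and bounded so the convex-combination step is legitimate: boundedness follows from $\X_\epsilon \subseteq \X = \X^\triangle \subseteq [0,1]^d$, and nonemptiness from $x$ itself being a member.
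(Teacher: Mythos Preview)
Your proposal is correct and follows essentially the same approach as the paper: write $x$ as a convex combination of the vertices of $\mathcal{X}_\epsilon(a(x))$ via Carath\'eodory, embed into $\Delta_{\mathcal{V}_\epsilon}$ using $V(\mathcal{X}_\epsilon(a(x))) \subseteq \mathcal{V}_\epsilon$, and use sender-favorable tie-breaking at each vertex to obtain the inequality. Your write-up is in fact a bit more careful about the subtleties (zero-volume regions, boundedness) than the paper's own proof.
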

\begin{proof}
	
	
	
	Let $a \coloneqq a(x)$.
	Since $x \in \mathcal{X}_\epsilon(a)$, by the Carathéodory theorem, there exists an $\alpha \in \Delta_{V(\mathcal{X}_\epsilon(a))}$ such that:
 	\begin{equation*}
 		\sum_{x' \in V(\mathcal{X}_\epsilon(a))}\alpha_{x'}x'_\theta = x_\theta \,\,\, \forall \theta \in \Theta. 
 	\end{equation*}
 	Furthermore:
 	\begin{align*}
 		\sum_{\theta \in \Theta} \mu_\theta x_\theta  u^\text{s}_\theta(a)
 		&= \sum_{\theta \in \Theta} \mu_\theta  \left( \sum_{x' \in V(\mathcal{X}_\epsilon(a))} {\alpha}_{x'} x_\theta' \right)   u^\text{s}_\theta(a).\\
 		&= \sum_{x' \in V(\mathcal{X}_\epsilon(a))} {\alpha}_{x'} \sum_{\theta \in \Theta}  \mu_\theta x_\theta'  u^\text{s}_\theta(a)\\		
 		&\le \sum_{x' \in V(\mathcal{X}_\epsilon(a))} {\alpha}_{x'} \sum_{\theta \in \Theta}  \mu_\theta x_\theta'  u^\text{s}_\theta(a(x'))\\		
 	\end{align*}
 	where the inequality holds because the receiver breaks ties in favor of the sender.
 	Finally, we observe that for each distribution over the set $V(\mathcal{X}_\epsilon(a))$ for a given $\mathcal{X}_\epsilon(a)$, we can always recover a probability distribution supported in $\mathcal{V}_\epsilon$, since $V(\mathcal{X}_\epsilon(a)) \subseteq \mathcal{V}_\epsilon$ by construction.
\end{proof}
Thanks to the the result above, in the next lemma (Lemma~\ref{lem:lp_vertices_approx_opt}) we prove that an optimal solution of Program~\ref{eq:lp_vertices} has value at least $v^\star \geq \textnormal{OPT} -10 \epsilon n d$.
 To show this, we begin by observing that there exists a set $\mathcal{J}$ of slices of the optimal signaling scheme that belong to the search space $\mathcal{X}_\epsilon$.
By applying Lemma~\ref{lem:decomposition_vertexes} to each of these slices, we obtain a feasible solution for Program~\ref{eq:lp_vertices}.
The value of this solution is at least the sender's expected utility given by the slices in $\mathcal{J}$.
Finally, thanks to the properties of the search space $\mathcal{X}_\epsilon$, we can bound the expected sender's utility provided by the slices that lie outside the search space.
 
\begin{restatable}{lemma}{EpsilonSolution1}\label{lem:lp_vertices_approx_opt}
	Under the event $\mathcal{E}_1$, the optimal solution of Program~\ref{eq:lp_vertices} has value at least $v^\star \ge \textnormal{OPT} -10 \epsilon n d$.
\end{restatable}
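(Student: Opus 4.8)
The idea is to start from an optimal \emph{direct and persuasive} signaling scheme $\phi^\star$ and turn its slices into a feasible point of Program~\ref{eq:lp_vertices}, losing only the contribution of the slices that fall outside the search space $\X_\epsilon$, which is exactly what Lemma~\ref{lem:prior_estimate}(ii) controls. First I would fix $\phi^\star$ defined over signals $\{s^a\}_{a \in \mathcal{A}}$ with slices $\widehat{x}^a \in [0,1]^d$, $\widehat{x}^a_\theta = \phi^\star_\theta(s^a)$; since $\phi^\star$ is persuasive the receiver plays $a$ under $s^a$, i.e.\ $a(\widehat{x}^a) = a$, and since sender's utility decomposes over slices, $\textnormal{OPT} = u^\text{s}(\phi^\star) = \sum_{a \in \mathcal{A}} \sum_{\theta \in \Theta} \mu_\theta \widehat{x}^a_\theta u^\text{s}_\theta(a)$, with $\sum_{a \in \mathcal{A}} \widehat{x}^a_\theta = 1$ for every $\theta \in \Theta$. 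For each $a$ with $\widehat{x}^a \neq \mathbf{0}$ I set $c_a \coloneqq \sum_{\theta \in \Theta} \widehat{x}^a_\theta \in (0,d]$ and $\bar{x}^a \coloneqq \widehat{x}^a / c_a \in \X^\triangle$; because best responses (and the sender-favoring tie break) are invariant under positive rescaling of a slice, $a(\bar{x}^a) = a$, hence $\bar{x}^a \in \X^\triangle(a)$. I then partition $\mathcal{A}$ into $\mathcal{J} \coloneqq \{ a \in \mathcal{A} \mid \widehat{x}^a \neq \mathbf{0},\ \bar{x}^a \in \X_\epsilon \}$ and its complement.

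\textbf{Dropped mass.} For $a \notin \mathcal{J}$ with $\widehat{x}^a \neq \mathbf{0}$ we have $\bar{x}^a \in \X \setminus \X_\epsilon$, so under $\mathcal{E}_1$ Lemma~\ref{lem:prior_estimate}(ii) gives $\sum_{\theta} \mu_\theta \bar{x}^a_\theta \le 10\epsilon$, and therefore $\sum_{\theta} \mu_\theta \widehat{x}^a_\theta u^\text{s}_\theta(a) \le \sum_{\theta} \mu_\theta \widehat{x}^a_\theta = c_a \sum_{\theta} \mu_\theta \bar{x}^a_\theta \le 10\epsilon c_a$. Slices equal to $\mathbf{0}$ contribute nothing, and $\sum_{a \notin \mathcal{J}} c_a \le \sum_{a \in \mathcal{A}} c_a = \sum_\theta \sum_a \widehat{x}^a_\theta = d$, so $\sum_{a \notin \mathcal{J}} \sum_{\theta} \mu_\theta \widehat{x}^a_\theta u^\text{s}_\theta(a) \le 10\epsilon d \le 10\epsilon n d$.

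\textbf{Building a feasible point.} For each $a \in \mathcal{J}$, since $\bar{x}^a \in \X_\epsilon$ (indeed $\bar{x}^a \in \X_\epsilon(a)$ as $a = a(\bar{x}^a)$), I apply Lemma~\ref{lem:decomposition_vertexes} to obtain $\alpha^a \in \Delta_{\mathcal{V}_\epsilon}$ with $\bar{x}^a_\theta = \sum_{x' \in \mathcal{V}_\epsilon} \alpha^a_{x'} x'_\theta$ for all $\theta$ and $\sum_{\theta} \mu_\theta \bar{x}^a_\theta u^\text{s}_\theta(a) \le \sum_{x' \in \mathcal{V}_\epsilon} \alpha^a_{x'} \sum_{\theta} \mu_\theta x'_\theta u^\text{s}_\theta(a(x'))$. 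I then set $\alpha_{x'} \coloneqq \sum_{a \in \mathcal{J}} c_a \alpha^a_{x'} \ge 0$. For every $\theta$, $\sum_{x' \in \mathcal{V}_\epsilon} \alpha_{x'} x'_\theta = \sum_{a \in \mathcal{J}} c_a \bar{x}^a_\theta = \sum_{a \in \mathcal{J}} \widehat{x}^a_\theta \le \sum_{a \in \mathcal{A}} \widehat{x}^a_\theta = 1$, so $\alpha$ is feasible for Program~\ref{eq:lp_vertices}. Evaluating its objective and using the per-slice inequalities, $v^\star \ge \sum_{x'} \alpha_{x'} \sum_\theta \mu_\theta x'_\theta u^\text{s}_\theta(a(x')) \ge \sum_{a \in \mathcal{J}} c_a \sum_{\theta} \mu_\theta \bar{x}^a_\theta u^\text{s}_\theta(a) = \sum_{a \in \mathcal{J}} \sum_\theta \mu_\theta \widehat{x}^a_\theta u^\text{s}_\theta(a) = \textnormal{OPT} - \sum_{a \notin \mathcal{J}} \sum_\theta \mu_\theta \widehat{x}^a_\theta u^\text{s}_\theta(a) \ge \textnormal{OPT} - 10\epsilon n d$, which is the claim.

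\textbf{Main obstacle.} The delicate points are the bookkeeping between the normalized slices living in $\X_\epsilon \subseteq \X^\triangle$ and the unnormalized slices of $\phi^\star$ --- which forces one to invoke the scale invariance of best responses so that $\bar{x}^a \in \X_\epsilon(a)$ genuinely holds and Lemma~\ref{lem:decomposition_vertexes} applies --- and checking that aggregating the per-slice vertex decompositions with the weights $c_a$ preserves the single constraint $\sum_{x'} \alpha_{x'} x'_\theta \le 1$ of Program~\ref{eq:lp_vertices}, which is precisely where $\sum_{a} \widehat{x}^a_\theta = 1$ enters. Zero-volume polytopes $\X_\epsilon(a)$ need no special treatment, since both Program~\ref{eq:lp_vertices} and Lemma~\ref{lem:decomposition_vertexes} already operate on the vertex set $\mathcal{V}_\epsilon$, and the receiver's tie breaking is already absorbed into the sender-favoring inequality of Lemma~\ref{lem:decomposition_vertexes}.
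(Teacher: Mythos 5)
Your proof is correct and follows essentially the same route as the paper's: take an optimal direct (persuasive) signaling scheme, normalize its slices, split actions according to whether the normalized slice lands in $\X_\epsilon$, decompose the retained slices over $\mathcal{V}_\epsilon$ via Lemma~\ref{lem:decomposition_vertexes}, and bound the dropped contribution using property (ii) of Lemma~\ref{lem:prior_estimate}. The only (cosmetic) difference is in the dropped-mass bookkeeping, where you use $\sum_a c_a = d$ to get the slightly sharper bound $10\epsilon d$, whereas the paper bounds each weight by $d$ separately and obtains $10\epsilon n d$; both suffice for the stated claim.
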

\begin{proof}
	In the following we let $\phi_\theta \in \Delta_{\A} $ for each $\theta \in \Theta$ be an optimal signaling scheme, where we assume, without loss of generality, such a signaling scheme to be direct, meaning that $\mathcal{S}=\mathcal{A}$ and $a \in \mathcal{A}^\phi(a)$ for every action $a \in \mathcal{A}$.
	Furthermore, for each action $a \in \textnormal{supp}(\phi)$, we define:
	\begin{equation*}
		x^{a}_{\theta} = \frac{\phi_\theta(a) }{\sum_{\theta \in \Theta}\phi_\theta(a)  } \,\,\forall \theta \in \Theta \,\,\,\, \textnormal{and}\,\,\,\, \alpha_{x^{a}} = \sum_{\theta \in \Theta}\phi_\theta(a).
	\end{equation*} 	
	We observe that each $x^{a} \in \mathcal{X}(a)$, indeed we have:
	\begin{equation*}
		\sum_{\theta \in \Theta} \mu_\theta x_\theta  u_\theta(a) = \frac{1}{\alpha_{x^{a}} } \sum_{\theta \in \Theta} \mu_\theta \phi_\theta(a)  u_\theta(a) \ge \frac{1}{\alpha_{x^{a}}} \sum_{\theta \in \Theta} \mu_\theta \phi_\theta(a) u_{\theta}(a')=\sum_{\theta \in \Theta} \mu_\theta x_\theta  u_\theta(a').
	\end{equation*}
	for every action $a' \in \mathcal{A}$. 
	We define the subset of actions $\A'\subseteq \A$ in a way that if $a \in \A'$, then $x^{a} \in \mathcal{X}_\epsilon $, \emph{i.e.}, $\A' \coloneqq \{a \in \A \mid x^a \in \mathcal{X}_\epsilon\}$. 
	Then, we let $\A'' \coloneqq  \textnormal{supp}(\phi) \setminus \A' $.
	Furthermore, for each $a \in \A'$, thanks to Lemma~\ref{lem:decomposition_vertexes}, there exists a distribution $\alpha^{a} \in \Delta_{\mathcal{V}_\epsilon}$ such that:
		
	\begin{equation*}
		x_\theta^{a} = \sum_{x' \in \mathcal{V}_\epsilon} \alpha^{a}_{x'} x_\theta' ,
	\end{equation*}
	%
	and the following holds:
	\begin{align}
	 \sum_{\theta \in \Theta} \mu_\theta x^{a}_\theta  u_\theta(a)
		 &= \sum_{\theta \in \Theta} \mu_\theta  \left( \sum_{x' \in \mathcal{V}_\epsilon} \alpha^{a}_{x'} x_\theta' \right)   u^\text{s}_\theta(a). \nonumber \\
		 &= \sum_{\theta \in \Theta} \sum_{x' \in \mathcal{V}_\epsilon} \mu_\theta\alpha^{a}_{x'}   x_\theta'  u^\text{s}_\theta(a) \nonumber \\		
		 &\le \sum_{\theta \in \Theta} \sum_{x' \in \mathcal{V}_\epsilon} \mu_\theta\alpha^{a}_{x'} x_\theta'   u^\text{s}_\theta(a({x'})).  \label{eq:decomposition}
	\end{align}
	%
	We also define $\alpha^{\star} : \mathcal{V}_\epsilon \to \mathbb{R}_{+} $ as follows:
	\begin{equation*}
		\alpha^{\star}_{x'} = \sum_{a \in \A'} {\alpha}^{a}_{x'} \alpha_{x^{a}},
	\end{equation*}
	for each $x' \in \mathcal{V}_\epsilon$. First, we show that $\alpha^{\star} : \mathcal{V}_\epsilon \to \mathbb{R}_{+} $ is a feasible solution to LP~\ref{eq:lp_vertices}. Indeed, for each $\theta \in \Theta$, it holds:
	\begin{align*}
		\sum_{x' \in \mathcal{V}_\epsilon} \alpha^{\star}_{x'}   x'_\theta
		& =  \sum_{x' \in \mathcal{V}_\epsilon}  \sum_{a \in \A'} {\alpha}^{a}_{x'} \alpha_{x^{a}}  x'_\theta \\ 
		& =   \sum_{a \in \A'}  \alpha_{x^{a}} \sum_{x' \in \mathcal{V}_\epsilon}  {\alpha}^{a}_{x'} x'_\theta \\ 
		& =   \sum_{a \in \A'} \alpha_{x^{a}} x^{a}_\theta \\
		&=   \sum_{a \in \A'} \phi_\theta(a) \le   1  \\ 
	\end{align*}
	%
	Where the equalities above holds thanks to Equation~\ref{eq:decomposition} and the definition of $\alpha^\star$.
	Then, we show that the utility achieved by $\alpha^{\star}: \mathcal{V}_\epsilon \to \mathbb{R}^{m}_{+}$ is greater or equal to $\textnormal{OPT} - 10 \epsilon d $. Formally, we have:
 	\begin{align*}
		\sum_{x' \in \mathcal{V}_\epsilon} \alpha^{\star}_{x'} \sum_{\theta \in \Theta} \mu_\theta  x'_\theta  u^\text{s}_\theta (a({x'}))
		& =  \sum_{x' \in \mathcal{V}_\epsilon}  \sum_{a \in \A'} \alpha_{x^{a}} {\alpha}^{a}_{x'}  \sum_{\theta \in \Theta}  \mu_\theta  x'_\theta  u^\text{s}_\theta(a({x'})) \\ 
		& =    \sum_{a \in \A'} \alpha_{x^{a}}  \sum_{x' \in \mathcal{V}_\epsilon} \sum_{\theta \in \Theta}  \mu_\theta {\alpha}^{a}_{x'} x'_\theta  u^\text{s}_\theta(a({x'}))\\ 
		& \ge   \sum_{a \in \A'} \alpha_{x^{a}} \sum_{\theta \in \Theta} x^{a}_\theta \mu_\theta  u^\text{s}_\theta(a)\\ 
		& =  \sum_{a \in \A'} \sum_{\theta \in\Theta} \mu_\theta  \phi_\theta(a) u^\text{s}_\theta(a) \\
		& = \textnormal{OPT} - \sum_{a \in \A''} \sum_{\theta \in \Theta} \mu_\theta  \phi_\theta(a) u^\text{s}_\theta(a)\\
		& = \textnormal{OPT} - \sum_{a \in \A''} \sum_{\theta \in \Theta} \mu_\theta  \alpha_{x^a} x^{a}_\theta u^\text{s}_\theta(a) \\
		& \ge \textnormal{OPT} -d \sum_{a \in \A''} \sum_{\theta \in \Theta} \mu_\theta  x^{a}_\theta u^\text{s}_\theta(a) \\
		& \ge \textnormal{OPT} -10 \epsilon n d. \\
	\end{align*}	 
	Where the first inequality holds thanks to Inequality~\eqref{eq:decomposition}, the second inequality holds since $\alpha_{x^a} \le d$ and the last inequality holds since, for each $x \not \in \mathcal{X}_\epsilon$, it holds $\sum_{\theta \in \Theta} \mu_\theta  x^{a}_\theta(a) \le 10 \epsilon$, under the event $\mathcal{E}_1$ and $x^a \notin \mathcal{X}_\epsilon$ for every $a \in \A''$.
	Consequently, $\alpha^\star$ is a feasible solution to LP~\ref{eq:lp_vertices} and provides, under the event $\mathcal{E}_1$, a value of at least $\textnormal{OPT}- 10 \epsilon n d$. 
	As a result, under the event $\mathcal{E}_1$ the optimal solution of LP~\ref{eq:lp_vertices} has value $v^\star \geq \textnormal{OPT}- 10 \epsilon n d$, concluding the proof.
%
\end{proof}

\FindSignaling*
\begin{proof}
	
	We observe that under the events $\mathcal{E}_1$ and $\mathcal{E}_2$, the collection $\mathcal{R}_\epsilon = \{\mathcal{R}_\epsilon(a)\}_{a \in \mathcal{A}}$ is composed of faces $\mathcal{R}_\epsilon(a)$ of $\mathcal{X}_\epsilon(a)$ (possibly the improper face $\mathcal{X}\epsilon(a)$ itself) such that every vertex $x \in V(\mathcal{X}_\epsilon(a))$ that satisfies $a(x)=a$ belongs to $\mathcal{R}_\epsilon(a)$.
	The following statements hold under these two events.
	
	As a first step, we prove that given a feasible solution $\alpha = (\alpha_x)_{x \in \mathcal{V}_\epsilon}$ to Program~\ref{eq:lp_vertices}, one can construct a feasible solution $\varphi$ to Program~\ref{eq:lp_h_repr} with the same value.
	In particular, we consider a solution $\varphi = (\widetilde{x}^a)_{a \in \mathcal{A}}$ defined as:
	\begin{equation*}
		\widetilde{x}^a_\theta \coloneqq \sum_{x \in \mathcal{V}_\epsilon(a)} \alpha_x x_\theta \quad \forall a \in \mathcal{A}, \theta \in \Theta.
	\end{equation*}
	
	We observe that for every $a \in \mathcal{A}$ and $\theta \in \Theta$ we can bound $\widetilde{x}^a_\theta$ as follows:
	\begin{align*}
		0 \le \widetilde{x}^a_\theta = \sum_{x \in \mathcal{V}_\epsilon(a)} \alpha_x x_\theta \le \sum_{x \in \mathcal{V}_\epsilon} \alpha_x x_\theta \le 1,
	\end{align*}
	where the last inequality holds due to the constraints of Program~\ref{eq:lp_vertices}.
	Consequently, the vectors $\widetilde{x}^a$ belong to $\mathcal{X}^{\Hsquare}$.
	
	Now we show that $\widetilde{x}^a$ belongs to $\mathcal{R}^\Hsquare_\epsilon(a)$ for every $a \in \mathcal{A}$.
	This holds trivially for every action $a \in \mathcal{A}$ such that $\sum_{x' \in \mathcal{V}_\epsilon(a)} \alpha_{x'} = 0$, as $\widetilde{x}^a = \mathbf{0} \in \mathcal{R}^\Hsquare_\epsilon(a)$.
	Consider instead an action $a \in \mathcal{A}$ such that $\sum_{x' \in \mathcal{V}_\epsilon(a)} \alpha_{x'} > 0$, and let us define the coefficient:
	\begin{equation*}
		\beta^a_x \coloneqq \frac{\alpha_x}{\sum_{x' \in \mathcal{V}_\epsilon(a)} \alpha_{x'}},
	\end{equation*}
	for every vertex $x \in \mathcal{V}_\epsilon(a)$.
	One can easily verify that $\beta^a \in \Delta_{\mathcal{V}_\epsilon(a)}$.
	Now consider the normalized slice:
	\begin{equation*}
		\widetilde{x}^{\text{N},a} \coloneqq \sum_{x \in \mathcal{V}_\epsilon(a)} \beta^a_x x.
	\end{equation*}
	This slice belongs to $\mathcal{R}^{\triangle}_\epsilon(a) \coloneqq \mathcal{R}_\epsilon(a)$, as it is the weighted sum of the vertices $\mathcal{V}_\epsilon(a) \subseteq V(\mathcal{R}^{\triangle}_\epsilon(a))$ with weights $\beta^a \in \Delta_{\mathcal{V}_\epsilon(a)}$.
	Furthermore, we can rewrite the component $\widetilde{x}^a$ of the solution $\phi$ as:
	\begin{equation*}
		\widetilde{x}^a = \widetilde{x}^{\text{N},a} \sum_{x \in \mathcal{V}_\epsilon(a)} \alpha_x.
	\end{equation*} 
	Thus, by considering that $\widetilde{x}^{\text{N},a} \in \mathcal{R}^{\triangle}_\epsilon(a)$ and $\widetilde{x}^a \in \mathcal{X}^{\Hsquare}$, we have that $\widetilde{x}^a \in \mathcal{R}^{\Hsquare}_\epsilon(a)$.

	Finally, since $\alpha$ is a feasible solution for Program~\ref{eq:lp_vertices}, we can observe that:
	\begin{equation*}
		\sum_{a \in \mathcal{A}} \widetilde{x}^a_\theta = \sum_{a \in \mathcal{A}} \sum_{x \in \mathcal{V}_\epsilon(a)} \alpha_x x = \sum_{x \in \mathcal{V}_\epsilon} \alpha_x x \le 1.
	\end{equation*}
	As a result, $\varphi = (\widetilde{x}^a)_{a \in \mathcal{A}}$ is a feasible solution to Program~\ref{eq:lp_h_repr}.
	
	If the estimator $\widehat{\mu}_t$ coincides with the exact prior $\mu$, then direct calculations show that the solution $\varphi$ to Program~\ref{eq:lp_h_repr} achieves the same value of the solution $\alpha$ to Program~\ref{eq:lp_vertices}.
	
	It follows that, when $\widehat{\mu}_t = \mu$, the optimal solution of Program~\ref{eq:lp_h_repr} has at least the same value of the optimal solution of Program~\ref{eq:lp_vertices}.
	
	In order to conclude the proof, we provide a lower bound on the utility of the signaling scheme computed by Algorithm~\ref{alg:find_signaling_scheme}.
	Let $\phi^{\textnormal{LP}}$ be the signaling scheme computed by Algorithm~\ref{alg:find_signaling_scheme}, while let $\Psi = (x^{\text{LP},a})_{a \in \mathcal{A}}$ be the optimal solution to Program~\ref{eq:lp_h_repr}.
	Furthermore, we let $\psi = (x^{\text{E},a})_{a \in \mathcal{A}}$ be the optimal solution of Program~\ref{eq:lp_h_repr} and $\phi^\text{E}$ the signaling scheme computed by Algorithm~\ref{alg:find_signaling_scheme} when the prior estimator coincides \emph{exactly} with the prior itself, \emph{i.e.}, $\widehat{\mu}:= \widehat \mu_t = \mu$.
	
	Since $x^{\text{LP},a} \in \mathcal{R}^{\Hsquare}_\epsilon(a)$ for every $a \in \mathcal{A}$, we have that $a \in \mathcal{A}^{\phi^{\textnormal{LP}}}(a)$.\footnote{Observe that when $\mathcal{R}^{\triangle}_\epsilon(a) = \emptyset$ and $\mathcal{R}^{\Hsquare}_\epsilon(a) = \{\mathbf{0}\}$, we have $x^{\text{LP},a} = \mathbf{0}$. Thus, $x^{\text{LP},a}$ does not contribute to the sender's utility, $s^a \notin \text{supp}(\phi^{\text{LP}})$ and $\mathcal{A}^{\phi^{\text{LP}}}(s^a) = \mathcal{A}$ by definition.}
	Breaking ties in favor of the sender, the action $a^{\phi^{\textnormal{LP}}}(s^a)$ is such that:
	\begin{equation*}
		\sum_{\theta \in \Theta}  \mu_\theta x^{\text{LP},a}_{\theta} u^\textnormal{s}_\theta (a^{\phi^{\textnormal{LP}}}(s^a)) \ge \sum_{\theta  \in \Theta}  \mu_\theta x^{\text{LP},a}_{\theta} u^\textnormal{s}_\theta (a).
	\end{equation*}
	Then: 
	\begin{align*}
		u (\phi^{\textnormal{LP}}) & =  \sum_{s \in \mathcal{S} }  \sum_{\theta  \in \Theta}  \mu_\theta \phi^{\textnormal{LP}}_{\theta}(s) u^\textnormal{s}_\theta (a^{\phi^{\textnormal{LP}}}(s)) \\
		&\ge \sum_{s \in \mathcal{S} \setminus \{s^\star\} }  \sum_{\theta  \in \Theta}  \mu_\theta \phi^{\textnormal{LP}}_{\theta}(s) u^\textnormal{s}_\theta (a^{\phi^{\textnormal{LP}}}(s)) \\
		&= \sum_{a \in \mathcal{A}}  \sum_{\theta  \in \Theta}  \mu_\theta x^{\text{LP},a}_{\theta} u^\textnormal{s}_\theta (a^{\phi^{\textnormal{LP}}}(s^a)) \\
		&\ge \sum_{a \in \mathcal{A}}  \sum_{\theta  \in \Theta}  \mu_\theta x^{\text{LP},a}_{\theta} u^\textnormal{s}_\theta (a) \\
		&\ge \sum_{a \in \mathcal{A}}  \sum_{\theta  \in \Theta}  \widehat{\mu}_\theta x^{\text{LP},a}_{\theta} u^\textnormal{s}_\theta (a) - \left|\sum_{\theta \in \Theta} \widehat{\mu}_\theta - \mu_\theta \right| \\
		&\ge \sum_{a \in \mathcal{A}}  \sum_{\theta  \in \Theta}  \widehat{\mu}_\theta x^{\text{E},a}_{\theta} u^\textnormal{s}_\theta (a) - \left|\sum_{\theta \in \Theta} \widehat{\mu}_\theta - \mu_\theta \right| \\
		&\ge \sum_{a \in \mathcal{A}}  \sum_{\theta  \in \Theta}  \mu_\theta x^{\text{E},a}_{\theta} u^\textnormal{s}_\theta (a) - 2\left|\sum_{\theta \in \Theta} \widehat{\mu}_\theta - \mu_\theta \right|.
	\end{align*}
	We recall that the value of $\psi = (x^{\text{E},a})_{a \in \mathcal{A}}$ is at least the optimal value of Program~\ref{eq:lp_vertices} when $\widehat{\mu} = \mu$, and such a value is at least $v^\star \ge \text{OPT}-10\epsilon n d$ according to Lemma~\ref{lem:lp_vertices_approx_opt}, 
	Thus, we have that:
	\begin{align*}
		u (\phi^{\textnormal{LP}}) \ge \sum_{a \in \mathcal{A}}  \sum_{\theta  \in \Theta}  \mu_\theta x^{\text{E},a}_{\theta} u^\textnormal{s}_\theta (a) - 2\left|\sum_{\theta \in \Theta} \widehat{\mu}_\theta - \mu_\theta \right| \ge \text{OPT}-10\epsilon n d - 2\left|\sum_{\theta \in \Theta} \widehat{\mu}_\theta - \mu_\theta \right|,
	\end{align*}
	concluding the proof.
\end{proof}

\section{Omitted proofs and sub-procedures of Phase~2}\label{appendix:find_partition}

\subsection{\texttt{Action-Oracle}}
The goal of the $\texttt{Action-Oracle}$ procedure (Algorithm~\ref{alg:action_oracle}) is to assign the corresponding best-response to a slice $x \in \mathcal{X}_\epsilon$ received as input.
In order to do so, it repeatedly commits to a signaling scheme $\phi$ such that $x$ is the slice of $\phi$ with respect to the signal $s_1$.
When the signal $s_1$ is sampled, the procedure returns the best-response $a(x)$.

\begin{algorithm}[H]
	\caption{\texttt{Action-Oracle}}\label{alg:action_oracle}
	\begin{algorithmic}[1]
		\Require $x \in \mathcal{X}_\epsilon$
		\State $\phi_{\theta}(s_1) \gets x_\theta$ and $\phi_{\theta}(s_2) \gets 1- x_\theta$ $\forall \theta \in \Theta$
		\Do
		\State Commit to $\phi^t=\phi$, observe $\theta^t$, and send $s^t$
		\State Observe feedback $a^t$
		\State $a \gets a^t$
		\doWhile{$s^t \ne s_1$}
		\State \textbf{return} $a$						
	\end{algorithmic}
\end{algorithm}

In the following, for the sake of analysis, we introduce the definition of a \emph{clean event} under which the $\texttt{Action-Oracle}$ procedure always returns the action $a(x) \in \A$, as formally stated below.
\begin{definition}[Clean event of \texttt{Action-Oracle}]
	We denote $\mathcal{E}^\textnormal{a}$ as the event in which Algorithm~\ref{alg:action_oracle} correctly returns the follower's best response $a(x)$ whenever executed.
\end{definition}
In the proof of Lemma~\ref{lem:final_partition} we show that, thanks to Lemma~\ref{lem:rounds_signal} and the definition of $\mathcal{X}_\epsilon$, it is possible to bound the number of rounds required to Algorithm~\ref{alg:action_oracle} to ensure that it always returns the best response $a(x) \in \A$ with high probability.

\subsection{\texttt{Find-Polytopes}}
\begin{algorithm}[H]
	\caption{\texttt{Find-Polytopes}}\label{alg:find_partition}
	\begin{algorithmic}[1]
		\Require Search space $\mathcal{X}_\epsilon \subseteq \mathcal{X}$, parameter $\zeta \in (0,1)$
		\State $(\mathcal{C},\{\mathcal{X}_\epsilon(a)\}_{a \in \mathcal{C}}) \gets \texttt{Find-Fully-Dimensional-Regions}(\mathcal{X}_\epsilon,\zeta)$
		\ForAll{$a_j \in \mathcal{A} \setminus \mathcal{C}$} \label{line:find_partition_faces_loop}
		\State $\mathcal{R}_\epsilon(a_j) \gets \texttt{Find-Face}(\mathcal{C},\{\mathcal{X}_\epsilon(a)\}_{a \in \mathcal{C}},a_j)$
		\EndFor
		
		\State $\mathcal{R}_\epsilon(a_k) \gets \mathcal{X}_\epsilon(a_k) \quad \forall a_k \in \mathcal{C}$

		\State \textbf{return} $ \{\mathcal{R}_\epsilon(a) \}_{a \in \mathcal{A}}$.
	\end{algorithmic}
\end{algorithm}

Algorithm~\ref{alg:find_partition} can be divided in two parts.
First, by means of Algorithm~\ref{alg:find_non_zero_vol}, it computes the polytopes $\mathcal{R}_\epsilon(a) = \mathcal{X}_\epsilon(a)$ with volume strictly larger than zero.
This procedure is based on the algorithm developed by~\cite{bacchiocchi2024sample} for Stackelberg games.
The main differences are the usage of $\texttt{Action-Oracle}$ to query a generic normalized slice, and some technical details to account for the shape of the search space $\mathcal{X}_\epsilon$.

In the second part (loop at Line~\ref{line:find_partition_faces_loop} Algorithm~\ref{alg:find_partition}), it finds, for every polytope $\mathcal{X}_{\epsilon}(a)$ with null volume, a face $\mathcal{R}_\epsilon(a)$ such that $\mathcal{V}_\epsilon(a) \subseteq \mathcal{R}_\epsilon(a)$.
Let us remark that $\mathcal{R}_\epsilon(a)$ could be the improper face $\mathcal{X}_\epsilon(a)$ itself, and it is empty if $\mathcal{V}_\epsilon(a) = \emptyset$.

\FindPartition*
\begin{proof}
	In the following we let $L=B +B_\epsilon +B_{\widehat{\mu}}$.
	
	As a first step, Algorithm~\ref{alg:find_partition} invokes the procedure $\texttt{Find-Fully-Dimensional-Regions}$.
	Thus, according to Lemma~\ref{lem:find_non_zero_vol}, under the event $\mathcal{E}^\textnormal{a}$, with probability at least $1-\nicefrac{\zeta}{2}$, Algorithm~\ref{alg:find_partition} computes every polytope $\mathcal{X}_\epsilon(a)$ with volume larger than zero by performing at most:
	\begin{equation*}
		C_1 = \widetilde{\mathcal{O}}\left(n^2\left(d^7L\log(\nicefrac{1}{\zeta})+\binom{d+n}{d}\right)\right).	
	\end{equation*}
	calls to the $\texttt{Action-Oracle}$ procedure.
	Together with these polytopes, it computes the set $\mathcal{C} \subseteq \mathcal{A}$ containing the actions $a$ such that $\text{vol}(\mathcal{X}_\epsilon(a))>0$.
	
	Subsequently, Algorithm~\ref{alg:find_partition} employs the procedure $\texttt{Find-Face}$ at most $n$ times and, according to Lemma~\ref{lem:find_face}, it computes the polytopes $\mathcal{R}_\epsilon(a_j)$ for every $a_j \notin \mathcal{C}$.
	Overall, this computation requires:
	\begin{equation*}
		C_2 = n^2 \binom{d+n}{d}
	\end{equation*} 
	calls to the $\texttt{Action-Oracle}$ procedure.
	Thus, under the event $\mathcal{E}^\textnormal{a}$ and with probability at least $1-\nicefrac{\zeta}{2}$, Algorithm~\ref{alg:find_partition} correctly computes the polytopes $\mathcal{R}_\epsilon(a_j)$ for every action $a_j \in \mathcal{A}$.
	Furthermore, the number of calls $C \ge 0$ to the $\texttt{Action-Oracle}$ procedure can be upper bounded as:
	\begin{equation*}
		C \coloneqq C_1 + C_2 \le \widetilde{\mathcal{O}}\left(n^2\left(d^7L\log(\nicefrac{1}{\zeta})+\binom{d+n}{d}\right)\right).
	\end{equation*}
	
	Moreover, by setting $\rho = \zeta / 2C$, and thanks to Lemma~\ref{lem:rounds_signal}, with a probability of at least $1-\rho$, every execution of $\texttt{Action-Oracle}$ requires at most $N \ge 0$ rounds, where $N$ can be bounded as follows:
	\begin{equation*}
		N \le \mathcal{O}\left(\frac{\log(1/\rho)}{\epsilon}\right)
		=\mathcal{O}\left(\frac{1}{\epsilon}\log\left(\frac{2C}{\zeta}\right)\right).
	\end{equation*}
	Consequently, since the number of calls to the $\texttt{Action-Oracle}$ procedure is equal to $C$, by employing a union bound, the probability that each one of these calls requires $N$ rounds to terminate is greater than or equal to:
	\begin{equation*}
		1-C\rho = 1-\frac{\zeta}{2C}C = 1-\frac{\zeta}{2}.
	\end{equation*}
	To conclude the proof, we employ an union bound over the probability that every execution of $\texttt{Action-Oracle}$ terminates in at most $N$ rounds, and the probability that Algorithm~\ref{alg:find_partition} performs $C$ calls to the $\texttt{Action-Oracle}$ procedure. 
	Since the probability that each one of these two events hold is at least $1-\nicefrac{\zeta}{2}$, with probability at least $1-\zeta$, Algorithm~\ref{alg:find_partition} correctly terminates by using a number of samples of the order:
	\begin{equation*}
		\widetilde{\mathcal{O}}\left(\frac{C}{\epsilon}\log\left(\frac{2C}{\zeta}\right)\right) \le
		\widetilde{\mathcal{O}}\left( \frac{n^2}{\epsilon} \log^2\left(\frac{1}{\zeta}\right) \left(d^7L +\binom{d+n}{d}\right) \right),
	\end{equation*}
	concluding the proof.
\end{proof}

\subsection{\texttt{Find-Fully-Dimensional-Regions}}
\label{sec:find_non_zero_vol}
\begin{algorithm}[H]
	\caption{\texttt{Find-Fully-Dimensional-Regions}}\label{alg:find_non_zero_vol}
	\begin{algorithmic}[1]
		\Require Search space $\mathcal{X}_\epsilon \subseteq \mathcal{X}$, parameter $\delta \in (0,1)$
		\State $\delta \gets \nicefrac{\zeta}{2n^2(2(d+n)+n)}$\label{line:delta}
		\State $\mathcal{C} \gets \varnothing$ 
		\While{$\bigcup_{a_j \in \mathcal{C}} \mathcal{U}(a_j) \ne \mathcal{X}_\epsilon$ }\label{line:partition_loop1}
		\State $x^{\text{int}} \gets $ Sample a point from $\text{int} \big(\mathcal{X}_\epsilon \setminus \bigcup_{a_k \in \mathcal{C}} \mathcal{U}(a_k) \big) $ \label{line:sample_point}
		\State $a_j \gets \texttt{Action-Oracle}(x^\text{int})$
		\State $\mathcal{U}(a_j) \gets \mathcal{X}_\epsilon$  
		\State $B_{x} \gets$ Bit-complexity of $x^{\text{int}}$
		\State $\lambda \gets d 2^{-d(B_{x}+4(B +B_\epsilon +B_{\widehat{\mu}}))-1}$  
		\ForAll{$v \in V(\mathcal{U}(a_j))$}\label{line:partition_loop3}
		\State $x \gets \lambda x^\text{int} + (1-\lambda) v$ \label{line:convex_comb}
		\State $a \leftarrow \texttt{Action-Oracle}(x)$\label{line:d_j}
		\If{$a \neq a_j$}
		\State $ H_{jk} \gets\texttt{Find-Hyperplane} (a_j,\mathcal{U}(a_j), x^\text{int}, v, \delta)$ \label{line:compute_hyperpane}
		\State $\mathcal{U}(a_j) \leftarrow \mathcal{U}(a_j) \cap \mathcal{H}_{jk}$  
		\Else
		\State restart the for-loop at Line~\ref{line:partition_loop3}
		\EndIf
		\EndFor
		\State $\mathcal{C} \gets \mathcal{C}  \cup \{a_j\}$ 
		\State $\mathcal{X}_\epsilon(a_j) \gets \mathcal{U}(a_j)$
		\EndWhile
		\State \textbf{return} $\mathcal{C},\{\mathcal{X}_\epsilon(a_j)\}_{a_j \in \mathcal{C}}$
	\end{algorithmic}
\end{algorithm}

At a high level, Algorithm~\ref{alg:find_non_zero_vol} works by keeping track of a set $\mathcal{C} \subseteq \A$ of closed actions, meaning that the corresponding polytope $\mathcal{X}_\epsilon(a)$ has been completely identified.
First, at Line~\ref{line:sample_point}, Algorithm~\ref{alg:find_non_zero_vol} samples at random a normalized slice $x^\textnormal{int}$ from the interior of one of the polytopes $\mathcal{X}_\epsilon(a_j)$ that have not yet been closed and queries it, observing the best-response $a_j \in \A$. 
Then, it initializes the entire $\mathcal{X}_\epsilon$ as the upper bound $\mathcal{U}(a_j)$ of the region $\mathcal{X}_\epsilon(a_j)$. 
As a further step, to verify whether the upper bound $\mathcal{U}(a_j) $ coincides with $\mathcal{X}_\epsilon(a_j)$, Algorithm~\ref{alg:find_non_zero_vol} queries at Line~\ref{line:d_j} one of the vertices of the upper bound $\mathcal{U}(a_j)$. 
Since the same vertex may belong to the intersection of multiple regions, the $\texttt{Action-Oracle}$ procedure is called upon an opportune convex combination of $x^\textnormal{int}$ and the vertex $v$ itself (Line~\ref{line:convex_comb}). 
If the vertex does not belong to $\mathcal{X}_\epsilon(a_j)$, then a new separating hyperplane can be computed (Line~\ref{line:compute_hyperpane}) and the upper bound $\mathcal{U}(a_j)$ is updated accordingly. 
In this way, the upper bound $\mathcal{U}(a_j)$ is refined by finding new separating hyperplanes until it coincides with the polytope $\mathcal{X}_\epsilon(a_j) $. 
Finally, such a procedure is iterated for all the receiver's actions $a_j \in \mathcal{A}$ such that $\textnormal{vol}(\mathcal{X}_\epsilon(a_j) )>0$, ensuring that all actions corresponding to polytopes with volume larger than zero are \emph{closed}.

We observe that the estimator $\widehat{\mu}_t$ is updated during the execution of Algorithm~\ref{alg:find_non_zero_vol} according to the observed states.
However, let us remark that the search space $\mathcal{X}_\epsilon = \left\{ x \in \X \mid \sum_{\theta  \in \widetilde{\Theta}} \widehat{\mu}_{\theta}x_\theta \ge 2\epsilon \right\}$ does \emph{not} change during the execution of this procedure.


\begin{restatable}{lemma}{findNonZeroVol}
	\label{lem:find_non_zero_vol}
	Given in input $\mathcal{X}_\epsilon \subseteq \mathcal{X}$ and $\zeta \in (0,1)$, then under the event $\mathcal{E}^{\textnormal{a}}$ with probability at least $1 -\nicefrac{\zeta}{2}$ Algorithm~\ref{alg:find_non_zero_vol} computes the collection of polytopes $\{\mathcal{X}_\epsilon(a_j)\}_{a_j \in \mathcal{C}}$ with volume larger than zero, and the corresponding set of actions $\mathcal{C}$.
	Furthermore, it employs at most:
	\begin{equation*}
		\widetilde{\mathcal{O}}\left(n^2\left(d^7L\log(\nicefrac{1}{\zeta})+\binom{d+n}{d}\right)\right)
	\end{equation*}
	calls to the $\textnormal{\texttt{Action-Oracle}}$ procedure.
\end{restatable}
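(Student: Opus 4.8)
The plan is to argue everything under the clean event $\mathcal{E}^{\textnormal{a}}$ (so that each \texttt{Action-Oracle} call returns the true best response) together with the high-probability event that every call to \texttt{Find-Hyperplane} succeeds, establishing in turn correctness, the bound on the number of \texttt{Action-Oracle} calls, and the failure probability.

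\textbf{Correctness.} I would prove, by induction on the outer \textbf{while} loop, the invariant that at the start of each iteration the closed polytopes $\{\mathcal{X}_\epsilon(a_k)\}_{a_k\in\mathcal{C}}$ coincide with the true best-response polytopes of the actions in $\mathcal{C}$, and $\bigcup_{a_k\in\mathcal{C}}\mathcal{X}_\epsilon(a_k)$ equals the union of all \emph{full-dimensional} true polytopes among these actions. For a single iteration: the point $x^{\textnormal{int}}$ sampled at Line~\ref{line:sample_point} lies in the interior of the uncovered region, so with probability one it lies in the relative interior of a full-dimensional polytope $\mathcal{X}_\epsilon(a_j)$ and off every separating hyperplane, whence $\texttt{Action-Oracle}(x^{\textnormal{int}})=a_j$ with $\textnormal{vol}(\mathcal{X}_\epsilon(a_j))>0$. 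The inner \textbf{for} loop at Line~\ref{line:partition_loop3} maintains $\mathcal{X}_\epsilon(a_j)\subseteq\mathcal{U}(a_j)\subseteq\mathcal{X}_\epsilon$: it starts from $\mathcal{U}(a_j)=\mathcal{X}_\epsilon$ and each update at Line~\ref{line:compute_hyperpane} intersects $\mathcal{U}(a_j)$ with a halfspace $\mathcal{H}_{jk}$ that is face-defining for $\mathcal{X}_\epsilon(a_j)$, by the exactness guarantee of \texttt{Find-Hyperplane}. The crux will be the calibration of $\lambda$: for each vertex $v$ of the current $\mathcal{U}(a_j)$, the query point $x=\lambda x^{\textnormal{int}}+(1-\lambda)v$ of Line~\ref{line:convex_comb} satisfies $a(x)=a_j$ \emph{iff} $v\in\mathcal{X}_\epsilon(a_j)$. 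One direction is convexity plus tie-breaking (if $v\in\mathcal{X}_\epsilon(a_j)$ then the whole segment $[x^{\textnormal{int}},v]$ lies in $\mathcal{X}_\epsilon(a_j)$, and $x^{\textnormal{int}}$ being interior forces the best response to be $a_j$). For the converse, if $v\notin\mathcal{X}_\epsilon(a_j)$ then $v$ violates some constraint cutting out $\mathcal{X}_\epsilon(a_j)$; since the coefficients of $H_{jk}$ are the numbers $\mu_\theta(u_\theta(a_j)-u_\theta(a_k))$ of bit-complexity $O(L)$, the hyperplanes defining $\mathcal{X}_\epsilon$ have bit-complexity $O(B_\epsilon+B_{\widehat\mu})$, and a vertex $v$ solves $d-1$ such equations together with $\sum_\theta x_\theta=1$ (hence has bit-complexity $O(dL)$), the violation is by at least $2^{-O(dL)}$; the choice $\lambda\le d\,2^{-d(B_x+4L)-1}$ made in Algorithm~\ref{alg:find_non_zero_vol} then keeps $x$ strictly on the violating side, so $a(x)\ne a_j$. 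Hence, when the \textbf{for} loop terminates, every vertex of $\mathcal{U}(a_j)$ lies in $\mathcal{X}_\epsilon(a_j)$, so $\mathcal{U}(a_j)=\mathcal{X}_\epsilon(a_j)$ by convexity and the action is closed correctly. The outer loop at Line~\ref{line:partition_loop1} stops exactly when all full-dimensional polytopes are closed, since the remaining ones have measure zero and lie in the boundaries of the full-dimensional ones; thus the algorithm outputs precisely $\mathcal{C}=\{a\in\mathcal{A}:\textnormal{vol}(\mathcal{X}_\epsilon(a))>0\}$ and the corresponding polytopes.

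\textbf{Number of calls and failure probability.} Each outer iteration closes a distinct action, so there are at most $n$ of them. Within an iteration the \textbf{for} loop restarts only when a new separating hyperplane is found, and $\mathcal{X}_\epsilon(a_j)$ is carved out of $\mathcal{X}_\epsilon$ by at most $n-1$ such hyperplanes, so it restarts at most $n$ times. Each complete pass queries every vertex of the current $\mathcal{U}(a_j)\subseteq\mathbb{R}^d$, which is described by $O(d+n)$ halfspaces and so has $O\big(\binom{d+n}{d}\big)$ vertices; each vertex costs one \texttt{Action-Oracle} call at Line~\ref{line:d_j}, plus one \texttt{Find-Hyperplane} call when the returned action differs from $a_j$. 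The vertex queries therefore total $\widetilde{O}\big(n^2\binom{d+n}{d}\big)$; the number of \texttt{Find-Hyperplane} invocations is $O(n^2)$, and each uses $\widetilde{O}(d^7L\log(1/\delta))=\widetilde{O}(d^7L\log(1/\zeta))$ \texttt{Action-Oracle} calls (its guarantee, adapting the analogous routine of~\citep{bacchiocchi2024sample}, using $\log(1/\delta)=\widetilde{O}(\log(1/\zeta))$ for the $\delta$ set in Line~\ref{line:delta}), contributing $\widetilde{O}(n^2d^7L\log(1/\zeta))$. Including the $\le n$ calls used to sample the points $x^{\textnormal{int}}$, the total is $\widetilde{O}\big(n^2(d^7L\log(1/\zeta)+\binom{d+n}{d})\big)$, matching the statement. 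The only randomness beyond $\mathcal{E}^{\textnormal{a}}$ is inside \texttt{Find-Hyperplane} (the sampled $x^{\textnormal{int}}$ are in general position with probability one), and $\delta$ in Line~\ref{line:delta} is calibrated so that a union bound over all of its randomized sub-steps bounds the total failure probability by $\zeta/2$.

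\textbf{Main obstacle.} The delicate part will be the $\lambda$-calibration argument: showing that the $\lambda$ derived from the bit-complexities is small enough that querying a convex combination near a vertex reliably decides that vertex's membership in $\mathcal{X}_\epsilon(a_j)$. This requires a careful bit-complexity audit of the separating hyperplanes (whose coefficients are now \emph{products} $\mu_\theta(u_\theta(a_i)-u_\theta(a_j))$ rather than plain utilities), of the extra hyperplane $\sum_{\theta\in\widetilde\Theta}\widehat\mu_\theta x_\theta=2\epsilon$ defining $\mathcal{X}_\epsilon$, and of the vertices $v$ themselves; this is exactly where the Bayesian-persuasion setting diverges from the Stackelberg-game analysis of~\citep{bacchiocchi2024sample} and forces its tools to be reworked rather than used off the shelf.
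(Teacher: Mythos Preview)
Your proposal is correct and follows essentially the same approach as the paper, which is far terser: it simply invokes Lemma~\ref{lem:find_hp} and Lemma~\ref{lem:sample_point_first} and points to Theorem~4.3 of~\citep{bacchiocchi2024sample}, then converts the per-call failure parameter $\delta$ of Line~\ref{line:delta} into the aggregate bound $1-\zeta/2$ and the corresponding call count. Your write-up unpacks exactly the pieces that the paper delegates to that reference (the inductive invariant, the $\lambda$-calibration via bit-complexity bounds, and the vertex/hyperplane counting), and you correctly single out the bit-complexity audit of the new hyperplanes---those with coefficients $\mu_\theta(u_\theta(a_i)-u_\theta(a_j))$ and the extra $\widehat H$ constraint defining $\mathcal{X}_\epsilon$---as the place where the argument cannot be lifted verbatim from the Stackelberg setting; one small caveat is that the sampling of $x^{\textnormal{int}}$ at Line~\ref{line:sample_point} may also be done via \texttt{Sample-Int} on a finite grid rather than a continuous distribution, in which case it contributes to the union bound rather than succeeding ``with probability one.''
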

\begin{proof}
	Thanks to Lemma~\ref{lem:find_hp} and Lemma~\ref{lem:sample_point_first}, with an approach similar to the one proposed in Theorem~4.3 by~\citet{bacchiocchi2024sample}, we can prove that, under the event $\mathcal{E}^\textnormal{a}$, with probability at least $1-\delta n^2(2(d+n)^2 +n)$, Algorithm~\ref{alg:find_non_zero_vol} computes every polytope $\mathcal{X}_\epsilon(a)$ with volume larger than zero by performing at most:
	\begin{equation*}
		\mathcal{O}\left(n^2\left(d^7L\log(\nicefrac{1}{\delta}) +\binom{d+n}{d}\right)\right)
	\end{equation*}
	calls to the $\texttt{Action-Oracle}$ procedure.
	Together with these polytopes, it computes the set $\mathcal{C} \subseteq \mathcal{A}$ containing the actions $a$ such that $\text{vol}(\mathcal{X}_\epsilon(a))>0$.
	
	Furthermore, we observe that $\zeta = 2 \delta n^2(2(d+n)+n)$, as defined at Line~\ref{line:delta} in Algorithm~\ref{alg:find_non_zero_vol}.
	As a result, under the event $\mathcal{E}^\textnormal{a}$, with probability at least $1-\nicefrac{\zeta}{2}$, the number of calls $C_1 \ge 0$ performed by Algorithm~\ref{alg:find_non_zero_vol} to the $\texttt{Action-Oracle}$ procedure can be bounded as follows: 
	\begin{equation*}
		C_1 \le \widetilde{\mathcal{O}}\left(n^2\left(d^7L\log(\nicefrac{1}{\zeta})+\binom{d+n}{d}\right)\right),
	\end{equation*}
	concluding the proof.
\end{proof}

\subsection{\texttt{Find-Hyperplane}}
\begin{algorithm}[H]
	\caption{\texttt{Find-Hyperplane}}
	\label{alg:hyperplane}
	\begin{algorithmic}[1]
		\Require $a_j, \mathcal{U}(a_j), x^\text{int}, v, \delta$ 
		\State $x \gets \texttt{Sample-Int}(\mathcal{U}(a_j), \delta)$
		\State $x^1 \gets x$ 
		\If{ $\texttt{Action-Oracle} (x)=a_j$ }
		\State $x^2 \gets v $ 
		\Else
		\State $x^2 \gets x^\text{int}$ 
		\EndIf
		\State $x^\circ \gets \texttt{Binary-Search}(a_j,x^1, x^2)$ \label{line:hyperplane_binary_1} 
		\State $\alpha \gets 2^{-4d(B_x +B +B_{\widehat{\mu}} +B_\epsilon)} /d$ \label{line:alpha_def} 
		\State $\mathcal{S}_j \gets \varnothing$; $\mathcal{S}_k \gets \varnothing$
		\For{$i = 1 \ldots d$} 
		\State $ x \gets \texttt{Sample-Int}(H_i \cap \mathcal{X}, \delta)$ 
		\State ${x}^{+i} \gets x^\circ + \alpha (x-x^\circ)$ \label{line:point_simlex}	
		\State $x^{-i} \gets x^\circ - \alpha (x-x^\circ)$
		\If{ $\texttt{Action-Oracle} ({x}^{+i})=a_j$ }
		\State $\mathcal{S}_{j} \gets \mathcal{S}_{j} \cup \{x^{+i} \}  \wedge  \mathcal{S}_{k} \gets \mathcal{S}_{k} \cup \{ x^{-i} \} $
		\Else
		\State $\mathcal{S}_{k} \gets \mathcal{S}_{k} \cup \{x^{+i} \}  \wedge  \mathcal{S}_{j} \gets \mathcal{S}_{j} \cup \{ x^{-i} \} $
		\EndIf
		\EndFor
		\State Build $H_{jk}$ by $\texttt{Binary-Search}(a_j,x^1, x^{2})$ for $d-1$ pairs of linearly-independent points $x^1 \in \mathcal{S}_j, x^2 \in \mathcal{S}_k$
	\end{algorithmic}
\end{algorithm}

The goal of the $\texttt{Find-Hyperplane}$ procedure is to compute a new separating hyperplane $H_{jk}$ between a given region $\mathcal{X}_\epsilon(a_j)$ and some other polytope $\mathcal{X}_\epsilon(a_k)$, with $a_j, a_k \in \A$.
To do so, it receives as input an upper bound $\mathcal{U}(a_j)$ of some polytope $\mathcal{X}_\epsilon(a_j)$, an interior point $x^\textnormal{int} \in \text{int}(\mathcal{U}(a_j))$, a vertex $v \in \text{V}(\mathcal{U}(a_j))$ that does not belong to $\mathcal{X}_\epsilon(a_j)$, and a parameter $\delta>0$ as required by the $\texttt{Sample-Int}$ procedure.
As a first step, Algorithm~\ref{alg:hyperplane} samples at random a slice $x$ from the interior of the upper bound $\mathcal{U}(a_j)$.
Subsequently, it performs a binary search on the segment between $x$ and either $v$ or $x^\textnormal{int}$, depending on the best response $a(x)$ in $x$.
This binary search returns a point $x^\circ$ on some new separating hyperplane $H_{jk}$ (Line~\ref{line:hyperplane_binary_1}).
As a further step, the algorithm computes two sets of normalized slices, $\mathcal{S}_j \subseteq \mathcal{X}_\epsilon(a_j)$ and $\mathcal{S}_k \subseteq \mathcal{X}_\epsilon(a_k)$.
Finally, it performs $d-1$ binary searches between different couples of points, one in $\mathcal{S}_j$ and the other in $\mathcal{S}_k$, in order to completely identify the separating hyperplane.

\begin{restatable}{lemma}{FindHP}\label{lem:find_hp} 
	With probability at least $1-(d+n)^2\delta$, under the event $\mathcal{E}^\textnormal{a}$ Algorithm~\ref{alg:hyperplane} returns a separating hyperplane $H_{jk}$ by using $\mathcal{O}(d^7(B +B_\epsilon +B_{\widehat{\mu}}) +d^4\log(\nicefrac{1}{\delta}))$ calls to Algorithm~\ref{alg:action_oracle}. 
\end{restatable}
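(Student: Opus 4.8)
Throughout I would condition on the clean event $\mathcal{E}^\textnormal{a}$, so that every call to \texttt{Action-Oracle} (Algorithm~\ref{alg:action_oracle}) returns the true best response $a(x)$ of the queried normalized slice, and I would write $L \coloneqq B + B_\epsilon + B_{\widehat{\mu}}$. The plan is to establish three facts about Algorithm~\ref{alg:hyperplane}: (i) the binary search at Line~\ref{line:hyperplane_binary_1} returns a point $x^\circ$ lying \emph{exactly} on the target separating hyperplane $H_{jk}$ and, moreover, in the \emph{relative interior} of the facet $\mathcal{X}_\epsilon(a_j)\cap H_{jk}$; (ii) the loop then produces $d-1$ further points on $H_{jk}$ that are linearly independent; and (iii) from these the algorithm reconstructs $H_{jk}$ exactly, which is legitimate because $H_{jk}=\{x\in\mathbb R^d : \sum_\theta x_\theta\mu_\theta(u_\theta(a_j)-u_\theta(a_k))=0\}$ is a \emph{homogeneous} hyperplane and hence determined by any $d-1$ linearly independent points on it. The two recurring technical pillars will be an exact binary-search argument (rational rounding) and careful propagation of bit-complexities through the nested \texttt{Sample-Int}, convex-combination, and \texttt{Binary-Search} steps.

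For the exact-crossing part, I would note that the coefficients of $H_{jk}$ are the numbers $\mu_\theta(u_\theta(a_j)-u_\theta(a_k))$, each of bit-complexity $\mathcal O(B)$. Given a segment $[x^1,x^2]$ whose endpoints lie on opposite sides of $H_{jk}$ and cross \emph{no other} separating hyperplane, the defining functional changes sign exactly once along it; this holds for the segment used at Line~\ref{line:hyperplane_binary_1} by the invariant maintained by the caller (one endpoint is the interior point $x^\mathrm{int}$, inducing $a_j$, the other a vertex $v$ just outside $\mathcal{X}_\epsilon(a_j)$), and for the segments used in the final step because both endpoints lie within the ball of radius $\alpha$ around $x^\circ$. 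Bisecting a number of times polynomial in $d$ and $L+\log(1/\delta)$ shrinks the uncertainty interval below the minimal spacing between distinct values the functional can attain on rationals of the relevant denominators, after which one snaps to the unique crossing point; each bisection costs one \texttt{Action-Oracle} call.

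For the perturbation step, the key quantity to control is the distance from $x^\circ$ to every other separating hyperplane, to the boundary planes $H_i$, and to the truncation hyperplane $\{x:\sum_\theta\widehat\mu_\theta x_\theta=2\epsilon\}$; if $x^\circ$ is in the relative interior of the facet, this distance is bounded below by $2^{-\mathcal O(dL)}$ by the standard volume/vertex bound for polytopes whose defining inequalities have bit-complexity $\mathcal O(L)$. Choosing $\alpha=2^{-4d(B_x+B+B_{\widehat\mu}+B_\epsilon)}/d$ below this threshold (where $B_x=\mathcal O(dL)$ is the size of the interior point output by \texttt{Sample-Int}, guaranteed with probability $\ge 1-\delta$) ensures the perturbed slices $x^{\pm i}=x^\circ\pm\alpha(x-x^\circ)$ stay inside $\mathcal{X}$ and on the correct side of every hyperplane other than $H_{jk}$; hence \texttt{Action-Oracle}$(x^{+i})$ correctly decides whether $x^{+i}\in\mathcal{X}_\epsilon(a_j)$ or $\in\mathcal{X}_\epsilon(a_k)$, with $x^{-i}$ on the opposite side. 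Since for $i=1,\dots,d$ the sampled slices from $H_i\cap\mathcal{X}$ make the directions $x-x^\circ$ span $\mathbb R^d$ (with probability $\ge 1-d\delta$ over the \texttt{Sample-Int} calls), the sets $\mathcal{S}_j,\mathcal{S}_k$ contain $d-1$ pairs $(x^1,x^2)$ with $a(x^1)=a_j\ne a_k=a(x^2)$ whose $d-1$ crossing points are linearly independent, which by (iii) pins down $H_{jk}$.

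Finally I would tally the cost: Algorithm~\ref{alg:hyperplane} issues $1+d$ \texttt{Sample-Int} calls, $\mathcal O(d)$ extra \texttt{Action-Oracle} calls to classify the $x^{+i}$, and $d$ binary searches; propagating sizes ($x^\mathrm{int}$ of size $\mathcal O(dL)$, hence $\alpha$ of size $\mathcal O(d^2L)$, hence $x^{\pm i}$ of size $\mathcal O(d^2L)$, hence each ensuing binary search needing $\mathcal O(d^6L+d^3\log(1/\delta))$ bisections) gives a total of $\mathcal O(d^7 L+d^4\log(1/\delta))$ calls to Algorithm~\ref{alg:action_oracle}. Each \texttt{Sample-Int} call fails with probability at most $\delta$ and internally touches $\mathcal O(d+n)$ bounding hyperplanes, so union-bounding over the $\mathcal O(d+n)$ such calls (and the linear-independence event) yields the stated failure probability $(d+n)^2\delta$. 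I expect the main obstacle to be the simultaneous calibration of $\alpha$ — small enough that no perturbed slice crosses a third hyperplane (which needs the relative-interior property of $x^\circ$ together with the separation/volume bound on $\mathcal{X}_\epsilon$), yet keeping the bit-complexity of $x^{\pm i}$ polynomially bounded so the final binary searches terminate within budget. This is exactly where the argument must adapt, rather than reuse, the Stackelberg-game machinery of \citet{bacchiocchi2024sample}: the homogeneity of the $H_{ij}$'s and especially the extra non-homogeneous constraint $\sum_\theta\widehat\mu_\theta x_\theta\ge 2\epsilon$ defining $\mathcal{X}_\epsilon$ are what force $B_\epsilon$ and $B_{\widehat\mu}$ into every estimate, and tracking these dependencies correctly is the crux.
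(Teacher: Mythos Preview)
Your proposal takes essentially the same route as the paper: the exact-crossing binary search, the linear-independence argument for the $x^{\pm i}$, and the $(d+n)^2\delta$ union bound are all inherited from \citet{bacchiocchi2024sample}, and the only genuinely new work---which you correctly isolate---is lower-bounding the distance from $x^\circ$ to the non-homogeneous truncation hyperplane $\widehat H=\{x:\sum_{\theta\in\widetilde\Theta}\widehat\mu_\theta x_\theta=2\epsilon\}$, which is exactly what injects $B_\epsilon$ and $B_{\widehat\mu}$ into $\alpha$ and into every subsequent size bound. Your bit-complexity chain has arithmetic slips, however: by Lemma~\ref{lem:sample_point_first} the \texttt{Sample-Int} output has size $\mathcal O(d^3L+\log(1/\delta))$, not $\mathcal O(dL)$; tracing this through gives $x^\circ$ of size $\mathcal O(d^4L+d\log(1/\delta))$, then $x^{\pm i}$ of size $\mathcal O(d^5L+d\log(1/\delta))$, so each of the $d{-}1$ final binary searches costs $\mathcal O(d^6L+d^2\log(1/\delta))$ and the total is the stated $\mathcal O(d^7L+d^4\log(1/\delta))$---the same endpoint, but your intermediate step ``$x^{\pm i}$ of size $\mathcal O(d^2L)$, hence $\mathcal O(d^6L)$ bisections'' is internally inconsistent and should be redone.
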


\begin{proof}
	We observe that, with the same analysis provided in Lemma~4.7 by~\citet{bacchiocchi2024sample}, we can prove that, under the event $\mathcal{E}^\textnormal{a}$, the binary-search procedure described in Algorithm~\ref{alg:binary_search} correctly computes a point on a separating hyperplane by calling the $\texttt{Action-Oracle}$ procedure at most $\mathcal{O}(d(B_x +B))$ times.
	
	
	With the same reasoning applied in Lemma~4.4 and~4.5 by~\citet{bacchiocchi2024sample}, we can prove that with probability at least $1-(d+n)^2\delta$, under the event $\mathcal{E}^\textnormal{a}$, the points $x^{+i}$ are linearly independent and do not belong to $H_{jk}$. To conclude the proof, we have to show that every $x^{+i}$ belongs to either $\mathcal{X}_\epsilon(a_j)$ or $\mathcal{X}_\epsilon(a_k)$. This is because, if the previous condition holds, under the event $\mathcal{E}^\textnormal{a}$, Algorithm~\ref{alg:hyperplane} correctly computes a new separating hyperplane with probability at least $1-(d+n)^2\delta$.
	
	To do that, we show that the constant $\alpha$ defined at Line~\ref{line:alpha_def} in Algorithm~\ref{alg:hyperplane} is such that all the points $x^{+i}$ and $x^{-i}$ either belong to $\mathcal{X}_\epsilon(a_j)$ or $\mathcal{X}_\epsilon(a_k)$, given that $x^\circ$ belongs to the hyperplane between these polytopes.
	With an argument similar to the one proposed in~\citet{bacchiocchi2024sample}, the distance between $x^i$ and any separating hyperplane can be lower bounded by ${2^{-d(B_x +4B)}}$, where $B_x$ is the bit-complexity of $x^\circ$.
	
	Similarly, the distance between $x^\circ$ and the hyperplane $\widehat{H}=\{x \in \mathbb{R}^d \mid \sum_{\theta \in \widetilde{\Theta}} \widehat{\mu}_\theta x_\theta \geq 2\epsilon\}$ can be lower bounded as follows:
	\begin{equation}\label{eq:d_x_diam_h_hat}
		d(x^\circ,\widehat{H}) = \frac{\left| \sum_{\theta  \in \widetilde{\Theta}} x_\theta \widehat{\mu}_\theta +2\epsilon \right|}{\sqrt{\sum_{\theta  \in \widetilde{\Theta}} \widehat{\mu}_\theta^2}}
		\geq \frac{1}{d2^{3d(B_x +B_{\widehat{\mu}} +B_\epsilon)}},
	\end{equation}
	where $B_{\widehat{\mu}}$ is the bit-complexity of $\widehat{\mu}$. 
	The inequality follows by observing that the denominator of the fraction above is at most $d$, while to lower bound the numerator we define the following quantities:
	\begin{equation*}
		\sum_{\theta  \in \widetilde{\Theta}} x_\theta \widehat{\mu}_\theta = \frac{\alpha}{\beta}
		\quad \text{and} \quad \epsilon = \frac{\gamma}{\nu},
	\end{equation*}
	where $\alpha$ and $\beta$ are integers numbers, while $\gamma$ and $\nu$ are natural numbers.
	Thus, the numerator $\left| \sum_{\theta  \in \widetilde{\Theta}} x_\theta \widehat{\mu}_\theta +2\epsilon \right|$ of the fraction defined in Equation~\ref{eq:d_x_diam_h_hat} can be lower bounded as follows:
	\begin{equation*}
		\left| \sum_{\theta  \in \widetilde{\Theta}} x_\theta \widehat{\mu}_\theta +2\epsilon \right| = \left| \frac{\alpha \nu + 2 \beta \gamma}{\beta \nu} \right| \ge \left| \frac{1}{\beta \nu} \right| \ge 2^{-3d(B_x +B_{\widehat{\mu}} +B_\epsilon)},
	\end{equation*}
	where the last inequality follows from the fact that the bit-complexity of $\nu$ is at most $B_\epsilon$, while the bit-complexity of $\beta$ cannot exceed $3d(B_x +B_{\widehat{\mu}})$ as stated by Lemma~D.1 of~\citet{bacchiocchi2024sample}.
	
	Overall, the distance between $x^\circ$ and the boundary of the polytope $\mathcal{X}_\epsilon(a_j) \cap \mathcal{X}_\epsilon(a_k)$ is strictly larger than $\alpha \coloneqq 2^{-4d(B_x +B +B_{\widehat{\mu}} +B_\epsilon) -\log_2(d) }$.
	Thus, every signaling scheme $x^{+i}$ and $x^{-i}$ belongs to $\mathcal{X}_\epsilon$ and either $\mathcal{X}(a_j)$ or $\mathcal{X}(a_k)$.
	
	Finally, we observe that the bit-complexity of $x$ is bounded by $B_x = \mathcal{O}(d^3(B +B_\epsilon +B_{\widehat{\mu}}) +\log(\nicefrac{1}{\delta}))$, as stated by Lemma~\ref{lem:sample_point_first}. 
	Thus, the first binary-search requires $\mathcal{O}(d(B_x +B)) = \mathcal{O}(d^4L +d\log(\nicefrac{1}{\delta}))$ calls to $\texttt{Action-Oracle}$, where $L=B +B_\epsilon +B_{\widehat{\mu}}$.
	
	Furthermore, the bit-complexity of $x^\circ$ is bounded by $\mathcal{O}(d^4L +d\log(\nicefrac{1}{\delta}))$.
	As a result, the bit-complexity of the slices $x^{+i}$ and $x^{-i}$ is bounded by $\mathcal{O}(d^5L +d\log(\nicefrac{1}{\delta}))$, given that the bit-complexity of $\alpha$ is $\mathcal{O}(dL)$.
	It follows that each binary search between two points in $\mathcal{S}_j$ and $\mathcal{S}_k$ requires $\mathcal{O}(d^6L +d^2\log(\nicefrac{1}{\delta}))$ calls to $\texttt{Action-Oracle}$.
	
	Overall, Algorithm~\ref{alg:hyperplane} invokes the $\texttt{Action-Oracle}$ procedure at most $\mathcal{O}(d^7L +d^4\log(\nicefrac{1}{\delta}))$ times, accounting for the $d-1$ binary searches, concluding the proof.
\end{proof}

\subsection{\texttt{Binary-Search}}
The $\texttt{Binary-Search}$ procedure performs a binary search on the segment connecting two points, $x^1, x^2 \in \mathcal{X}_\epsilon$ such that $x^1 \in \mathcal{X}_\epsilon(a_j)$ and $x^2 \notin \mathcal{X}_\epsilon(a_j)$ for some $a_j \in \mathcal{A}$, in order to find a point $x^\circ$ on some separating hyperplane $H_{jk}$.
At each iteration, the binary search queries the middle point of the segment.
Depending on the receiver's best-response in such a point, it keeps one of the two halves of the segment for the subsequent iteration.
The binary search ends when the segment is sufficiently small, so that it contains a single point with a bit-complexity appropriate for a point that lies on both the hyperplane $H_{jk}$ and the segment connecting $x^1$ and $x^2$.
Such a point can be found traversing the Stern-Brocot-Tree.
For an efficient implementation, see~\citet{Fori2007}. 
Overall, Algorithm~\ref{alg:binary_search} performs $\mathcal{O}(d(B_x+B))$ calls to the $\texttt{Action-Oracle}$ procedure, and returns a normalized slice $x^\circ$ with bit-complexity bounded by $\mathcal{O}(d(B_x+B))$, where $B_x$ is the bit-complexity of the points~$x^1$~and~$x^2$.
\begin{algorithm}[H]
	\caption{\texttt{Binary-Search}}\label{alg:binary_search}
	\begin{algorithmic}[1]
		\Require $a_j$, $x^1, x^2 $ of bit-complexity bounded by some $B_x>0$ 
		\State $\lambda_1 \gets 0 $; $\lambda_2 \gets 1 $
		\While{$|\lambda_2 - \lambda_1 | \ge 2^{-6d(5B_x+8B)}$}
		\State $\lambda \gets {(\lambda_1 + \lambda_2)}/{2} $; $x^\circ \gets x^1 + \lambda (x^2 - x^1)$
		\If{ $\texttt{Action-Oracle}(x^\circ) = a_j$ }
		\State $\lambda_1 \gets \lambda $ 
		\Else
		\State $\lambda_2 \gets \lambda $ 
		\EndIf
		\EndWhile
		\State $\lambda \gets \texttt{Stern-Brocot-Tree}(\lambda_1, \lambda_2,3d(5B_x +8B))$
		\State $x^\circ \gets \lambda x^1 + (1- \lambda) x^2$
	\end{algorithmic}
\end{algorithm}

\subsection{\texttt{Sample-Int}}
\begin{algorithm}[H]
	\caption{\texttt{Sample-Int}}
	\label{alg:sample_int}
	\begin{algorithmic}[1]
		\Require $\mathcal{P} \subseteq \mathcal{X}_\epsilon: \textnormal{vol}_{d-1}(\mathcal{P})>0$, and $\delta$ 
		\State $\mathcal{V} \gets d$ linearly-independent vertexes of $\mathcal{P}$\label{line:sample_li_vertices}
		\State $x^\diamond \gets \frac{1}{d} \sum_{v \in \mathcal{V}} v$\label{line:sample_1}
		\State $\rho \gets \left(d^3 2^{9d^3L +4dL} \right)^{-1}$; $M \gets \left \lceil \nicefrac{\sqrt{d}}{\delta} \right \rceil$\label{line:sample_rho}
		\State $y \sim \text{Uniform}(\{- 1, -\frac{M-1}{M}, \ldots, 0, \ldots, \frac{M-1}{M}, 1 \}^{d-1})$\label{line:sample_2}
		
		\ForAll{$i \in 1,\dots,d-1$}
		\State $x_i \gets x^\diamond_i +\rho y_i$ 
		\EndFor
		\State $x_d \gets  1-\sum_{i= 1}^{d-1} x_i$
	\end{algorithmic}
\end{algorithm}

The $\texttt{Sample-Int}$ procedure (Algorithm~\ref{alg:sample_int}) samples at random a normalized slice from the interior of a given polytope $\mathcal{P}$. 
We observe that each polytope Algorithm~\ref{alg:find_non_zero_vol} is required to sample from is defined as the intersection of $\mathcal{X}_\epsilon$ with some separating half-spaces as the ones defined in Section~\ref{sec:slices}. 
This procedure provides theoretical guarantees both on the bit-complexity of the point $x$ being sampled and on the probability that such a point belongs to a given hyperplane.
Furthermore, it can be easily modified to sample a point from a facet of the simplex $\mathcal{X} = \Delta_d$ (intuitively, this is equivalent to sample a point from $\Delta_{d-1}$).
As a first step, Algorithm~\ref{alg:sample_int} computes a normalized slice $x^\diamond$ in the interior of $\mathcal{P}$ (Line~\ref{line:sample_1}).
Subsequently, it samples randomly a vector $y$ from a suitable grid belonging to the $(d-1)$-dimensional hypercube with edges of length $2$.
As a further step, it sums each component $x^\diamond_i$ of the normalized slice $x^\diamond$ with the corresponding component $y_i$ of $y$, scaled by an opportune factor $\rho$, where the constant $\rho$ is defined to ensure that $x$ belongs to the interior of $\mathcal{P}$.

The theoretical guarantees provided by Algorithm~\ref{alg:sample_int} are formalized in the following lemma:
\begin{restatable}{lemma}{samplepointfirst}
	\label{lem:sample_point_first}
	Given a polytope $\mathcal{P} \subseteq \mathcal{X}_\epsilon:\textnormal{vol}_{d-1}(\mathcal{P}) >~0$ defined by separating or boundary hyperplanes, Algorithm~\ref{alg:sample_int} computes $x \in \textnormal{int}(\mathcal{P})$ such that, for every linear space $H \subset \mathbb{R}^d : \mathcal{P} \not\subseteq H$ of dimension at most $d-1$, the probability that $x \in H$ is at most $\delta$.
	Furthermore, the bit-complexity of $x$ is $\mathcal{O}(d^3(B +B_\epsilon +B_{\widehat{\mu}})+\log(\nicefrac{1}{\delta}))$.
\end{restatable}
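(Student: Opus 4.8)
The plan is to establish the statement in four steps: (i) the point $x^\diamond$ computed in Line~\ref{line:sample_1} lies in $\textnormal{int}(\mathcal P)$; (ii) the perturbation added in Algorithm~\ref{alg:sample_int} is so small that $x$ also lies in $\textnormal{int}(\mathcal P)\subseteq\mathcal P$; (iii) for any fixed low-dimensional $H$ with $\mathcal P\not\subseteq H$ the grid-sampling of $y$ forces $\Pr[x\in H]\le\delta$; (iv) the bit-complexity of $x$ is as claimed. For (i): since $\textnormal{vol}_{d-1}(\mathcal P)>0$, $\mathcal P$ is full-dimensional inside the affine hyperplane $\{z\in\mathbb R^d:\sum_\theta z_\theta=1\}$, so it has $d$ affinely independent vertices; as these all lie on that hyperplane, which does not pass through the origin, they are linearly independent in $\mathbb R^d$, so Line~\ref{line:sample_li_vertices} is well defined. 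Their barycenter $x^\diamond$ lies in the relative interior of the $(d-1)$-simplex they span, and this simplex is full-dimensional inside $\{\sum_\theta z_\theta=1\}$ and contained in $\mathcal P$; hence a whole relative neighbourhood of $x^\diamond$ lies in $\mathcal P$, i.e.\ $x^\diamond\in\textnormal{int}(\mathcal P)$.

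For (ii) I would control bit-complexities. Each vertex of $\mathcal P$ is the unique solution of a $d\times d$ linear system made of $\sum_\theta z_\theta=1$ together with $d-1$ of the facet-defining hyperplanes of $\mathcal P$ — separating hyperplanes $H_{ij}$, boundary hyperplanes $\{x_\theta=0\}$, or the hyperplane $\widehat H=\{z:\sum_{\theta\in\widetilde\Theta}\widehat\mu_\theta z_\theta=2\epsilon\}$ cutting out $\mathcal X_\epsilon$ — all of bit-complexity $O(L)$ with $L=B+B_\epsilon+B_{\widehat\mu}$. By Cramer's rule each vertex is rational with bit-complexity $\widetilde O(dL)$, hence $x^\diamond$ (an average of $d$ such vertices) has bit-complexity $\widetilde O(d^2L)$ and can be written over a common denominator $Q$ with $\log_2 Q=\widetilde O(d^2L)$. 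A standard estimate then lower-bounds the distance from $x^\diamond$ to any bounding hyperplane of $\mathcal P$ it is not on by a quantity of the form $2^{-\widetilde O(d^2L)}$. Since the displacement $x-x^\diamond=(\rho y_1,\dots,\rho y_{d-1},-\rho\sum_i y_i)$ has Euclidean norm at most $\rho d$, and $\rho=(d^3 2^{9d^3L+4dL})^{-1}$ is chosen in Line~\ref{line:sample_rho} far below that distance divided by $d$, we conclude $x\in\textnormal{int}(\mathcal P)\subseteq\mathcal P\subseteq\mathcal X_\epsilon$.

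For (iii), fix a linear subspace $H\subset\mathbb R^d$ of dimension at most $d-1$ with $\mathcal P\not\subseteq H$. The map $y\mapsto x$ defined in Algorithm~\ref{alg:sample_int} is affine and maps $\mathbb R^{d-1}$ bijectively onto the hyperplane $\{\sum_\theta z_\theta=1\}$, because the displacement directions span the $(d-1)$-dimensional tangent space of the simplex. Since $\mathcal P\subseteq\mathcal X_\epsilon\subseteq\mathcal X^\triangle\subseteq\{\sum_\theta z_\theta=1\}$, the hypothesis $\mathcal P\not\subseteq H$ gives $\{\sum_\theta z_\theta=1\}\not\subseteq H$, so the image of the map is not contained in $H$; hence $\{y:x(y)\in H\}$ is a \emph{proper} affine subspace of $\mathbb R^{d-1}$, contained in some hyperplane $\{a\cdot y=b\}$ with $a\neq 0$. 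Picking an index $j$ with $a_j\neq 0$: for each of the $(2M+1)^{d-2}$ assignments of the remaining coordinates to grid values there is at most one admissible $y_j$, so this hyperplane meets the grid of Line~\ref{line:sample_2} (which has $2M+1$ values per coordinate) in at most $(2M+1)^{d-2}$ of its $(2M+1)^{d-1}$ points. Therefore $\Pr[x\in H]\le 1/(2M+1)\le\delta$ by $M=\lceil\sqrt d/\delta\rceil$.

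For (iv), the \emph{key observation} is that every coordinate $x_i=x^\diamond_i+\rho y_i$ for $i<d$ can be written over one common denominator: with $x^\diamond_i=p_i/Q$, $\rho=1/(d^3 2^{E})$ where $E=9d^3L+4dL=O(d^3L)$, and $y_i=y'_i/M$ with integer $y'_i\in\{-M,\dots,M\}$, all $x_i$ share denominator $D\coloneqq Q\,d^3\,2^{E}\,M$, whose bit-complexity is $O(d^3L+\log(1/\delta))$; consequently $x_d=1-\sum_{i<d}x_i$ also has denominator $D$ and does not inflate the bit-complexity. Hence every component of $x$ has bit-complexity $O(d^3(B+B_\epsilon+B_{\widehat\mu})+\log(1/\delta))$. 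I expect the main obstacles to be the bookkeeping in this last step — in particular ruling out the naive factor-$d$ blow-up in the $x_d$ coordinate — together with pinning down the exponent in the distance estimate of step (ii) so that the constant $\rho$ provably dominates it; the remainder is a routine Schwartz–Zippel / affine-subspace counting argument.
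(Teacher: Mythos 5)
Your proposal is correct and follows essentially the same route as the paper's proof: barycenter of $d$ linearly independent vertices, a bit-complexity-based lower bound on the distance from $x^\diamond$ to the bounding hyperplanes (separating, boundary, and $\widehat H$) that dominates the perturbation $\rho\,\|y\|$, a grid-counting (Schwartz--Zippel) argument for the probability bound, and a common-denominator computation for the bit-complexity of $x$. The only difference is presentational: you spell out the affine-map/grid-counting step that the paper delegates to Lemma~4.8 of \citet{bacchiocchi2024sample}, and your intermediate exponents ($\widetilde{\mathcal{O}}(d^2L)$ for the distance bound versus the paper's $\mathcal{O}(d^3L)$ via Lemma~\ref{lem:vertex_bits}) differ slightly but both are comfortably dominated by the choice of $\rho$.
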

\begin{proof}	
	In the following, we prove that $x$ belongs to the interior of the polytope $\mathcal{P}$. To do that, we observe that the point $x^\diamond$ belongs to the interior of $\mathcal{P}$, while, with the same analysis proposed in Lemma 4.8 by~\citet{bacchiocchi2024sample}, the distance between $x^\diamond$ and $x$ can be upper bounded by $\rho n$. 
	To ensure that $x $ belongs to $ \textnormal{int}(\mathcal{P})$, we have to show that $d(x^\diamond,x)$ is smaller than the distance between $x^\diamond$ and any hyperplane defining the boundary of $\mathcal{P}$.
	
	Let us denote with $v^h$ the $h$-vertex of the set $\mathcal{V}$, so that $\mathcal{V} = \{v^1,v^2,\dots,v^d\}$.
	Furthermore, we define: \[	v^h_\theta \coloneqq \frac{\gamma^h_\theta}{\nu_h}\]
	for each $h \in [d]$ and $\theta \in \Theta$.
	Since $x^\diamond$ belongs to $\textnormal{int}(\mathcal{P})$, then the distance between $x^\diamond$ and any separating hyperplane $H_{jk}$ can be lower bounded as follows:
	\begin{align*}
		d(x^\diamond,H_{jk}) &= \left| \frac{\sum_{\theta  \in \Theta} x^\diamond_\theta \mu_\theta (u_\theta(a_j)-u_\theta(a_k))}{\sqrt{\sum_{\theta  \in \Theta}\mu^2_\theta (u_\theta(a_j)-u_\theta(a_k))^2}} \right| \\
		&= \left| \frac{\sum_{\theta  \in \Theta} \sum_{h=1}^{d} v^h_\theta \mu_\theta (u_\theta(a_j)-u_\theta(a_k))}{d\sqrt{\sum_{\theta  \in \Theta}\mu^2_\theta (u_\theta(a_j)-u_\theta(a_k))^2}} \right| \\
		&\geq \frac{1}{d^2} 2^{-4dB -9d^3(B +B_\epsilon +B_{\widehat{\mu}})}.
	\end{align*}
	To prove the last inequality, we observe that the denominator of the fraction above can be upper bounded by $d^2$. To lower bound the nominator, we define:
	\begin{equation*}
		\mu_\theta (u_\theta(a_j)-u_\theta(a_k)) \coloneqq \frac{\alpha_\theta}{\beta_\theta}
	\end{equation*}
	for each $\theta \in \Theta$. As a result, we have:
	\begin{align*}
		\left| \sum_{\theta  \in \Theta} \sum_{h=1}^{d} v^h_\theta \mu_\theta (u_\theta(a_j)-u_\theta(a_k)) \right| &= \left| \sum_{\theta  \in \Theta} \sum_{h=1}^{d} \frac{\alpha_\theta \gamma^h_\theta}{\beta_\theta \nu_h} \right| \\
		&= \left| \frac{\sum_{\theta  \in \Theta} \sum_{h=1}^{d} \alpha_\theta \gamma^h_\theta \left( \prod_{\theta' \neq \theta}\beta_{\theta'} \prod_{h' \neq h} \nu_{h'} \right) }{\prod_{\theta \in \Theta}\beta_\theta \prod_{h=1}^d \nu_h} \right| \\
		&\geq \left( \prod_{\theta \in \Theta}\beta_\theta \prod_{h=1}^d \nu_h \right)^{-1} \\
		&\geq 2^{-4dB -9d^3(B +B_\epsilon +B_{\widehat{\mu}})}.
	\end{align*}
	The first inequality holds because the numerator of the fraction above can be lower bounded by one.
	The denominator can be instead upper bounded observing that the bit-complexity of each $\beta_\theta$ is at most $4B$ while the bit-complexity of each $\nu_h$ is at most $9d^2(B +B_\epsilon +B_{\widehat{\mu}})$, as stated by Lemma~\ref{lem:vertex_bits}.
	
	In a similar way, we can lower bound the distance between $x^\diamond$ and $ \widehat{H} \hspace{-0.5mm}\coloneqq \hspace{-0.5mm} \{x \in \mathbb{R}^d \hspace{-0.5mm} \mid  \hspace{-0.5mm} \sum_{\theta  \in \widetilde{\Theta}} \widehat{\mu}_\theta x_\theta \hspace{-0.5mm} \geq \hspace{-0.5mm} 2\epsilon\}$.
	\begin{align*}
		d(x^\diamond,\widehat{H}) &= \left| \frac{\sum_{\theta  \in \widetilde{\Theta}} \widehat{\mu}_\theta x^\diamond_\theta +2\epsilon}{\sqrt{\sum_{\theta  \in \widetilde{\Theta}}\widehat{\mu}^2}} \right| \\
		&= \left| \frac{\sum_{\theta  \in \widetilde{\Theta}} \sum_{h=1}^d \widehat{\mu}_\theta v^h_\theta +2\epsilon}{d\sqrt{\sum_{\theta  \in \widetilde{\Theta}}\hat{\mu}^2}} \right| \\
		&\geq \frac{1}{d^2} 2^{-B_{\widehat{\mu}} -B_\epsilon -9d^3(B +B_\epsilon +B_{\widehat{\mu}})}.
	\end{align*}
	To prove the last inequality, we observe that the denominator of the fraction above can be upper bounded by $d^2$, while to lower bound the numerator, we define:
	\begin{equation*}
		\hat{\mu}_\theta \coloneqq \frac{N_\theta}{p} \text{ and } \epsilon = \frac{\alpha}{\beta}
	\end{equation*}
	for each $\theta \in \widetilde{\Theta}$. As a result, we have:
	\begin{align*}
		\left| \sum_{\theta  \in \widetilde{\Theta}} \sum_{h=1}^d \widehat{\mu}_\theta v^h_\theta +2\epsilon \right| &= \left| \sum_{\theta  \in \widetilde{\Theta}} \sum_{h=1}^d \frac{N_\theta \gamma^h_\theta}{p \nu_h} +2\epsilon \right| \\
		&= \left| \frac{\sum_{\theta \in \widetilde{\Theta}}\sum_{h=1}^d \beta N_\theta \gamma^h_\theta \prod_{h'\neq h} \nu_{h'} +2\alpha p \prod_{h=1}^{d}\nu_h} {p \prod_{h=1}^{d} \nu_h \beta} \right| \\
		&\ge 2^{-B_{\widehat{\mu}} -B_\epsilon -9d^3(B +B_\epsilon +B_{\widehat{\mu}})}
	\end{align*}
	
	Thus, the distance between $x^\diamond$ and any hyperplane $H$ defining the boundary of $\mathcal{P}$ can be lower bounded as follows:
	\begin{equation*}
		d(x^\diamond,H) \ge \frac{1}{d^2} 2^{-9d^3(B +B_\epsilon +B_{\widehat{\mu}}) -4dB -B_{\widehat{\mu}} -B_\epsilon} \ge \frac{1}{d^2} 2^{-10d^3(B +B_\epsilon +B_{\widehat{\mu}})}.
	\end{equation*}
	
	We observe that, given the definition of $\rho$ at Line~\ref{line:sample_rho}, the distance $d(x^\diamond,x)$ is strictly smaller than $d(x^\diamond,H)$, showing that $x \in \textnormal{int}(\mathcal{P})$. 
	
	Furthermore, we can prove that the bit-complexity of $x$ is bounded by $\mathcal{O}(d^3(B +B_\epsilon +B_{\widehat{\mu}}) + \log(\nicefrac{1}{\delta}))$.
	To do so, we observe that the denominator of $x^\diamond$ is equal to $d \prod_{h=1}^{d}\nu_{h}$, while the denominator of $y_i$ is equal to $M=\lceil \nicefrac{\sqrt{d}}{\delta} \rceil$.
	As a result, the denominator of every $x_i= x^\circ_i +\rho y_i$, with $i \in [d-1]$, can be written as follows:
	\begin{equation*}
		D = d \prod_{h=1}^{d}\nu_{h} D_\rho M,
	\end{equation*}
	where $D_\rho$ is the denominator of the rational number $\rho$.
	Similarly, the last component $x_{d}$ can be written with the same denominator.
	As a result, the bit complexity of $x \in [0,1]^{d}$ can be upper bounded as follows:
	\begin{align*}
		B_x &\le 2\left\lceil \log(D) \right\rceil \\
		&=\mathcal{O}(\log(d \prod_{h=1}^{d}\nu_{h} D_\rho M)) \\
		&= \mathcal{O}\left( \log\left(\prod_{h=1}^{d}2^{9d^2(B +B_\epsilon +B_{\widehat{\mu}})}\right) +\log(d2^{10d^3(B +B_\epsilon +B_{\widehat{\mu}})}) + \log(\nicefrac{\sqrt{d}}{\delta})\right) \\
		&=\mathcal{O}\left(d^3(B +B_\epsilon +B_{\widehat{\mu}}) + \log\left(\frac{1}{\delta}\right)\right).
	\end{align*}
	Finally, with the same analysis performed in Lemma 4.8 by~\citet{bacchiocchi2024sample}, we can show that the probability that $x$ belongs to a given hyperplane $H$ is at most $\delta$.
\end{proof}

\begin{restatable}{lemma}{vertexbits}
	\label{lem:vertex_bits}
	Each vertex $v$ of a polytope $\mathcal{P} \subseteq \mathcal{X}_\epsilon:\textnormal{vol}_{d-1}(\mathcal{P}) >~0$, defined by separating or boundary hyperplanes, has bit-complexity at most $9d^2(B + B_\epsilon +B_{\widehat{\mu}})$. 
	Furthermore, with a bit-complexity of $9d^2(B + B_\epsilon +B_{\widehat{\mu}})$, all the components of the vector $v$ identifying a vertex can be written as fractions with the same denominator.
\end{restatable}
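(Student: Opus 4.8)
The plan is to bound the bit-complexity of a vertex $v$ of $\mathcal{P}$ by viewing $v$ as the unique solution of a $d \times d$ linear system formed by the hyperplanes that are tight at $v$, and then applying Cramer's rule together with a Hadamard-type bound on the determinants that appear. First I would recall that $\mathcal{P} \subseteq \mathcal{X}_\epsilon$ is the intersection of $\mathcal{X}_\epsilon$ with some collection of separating halfspaces $\mathcal{H}_{jk}$; since $\mathcal{X}_\epsilon \subseteq \mathcal{X} = \Delta_d$ and $\mathcal{X}_\epsilon$ itself is cut out by the normalization equality $\sum_\theta x_\theta = 1$, the coordinate boundary hyperplanes $H_i = \{x_{\theta_i}=0\}$, and the single extra hyperplane $\widehat H = \{\sum_{\theta \in \widetilde\Theta} \widehat\mu_\theta x_\theta = 2\epsilon\}$, every facet-defining hyperplane of $\mathcal{P}$ is one of: (a) $\sum_\theta x_\theta = 1$, (b) $x_{\theta_i}=0$, (c) $\sum_\theta \mu_\theta(u_\theta(a_j)-u_\theta(a_k))x_\theta = 0$ for some $a_j,a_k$, or (d) $\sum_{\theta \in \widetilde\Theta}\widehat\mu_\theta x_\theta = 2\epsilon$. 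Since $\textnormal{vol}_{d-1}(\mathcal{P})>0$, the vertex $v$ lies on exactly $d$ linearly independent such hyperplanes (one of which, by dimension, can be taken to be the equality $\sum_\theta x_\theta = 1$), so $v = A^{-1}b$ where $A \in \mathbb{Q}^{d\times d}$ collects the coefficient vectors and $b \in \mathbb{Q}^d$ collects the right-hand sides.

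Next I would put $A$ and $b$ over a common integer scaling. The entries of type (a) and (b) are in $\{0,1\}$. The entries of type (c) are $\mu_\theta(u_\theta(a_j)-u_\theta(a_k))$, each a rational whose numerator and denominator have bit-complexity $\mathcal{O}(B)$ (recall $B = B_\mu + B_u$ bounds the bit-complexity of $\mu_\theta u_\theta(a_i)$); clearing denominators within a single row of type (c) multiplies that row by an integer of bit-complexity $\mathcal{O}(dB)$, and the resulting integer entries have bit-complexity $\mathcal{O}(dB)$. The entries of type (d), namely $\widehat\mu_\theta$ and $2\epsilon$, have bit-complexity $\mathcal{O}(B_{\widehat\mu})$ and $\mathcal{O}(B_\epsilon)$ respectively, contributing a row whose integer version has bit-complexity $\mathcal{O}(d(B_{\widehat\mu}+B_\epsilon))$. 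Thus, after scaling each row, I obtain an integer matrix $\widetilde A$ and integer vector $\widetilde b$ whose entries all have bit-complexity $\mathcal{O}(d(B+B_\epsilon+B_{\widehat\mu}))$, and $v_i = \det(\widetilde A^{(i)})/\det(\widetilde A)$ by Cramer's rule, where $\widetilde A^{(i)}$ replaces column $i$ of $\widetilde A$ by $\widetilde b$. A $d\times d$ determinant of integer entries each of bit-complexity $\beta$ has bit-complexity at most $\mathcal{O}(d(\beta + \log d))$ (expand, or use Hadamard: $|\det| \le d^{d/2}\max|a_{ij}|^d$), so with $\beta = \mathcal{O}(d(B+B_\epsilon+B_{\widehat\mu}))$ both numerator and denominator have bit-complexity $\mathcal{O}(d^2(B+B_\epsilon+B_{\widehat\mu}))$; tracking constants gives the claimed $9d^2(B+B_\epsilon+B_{\widehat\mu})$ bound. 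Moreover $\det(\widetilde A)$ is a \emph{common} denominator for all coordinates $v_i$, which proves the second sentence of the lemma.

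The main obstacle is purely bookkeeping: one must be careful that the row-scaling to clear denominators does not blow up bit-complexity beyond $\mathcal{O}(d(B+\dots))$ per row — this is fine because each row has at most $d$ entries, each with denominator bit-complexity $\mathcal{O}(B)$ (resp. $\mathcal{O}(B_\epsilon), \mathcal{O}(B_{\widehat\mu})$), so the product of denominators in a row has bit-complexity $\mathcal{O}(dB)$ — and that the Hadamard/cofactor bound is applied with the right exponent in $d$. The one subtlety worth stating explicitly is why exactly $d$ hyperplanes are tight and include the normalization constraint: because $\mathcal{P}$ lives in the affine hyperplane $\sum_\theta x_\theta = 1$, which has dimension $d-1$, and $\textnormal{vol}_{d-1}(\mathcal{P})>0$ forces $v$ to be a $0$-dimensional face of a full-dimensional (within that affine subspace) polytope, hence the intersection of $d-1$ further independent facet hyperplanes; together with the equality this is a nonsingular $d\times d$ system. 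A cleaner alternative — which I would mention as a remark — is to cite directly Lemma~D.1 of~\citet{bacchiocchi2024sample} or an analogous standard vertex-complexity bound, adapting only the coefficient-complexity inputs $B$, $B_\epsilon$, $B_{\widehat\mu}$; but since our system has the extra row of type (d) coming from $\widehat H$, giving the self-contained Cramer's-rule argument above is the safest route.
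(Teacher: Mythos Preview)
Your proposal is correct and follows essentially the same approach as the paper: write the vertex as the unique solution of a $d\times d$ system coming from the normalization constraint plus $d-1$ independent facet hyperplanes (separating, boundary, or $\widehat H$), clear denominators row by row to get an integer system, apply Cramer's rule, and bound the resulting determinants via Hadamard's inequality to obtain the $9d^2(B+B_\epsilon+B_{\widehat\mu})$ bit-complexity with $\det(\widetilde A)$ as the common denominator. The paper's proof differs only cosmetically---it splits into two cases according to whether $\widehat H$ is active at $v$ and cites Lemma~D.2 of \citet{bacchiocchi2024sample} for the case without $\widehat H$---but the underlying argument is identical to yours.
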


\begin{proof}
	We follow a line of reasoning similar to the proof of Lemma~D.2 in \citet{bacchiocchi2024sample}.
	Let $v$ be a vertex of the polytope $\mathcal{P}$.
	Then such a vertex lies on the hyperplane $H'$ ensuring that the sum of its components is equal to one. 
	Furthermore, it also belongs to a subset of $d-1$ linearly independent hyperplanes.
	These can be separating hyperplanes: 
	\[
	H_{ij}=\left \{ x \in \mathbb{R}^d \mid \sum_{\theta \in \Theta} \mu_\theta x_\theta (u_{\theta}(a_i) -u_{\theta}(a_j) ) = 0 \right \}
	\]
	 with $a_i,a_j \in \A$, boundary hyperplanes of the form $H_i = \{x \in \mathbb{R}^d \mid x_i > 0\}$, or the hyperplane $\widehat{H} \coloneqq \left\{ x \in \mathbb{R}^d \mid \sum_{\theta  \in \widetilde{\Theta}} \widehat{\mu}_{\theta}x_\theta \ge 2\epsilon \right\}$.
	Consequently, there exists a matrix $A \in \mathbb{Q}^{d \times d}$ and a vector $b \in \mathbb{Q}^{d}$ such that $Av=b$.
	
	Suppose that $v$ is not defined by the hyperplane $\widehat{H} \coloneqq \left\{ x \in \mathbb{R}^d \mid \sum_{\theta  \in \Theta} \widehat{\mu}_{\theta}x_\theta \ge 2\epsilon \right\}$.
	Then, each entry of the matrix $A$ is either equal to one or the quantity $\mu_{\theta} (u_{\theta}(a) -u_{\theta}(a')$ for some $\theta \in \Theta $ and $a,a' \in \mathcal{A}$.
	Thus, its bit-complexity is bounded by $B$.
	Similarly, each entry of the vector $b$ is either equal to one or zero.
	With a reasoning similar to the one applied in \citet{bacchiocchi2024sample}, the bit-complexity of $v$ is at most $9d^2(B_\mu + B_u)$. 
	
	Suppose instead that $v$ is defined also by the hyperplane $\widehat{H}$, corresponding to the last row of the matrix $A$ and the last component of the vector $b$.
	This hyperplane can be rewritten as:
	\begin{equation*}
		\widehat{H} = \left\{ x \in \mathbb{R}^d \mid \sum_{\theta  \in \widetilde{\Theta}} \frac{\widehat{\mu}_{\theta}}{2\epsilon} x_\theta \ge 1 \right\}.
	\end{equation*}
	Thus, each element of the last row of $A$ is either zero or the quantity $\nicefrac{\widehat{\mu}_{\theta}}{2\epsilon}$ for some $\theta \in \widetilde{\Theta}$.
	We observe that $\nicefrac{\widehat{\mu}_{\theta}}{2\epsilon}$ is a rational number with numerator bounded by $2^{B_{\widehat{\mu}} +B_\epsilon}$ and denominator bounded by $2^{B_{\widehat{\mu}} +B_\epsilon +1}$.
	Thus, we multiply the last row of $A$ and the last component of $b$ by a constant bounded by $2^{d(B_{\widehat{\mu}} +B_\epsilon+1)}$.
	The other rows of $A$ and the corresponding components of $b$ are multiplied instead by some constants bounded by $2^{4dB}$.
	This way, we obtain an equivalent system $A'v=b'$ with integer coefficients. 
	
	We define ${A'}(j)$ as the matrix obtained by substituting the $j$-th column of $A'$ with $b'$. Then, by Cramer's rule, the value of the $j$-th component of $v_j$ can be computed as follows:
	\begin{equation*}
		v_j = \frac{\det(A'(j))}{\det(A')} \,\,\,\ \textnormal{$\forall j \in [d]$}.
	\end{equation*} 
	We observe that both determinants are integer numbers as the entries of both $A'$ and $b'$ are all integers, thus by Hadamard's inequality we have:
	\begin{align*}
		|\det(A')| &\leq \prod_{i \in [d]}\sqrt{\sum_{j \in [d]} {{a}'_{ji}}^2 } \\
		&\leq \left( \prod_{i \in [d-1]}\sqrt{\sum_{j \in [d]} (2^{4dB})^2 } \right) \sqrt{\sum_{j \in [d]} (2^{d(B_{\widehat{\mu}} +B_\epsilon+1)})^2 } 	\\
		&=	\left( \prod_{i \in [d-1]}\sqrt{d (2^{4dB})^2 } \right)
		\sqrt{d 2^{2d(B_{\widehat{\mu}} +B_\epsilon+1)} } \\
		&= \left( \prod_{i \in [d-1]} d^{\frac{1}{2}} (2^{4dB}) \right)
		d^{\frac{1}{2}} 2^{d(B_{\widehat{\mu}} +B_\epsilon+1)} \\
		&= d^{\frac{d}{2}} (2^{4d(d-1)B}) 2^{d(B_{\widehat{\mu}} +B_\epsilon+1)} \\
		&\leq d^{\frac{d}{2}} (2^{4d^2B}) 2^{d(B_{\widehat{\mu}} +B_\epsilon+1)}
	\end{align*}
	With a reasoning similar to \citet{bacchiocchi2024sample}, we can show that that the bit-complexity $D_v$ of the vertex $v$ is bounded by:
	\begin{equation*}
		D_v \leq 9Bd^2 + 2(d(B_{\widehat{\mu}} +B_\epsilon+1)) 
		\leq 9d^2(B + B_\epsilon +B_{\widehat{\mu}})
	\end{equation*}
	Furthermore, this result holds when the denominator of every component $v_j$ of the vertex $v$ is written with the same denominator $\det(A')$, concluding the proof.
\end{proof}

\subsection{\texttt{Find-Face}}
\begin{algorithm}[H]
	\caption{\texttt{Find-Face}}\label{alg:find_face}
	\begin{algorithmic}[1]
		\Require The set of polytopes $\{\mathcal{X}_\epsilon(a_i)\}_{a_i \in \mathcal{C}}$, with volume larger than zero, and action $a_j \notin \mathcal{C}$ 
		\State Compute the minimal H-representation $\mathcal{M}(\mathcal{X}_\epsilon(a_i))$ for every polytope $\mathcal{X}_\epsilon(a_i), a_i \in \mathcal{C}$ 
		\State $\mathcal{H}(a_i) \gets \emptyset \quad \forall a_i \in \mathcal{C}$ 
		\State $\texttt{First} \gets \texttt{True}$
		\ForAll{$x \in \bigcup_{a_i \in \mathcal{C}}V(\mathcal{X}_\epsilon(a_i))$}
		\State $a \gets \texttt{Action-Oracle}(x)$
		\If{$a = a_j$}
		\If{$\texttt{First} = \texttt{False}$}
		\State $\mathcal{H}(a_i) \gets \{H \in \mathcal{M}(\mathcal{X}_\epsilon(a_i)) \mid x \in H, H \in \mathcal{H}(a_i) \} \quad \forall a_i \in \mathcal{C}$
		\Else
		\State $\mathcal{H}(a_i) \gets \{H \in \mathcal{M}(\mathcal{X}_\epsilon(a_i)) \mid x \in H \} \quad \forall a_i \in \mathcal{C}$
		\State $\texttt{First} \gets \texttt{False}$
		\EndIf
		\EndIf
		\EndFor
		\If{\texttt{First} = \texttt{True}}
		\State $\mathcal{F}_\epsilon(a_j) \gets \emptyset$
		\EndIf
		\State $\mathcal{F}_\epsilon(a_j) \gets \mathcal{X}_\epsilon(a_i) \cap \bigcap_{H \in \mathcal{H}(a_i)}H$ for any $a_i$ such that $\mathcal{X}_\epsilon(a_i) \cap \bigcap_{H \in \mathcal{H}(a_i)}H \neq \emptyset$ \label{line:return_h_representation}
		\State \textbf{Return} $\mathcal{F}_\epsilon(a_j)$ 
	\end{algorithmic}
\end{algorithm}

Algorithm~\ref{alg:find_face} takes in input the collection of polytopes $\{\mathcal{X}_\epsilon(a_i)\}_{a_i \in \mathcal{C}}$ and another action $a_j \notin \mathcal{C}$ such that $\text{vol}(\mathcal{X}_\epsilon(a_j))=0$, and outputs the H-representation of a (possibly improper) face of $\mathcal{X}_\epsilon(a_j)$ that contains all those vertices $x \in V(\mathcal{X}_\epsilon(a_j))$ where $a(x)=a_j$.
As we will show by means of a pair of technical lemmas, the polytope $\mathcal{X}_\epsilon(a_j)$ is a face of some other polytope $\mathcal{X}_\epsilon(a_k)$, with $a_k \in \mathcal{C}$.
Consequently, Algorithm~\ref{alg:find_face} looks for a face of some polytope $\mathcal{X}_\epsilon(a_k)$ containing the set of vertices $\mathcal{V}_\epsilon(a_j)$.

As a first step, Algorithm~\ref{alg:find_face} computes, for every action $a_i \in \mathcal{C}$, the set of hyperplanes $\mathcal{M}(\mathcal{X}_\epsilon(a_i))$ corresponding to the minimal H-representation of $\mathcal{X}_\epsilon(a_i)$.
This set includes every separating hyperplane $H_{ik}$ found by Algorithm~\ref{alg:find_non_zero_vol}, together with the non-redundant boundary hyperplanes that delimit $\mathcal{X}_\epsilon$. 
Subsequently, Algorithm~\ref{alg:find_face} iterates over the vertices of the regions with volume larger than zero, which we prove to include all the vertices of the region $\mathcal{X}_\epsilon(a_j)$.
While doing so, it builds a set of hyperplanes $\mathcal{H}(a_i) \subseteq \mathcal{M}(\mathcal{X}_\epsilon(a_i))$ for every action $a_i \in \mathcal{C}$.
Such a (possibly empty) set includes all and only the hyperplanes in $\mathcal{M}(\mathcal{X}_\epsilon(a_i))$ that contain all the vertices where the action $a_j$ has been observed, \emph{i.e}, $a(x)=a_j$.

Finally, at Line~\ref{line:return_h_representation} Algorithm~\ref{alg:find_face} intersects every region $\mathcal{X}_\epsilon(a_i)$ with the corresponding hyperplanes in $\mathcal{H}(a_i)$, obtaining a (possibly empty) face for every polytope $\mathcal{X}_\epsilon(a_i)$, $a_i \in \mathcal{C}$.
At least one of these faces is the face the algorithm is looking for, and corresponds to the output of Algorithm~\ref{alg:find_face}.


The main result concerning Algorithm~\ref{alg:find_face} is the following:
\begin{restatable}{lemma}{FindHReprZeroVol}
	\label{lem:find_face}
	Given the collection of polytopes $\{\mathcal{X}_\epsilon(a_i)\}_{a_i \in \mathcal{C}}$ with volume larger than zero and an another action $a_j$, then, under the event $\mathcal{E}^\textnormal{a}$, Algorithm~\ref{alg:find_face} returns a (possibly improper) face $\mathcal{F}_\epsilon(a_j)$ of $\mathcal{X}_\epsilon(a_j)$ such that $\mathcal{V}_\epsilon(a_j) \subseteq \mathcal{F}_\epsilon(a_j)$.
	Furthermore, the Algorithm requires $\mathcal{O}(n\binom{d+n}{d})$ calls to Algorithm~\ref{alg:action_oracle}.
\end{restatable}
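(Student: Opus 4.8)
The plan is to combine the clean event $\mathcal{E}^\textnormal{a}$ of Algorithm~\ref{alg:action_oracle} with two purely polyhedral facts about the family $\{\mathcal{X}_\epsilon(a)\}_{a\in\mathcal{A}}$, and then to check that the sets $\mathcal{H}(a_i)$ maintained by Algorithm~\ref{alg:find_face} record exactly the information needed to carve out a face of $\mathcal{X}_\epsilon(a_j)$ containing $\mathcal{V}_\epsilon(a_j)$. I first dispatch the query count: under $\mathcal{E}^\textnormal{a}$ every call of Algorithm~\ref{alg:action_oracle} on a normalized slice $x$ returns $a(x)$, and Algorithm~\ref{alg:find_face} makes exactly one such call per element of $\bigcup_{a_i\in\mathcal{C}}V(\mathcal{X}_\epsilon(a_i))$ (the minimal H-representations and the set intersections are offline and cost no rounds). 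Each $\mathcal{X}_\epsilon(a_i)$ is a polytope in the $(d-1)$-dimensional simplex cut out by the $n-1$ separating hyperplanes $H_{ik}$, the $d$ simplex facets, and the single extra hyperplane $\widehat H$, hence has $\mathcal{O}(d+n)$ facets and at most $\mathcal{O}\big(\binom{d+n}{d}\big)$ vertices (each a meet of $d-1$ facets). Since $|\mathcal{C}|\le n$, the total number of queries is $\mathcal{O}\big(n\binom{d+n}{d}\big)$.

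The two structural facts concern the arrangement of all separating hyperplanes $H_{pq}$, all simplex facets, and $\widehat H$, restricted to $\mathcal{X}_\epsilon$. Since $\mathcal{X}_\epsilon$ is $(d-1)$-dimensional and covered by the convex sets $\mathcal{X}_\epsilon(a)$, every full-dimensional cell of this arrangement equals the closure of the relative interior of some $\mathcal{X}_\epsilon(a)$ with $\textnormal{vol}(\mathcal{X}_\epsilon(a))>0$, and by Lemma~\ref{lem:find_non_zero_vol} these are precisely the $a\in\mathcal{C}$. When $\textnormal{vol}(\mathcal{X}_\epsilon(a_j))=0$, the convex set $\mathcal{X}_\epsilon(a_j)$ is a lower-dimensional face of the arrangement, so it is a (proper) face of at least one full-dimensional cell: there is $a_k\in\mathcal{C}$ with $\mathcal{X}_\epsilon(a_j)=\mathcal{X}_\epsilon(a_k)\cap\bigcap_{H\in\mathcal{G}}H$ for some subset $\mathcal{G}$ of the minimal H-representation $\mathcal{M}(\mathcal{X}_\epsilon(a_k))$. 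Moreover each vertex of $\mathcal{X}_\epsilon(a_j)$ is a $0$-face of the arrangement, hence a vertex of some full-dimensional cell, so $V(\mathcal{X}_\epsilon(a_j))\subseteq\bigcup_{a_i\in\mathcal{C}}V(\mathcal{X}_\epsilon(a_i))$. These are the ``pair of technical lemmas'' alluded to in the text, and making them rigorous — in particular, that $\mathcal{X}_\epsilon(a_j)$ is genuinely a face and that its face structure is recoverable from the H-representations of the full-dimensional regions — is the step I expect to require the most care.

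Next I analyze the loop. Under $\mathcal{E}^\textnormal{a}$, for $x\in\bigcup_{a_i\in\mathcal{C}}V(\mathcal{X}_\epsilon(a_i))$ the test ``$a=a_j$'' succeeds iff $a(x)=a_j$. If it does, then $a_j\in\mathcal{A}^{\phi}(s)$ for the two-signal scheme used by Algorithm~\ref{alg:action_oracle}, i.e.\ $x$ satisfies every $\mathcal{H}_{jl}$; since $x$ is a normalized slice in $\mathcal{X}_\epsilon$, this gives $x\in\mathcal{X}_\epsilon(a_j)$, and being a $0$-face of the arrangement it is a vertex of $\mathcal{X}_\epsilon(a_j)$. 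Hence the ``relevant'' vertices processed by the loop are exactly $\mathcal{V}_\epsilon(a_j)=\{x\in V(\mathcal{X}_\epsilon(a_j))\mid a(x)=a_j\}$ — the inclusion $\supseteq$ uses the vertex-containment fact, so the loop does visit all of them. Therefore at termination $\mathcal{H}(a_i)=\{H\in\mathcal{M}(\mathcal{X}_\epsilon(a_i))\mid\mathcal{V}_\epsilon(a_j)\subseteq H\}$ for every $a_i\in\mathcal{C}$, and the flag $\texttt{First}$ stays $\texttt{True}$ exactly when $\mathcal{V}_\epsilon(a_j)=\varnothing$, in which case the output $\varnothing$ trivially satisfies $\mathcal{V}_\epsilon(a_j)\subseteq\mathcal{F}_\epsilon(a_j)$.

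Finally, assume $\mathcal{V}_\epsilon(a_j)\neq\varnothing$ and take $a_k$ and $\mathcal{G}$ as above. Since $\mathcal{V}_\epsilon(a_j)\subseteq\mathcal{X}_\epsilon(a_j)\subseteq H$ for all $H\in\mathcal{G}$, we get $\mathcal{G}\subseteq\mathcal{H}(a_k)$, hence
\[
\mathcal{V}_\epsilon(a_j)\ \subseteq\ \mathcal{X}_\epsilon(a_k)\cap\bigcap_{H\in\mathcal{H}(a_k)}H\ \subseteq\ \mathcal{X}_\epsilon(a_k)\cap\bigcap_{H\in\mathcal{G}}H\ =\ \mathcal{X}_\epsilon(a_j),
\]
so this polytope is nonempty and, being obtained from the face $\mathcal{X}_\epsilon(a_j)$ of $\mathcal{X}_\epsilon(a_k)$ by intersection with further facet-hyperplanes of $\mathcal{X}_\epsilon(a_k)$ (each of which supports $\mathcal{X}_\epsilon(a_j)$), it is itself a face of $\mathcal{X}_\epsilon(a_j)$ containing $\mathcal{V}_\epsilon(a_j)$. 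Thus at least one choice in the last line of Algorithm~\ref{alg:find_face} is valid; the remaining point is to invoke the structural lemmas once more to see that \emph{any} $a_i\in\mathcal{C}$ for which $\mathcal{X}_\epsilon(a_i)\cap\bigcap_{H\in\mathcal{H}(a_i)}H$ is nonempty also yields a face of $\mathcal{X}_\epsilon(a_j)$ containing $\mathcal{V}_\epsilon(a_j)$, so the output is correct regardless of the selection. Assembling these pieces — query count, structural lemmas, identification of the relevant vertices, and the face-reconstruction identity above — proves the lemma, with the polyhedral bookkeeping around the zero-volume regions being the crux.
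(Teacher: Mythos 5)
Your proposal follows essentially the same route as the paper: the query count is identical, your two ``structural facts'' are exactly the paper's Lemmas~\ref{lem:zero_vol_face} and~\ref{lem:intersection_zero_vol_face} (the paper proves them by direct convexity arguments on $\mathcal{X}_\epsilon(a_j)\cap H_{ij}$ rather than by appealing to the hyperplane arrangement, but the content is the same), and your existence argument via $\mathcal{G}\subseteq\mathcal{H}(a_k)$ is a clean variant of the paper's, which only exhibits a single vertex in $\mathcal{X}_\epsilon(a_k)\cap\bigcap_{H\in\mathcal{H}(a_k)}H$.

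The genuine gap is the step you explicitly defer at the end: showing that \emph{any} $a_i\in\mathcal{C}$ with $\mathcal{X}_\epsilon(a_i)\cap\bigcap_{H\in\mathcal{H}(a_i)}H\neq\varnothing$ yields a face of $\mathcal{X}_\epsilon(a_j)$ containing $\mathcal{V}_\epsilon(a_j)$. This is not a routine ``invoke the structural lemmas once more'': since Line~\ref{line:return_h_representation} of Algorithm~\ref{alg:find_face} permits an arbitrary nonempty choice, the lemma is false unless every such choice is validated, and your sandwich argument only works for the distinguished $a_k$ whose region has $\mathcal{X}_\epsilon(a_j)$ as a face. The paper's proof handles the selected $a_i$ by taking the subset $\mathcal{H}'(a_i)\subseteq\mathcal{M}(\mathcal{X}_\epsilon(a_i))$ that carves out the common face $\mathcal{X}_\epsilon(a_i)\cap\mathcal{X}_\epsilon(a_j)$ (which exists by Lemma~\ref{lem:intersection_zero_vol_face}), arguing that $\mathcal{H}'(a_i)\subseteq\mathcal{H}(a_i)$ because $\mathcal{V}_\epsilon(a_j)\subseteq\mathcal{X}_\epsilon(a_i)\cap\mathcal{X}_\epsilon(a_j)$, and concluding $\mathcal{F}_\epsilon(a_i)\subseteq\mathcal{X}_\epsilon(a_i)\cap\mathcal{X}_\epsilon(a_j)$, so that $\mathcal{F}_\epsilon(a_i)$ is a face of a face of $\mathcal{X}_\epsilon(a_j)$. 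Note that this containment argument itself rests on $\mathcal{V}_\epsilon(a_j)\subseteq\mathcal{X}_\epsilon(a_i)$ for the selected $a_i$, so any complete write-up must either justify that inclusion for every admissible choice or restrict the selection rule; simply gesturing at the structural lemmas, as your proposal does, leaves the correctness of the returned object unproven for all but one choice.
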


In order to prove it, we first need to introduce two technical lemmas to characterize the relationship between regions with null volume and those with volume larger than zero.
\begin{restatable}{lemma}{zeroVolFace2}
	\label{lem:intersection_zero_vol_face}
	Let $a_i, a_j \in \mathcal{A}$ such that $\text{vol}(\mathcal{X}_\epsilon(a_i))>0$ and $\text{vol}(\mathcal{X}_\epsilon(a_j))=0$.
	Then $\mathcal{X}_\epsilon(a_i) \cap \mathcal{X}_\epsilon(a_j)$ is a (possibly improper) face of $\mathcal{X}_\epsilon(a_i)$ and $\mathcal{X}_\epsilon(a_j)$.
\end{restatable}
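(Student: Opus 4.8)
The plan is to show that $\mathcal{X}_\epsilon(a_i)\cap\mathcal{X}_\epsilon(a_j)$ is exactly the intersection of each of the two polytopes with the single separating hyperplane $H_{ij}$, and then to invoke the standard fact that intersecting a polytope with one of its supporting hyperplanes produces a face.

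First I would record two facts that are immediate from the definitions in Section~\ref{sec:slices}. By construction $\mathcal{X}_\epsilon(a_i)\subseteq\mathcal{H}_{ij}$ and $\mathcal{X}_\epsilon(a_j)\subseteq\mathcal{H}_{ji}$, since each polytope is the intersection of $\mathcal{X}_\epsilon$ with all halfspaces $\mathcal{H}_{ik}$ (respectively $\mathcal{H}_{jk}$), in particular the one associated with the other action. Moreover $\mathcal{H}_{ij}\cap\mathcal{H}_{ji}=H_{ij}$, and $H_{ij}$ is a genuine hyperplane: under the (w.l.o.g.) assumption that no two receiver's actions are equivalent and since $\mu\in\textnormal{int}(\Delta_\Theta)$, the coefficient vector with entries $\mu_\theta\big(u_\theta(a_i)-u_\theta(a_j)\big)$ is not the zero vector. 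Combining these, $\mathcal{X}_\epsilon(a_i)\cap\mathcal{X}_\epsilon(a_j)\subseteq H_{ij}$.

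Next I would prove the two ``transfer'' inclusions $\mathcal{X}_\epsilon(a_i)\cap H_{ij}\subseteq\mathcal{X}_\epsilon(a_j)$ and, symmetrically, $\mathcal{X}_\epsilon(a_j)\cap H_{ij}\subseteq\mathcal{X}_\epsilon(a_i)$. For the first one, take $x\in\mathcal{X}_\epsilon(a_i)\cap H_{ij}$; then $x\in\mathcal{X}_\epsilon$, membership in all $\mathcal{H}_{ik}$ gives $\sum_{\theta}x_\theta\mu_\theta u_\theta(a_i)\ge\sum_\theta x_\theta\mu_\theta u_\theta(a_k)$ for every $a_k\in\mathcal{A}$, and membership in $H_{ij}$ gives $\sum_\theta x_\theta\mu_\theta u_\theta(a_i)=\sum_\theta x_\theta\mu_\theta u_\theta(a_j)$; chaining the equality with the inequalities yields $\sum_\theta x_\theta\mu_\theta u_\theta(a_j)\ge\sum_\theta x_\theta\mu_\theta u_\theta(a_k)$ for every $a_k$, i.e. $x\in\mathcal{H}_{jk}$ for all $k$, hence $x\in\mathcal{X}_\epsilon(a_j)$. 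The second inclusion is identical with the roles of $a_i$ and $a_j$ swapped. Together with the containment of the previous paragraph these give $\mathcal{X}_\epsilon(a_i)\cap\mathcal{X}_\epsilon(a_j)=\mathcal{X}_\epsilon(a_i)\cap H_{ij}=\mathcal{X}_\epsilon(a_j)\cap H_{ij}$.

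To finish, since $\mathcal{X}_\epsilon(a_i)$ is a polytope contained entirely in the closed halfspace $\mathcal{H}_{ij}$ whose boundary hyperplane is $H_{ij}$, the hyperplane $H_{ij}$ supports $\mathcal{X}_\epsilon(a_i)$, so $\mathcal{X}_\epsilon(a_i)\cap H_{ij}$ is a face of $\mathcal{X}_\epsilon(a_i)$ (the empty face if the intersection is empty, the improper face $\mathcal{X}_\epsilon(a_i)$ itself if $\mathcal{X}_\epsilon(a_i)\subseteq H_{ij}$, and a proper face otherwise); the same argument applied to $\mathcal{X}_\epsilon(a_j)$ and $\mathcal{H}_{ji}$ shows $\mathcal{X}_\epsilon(a_j)\cap H_{ij}$ is a face of $\mathcal{X}_\epsilon(a_j)$. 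By the equalities just established this common set is $\mathcal{X}_\epsilon(a_i)\cap\mathcal{X}_\epsilon(a_j)$, which is therefore a (possibly improper) face of both polytopes, as claimed. I do not expect a genuine obstacle here: the only points needing a little care are verifying that $H_{ij}$ actually supports each polytope (immediate from $\mathcal{X}_\epsilon(a)\subseteq\mathcal{H}_{ij}$) and noting that the ``polytope $\cap$ supporting hyperplane $=$ face'' statement remains valid even though $\mathcal{X}_\epsilon(a_i)$ and $\mathcal{X}_\epsilon(a_j)$ are not full-dimensional in $\mathbb{R}^d$, faces being intrinsic to the polytope. In fact the volume hypotheses $\text{vol}(\mathcal{X}_\epsilon(a_i))>0$ and $\text{vol}(\mathcal{X}_\epsilon(a_j))=0$ are not used; they merely describe the context in which Lemma~\ref{lem:find_face} invokes this result.
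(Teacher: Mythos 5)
Your proposal is correct and follows essentially the same route as the paper's proof: establishing $\mathcal{X}_\epsilon(a_i)\cap\mathcal{X}_\epsilon(a_j)=\mathcal{X}_\epsilon(a_i)\cap H_{ij}=\mathcal{X}_\epsilon(a_j)\cap H_{ij}$ and then invoking the containment of each polytope in the corresponding closed halfspace to conclude that the intersection with the boundary hyperplane is a (possibly improper) face. Your explicit chaining of the best-response inequalities for the transfer inclusions, and your observation that the volume hypotheses are never actually used, are both accurate refinements of the paper's somewhat terser argument.
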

\begin{proof}
	In the following we assume that $\mathcal{X}_\epsilon(a_i) \cap \mathcal{X}_\epsilon(a_j)$ is non-empty, as the empty set is an improper face of every polytope.
	
	
	We prove that $\mathcal{X}_\epsilon(a_i) \cap \mathcal{X}_\epsilon(a_j) = \mathcal{X}_\epsilon(a_i) \cap H_{ij} = \mathcal{X}_\epsilon(a_j) \cap H_{ij}$.
	In order to do that, we first show that $\mathcal{X}_\epsilon(a_i) \cap H_{ij} \subseteq \mathcal{X}_\epsilon(a_i) \cap \mathcal{X}_\epsilon(a_j)$.
	Consider a normalized slice $x \in \mathcal{X}_\epsilon(a_i) \cap H_{ij}$.
	Then we have that$x \in \mathcal{X}_\epsilon(a_j)$.
	As this holds for every $x \in \mathcal{X}_\epsilon(a_i) \cap H_{ij}$, it follows that $\mathcal{X}_\epsilon(a_i) \cap H_{ij} \subseteq \mathcal{X}_\epsilon(a_i) \cap \mathcal{X}_\epsilon(a_j)$.
	
	Similarly, we show that $\mathcal{X}_\epsilon(a_i) \cap \mathcal{X}_\epsilon(a_j) \subseteq \mathcal{X}_\epsilon(a_i) \cap H_{ij}$.
	Take any normalized slice $x \in \mathcal{X}_\epsilon(a_i) \cap \mathcal{X}_\epsilon(a_j)$. 
	Then $x \in H_{ij}$ as it belongs to both $\mathcal{X}_\epsilon(a_i)$ and $\mathcal{X}_\epsilon(a_j)$, thus $x \in \mathcal{X}_\epsilon(a_i) \cap H_{ij}$.
	This implies that $\mathcal{X}_\epsilon(a_i) \cap \mathcal{X}_\epsilon(a_j) \subseteq \mathcal{X}_\epsilon(a_i) \cap H_{ij}$.
	
	Consequently, we have that $\mathcal{X}_\epsilon(a_i) \cap \mathcal{X}_\epsilon(a_j) = \mathcal{X}_\epsilon(a_i) \cap H_{ij}$.
	With a similar argument, we can prove that $\mathcal{X}_\epsilon(a_i) \cap \mathcal{X}_\epsilon(a_j) = \mathcal{X}_\epsilon(a_j) \cap H_{ij}$.
	As a result, we have that $\mathcal{X}_\epsilon(a_i) \cap \mathcal{X}_\epsilon(a_j) = \mathcal{X}_\epsilon(a_i) \cap H_{ij} = \mathcal{X}_\epsilon(a_j) \cap H_{ij}$.
	
	In order to conclude the proof, we show that $\mathcal{X}_\epsilon(a_i) \cap \mathcal{X}_\epsilon(a_j) = \mathcal{X}_\epsilon(a_i) \cap H_{ij} = \mathcal{X}_\epsilon(a_j) \cap H_{ij}$ is a face of both $\mathcal{X}_\epsilon(a_i)$ and $\mathcal{X}_\epsilon(a_j)$.
	We observe that $\mathcal{X}_\epsilon(a_i) \subseteq \mathcal{H}_{ij}$,
	thus the non-empty region $\mathcal{X}_\epsilon(a_i) \cap H_{ij}$ is by definition a face of $\mathcal{X}_\epsilon(a_i)$.
	Similarly, $\mathcal{X}_\epsilon(a_j) \subseteq \mathcal{H}_{ji}$, thus the non-empty region $\mathcal{X}_\epsilon(a_j) \cap H_{ij}$ is a face of $\mathcal{X}_\epsilon(a_j)$ (possibly the improper face $\mathcal{X}_\epsilon(a_j)$ itself).
\end{proof}

\begin{restatable}{lemma}{zeroVolFace}
	\label{lem:zero_vol_face}
	Let $\mathcal{X}_\epsilon(a_j)$ be a polytope such that $\text{vol}(\mathcal{X}_\epsilon(a_j))=0$.
	Then $\mathcal{X}_\epsilon(a_j)$ is a face of some polytope $\mathcal{X}_\epsilon(a_i)$ with $\text{vol}(\mathcal{X}_\epsilon(a_i))>0$.
\end{restatable}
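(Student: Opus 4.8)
The plan is to reduce the statement to the already-established Lemma~\ref{lem:intersection_zero_vol_face}. Concretely, I would show that whenever $\text{vol}(\mathcal{X}_\epsilon(a_j))=0$ there exists an action $a_i$ with $\text{vol}(\mathcal{X}_\epsilon(a_i))>0$ and $\mathcal{X}_\epsilon(a_j)\subseteq\mathcal{X}_\epsilon(a_i)$. Granting this, $\mathcal{X}_\epsilon(a_i)\cap\mathcal{X}_\epsilon(a_j)=\mathcal{X}_\epsilon(a_j)$, and Lemma~\ref{lem:intersection_zero_vol_face} then says precisely that $\mathcal{X}_\epsilon(a_j)$ is a (possibly improper) face of $\mathcal{X}_\epsilon(a_i)$, which is the claim. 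The case $\mathcal{X}_\epsilon(a_j)=\varnothing$ is immediate, since the empty set is the improper empty face of any polytope and at least one $\mathcal{X}_\epsilon(a_i)$ with positive volume exists (see the next paragraph); so I would assume $\mathcal{X}_\epsilon(a_j)\neq\varnothing$ from now on.

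The first ingredient I would establish is an auxiliary covering fact: the regions of positive volume already cover $\mathcal{X}_\epsilon$. Write $W\coloneqq\{x\in\mathbb{R}^d\mid\sum_{\theta\in\Theta}x_\theta=1\}$ for the affine hull of the simplex $\mathcal{X}$. The polytope $\mathcal{X}_\epsilon$ is full-dimensional in $W$ (under the clean event $\mathcal{E}_1$ it contains, for instance, a simplex vertex $e_\theta$ with $\widehat{\mu}_\theta>2\epsilon$ strictly, hence a $(d-1)$-dimensional neighborhood within $\mathcal{X}$). Since the $\mathcal{X}_\epsilon(a)$ cover $\mathcal{X}_\epsilon$, split $\mathcal{X}_\epsilon=P\cup Q$, where $P$ is the union of the $\mathcal{X}_\epsilon(a)$ with $\text{vol}(\mathcal{X}_\epsilon(a))>0$ and $Q$ the union of those with $\text{vol}(\mathcal{X}_\epsilon(a))=0$. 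Then $P$ is closed (a finite union of polytopes) and $Q$ has $(d-1)$-dimensional measure zero, so $\textnormal{int}(\mathcal{X}_\epsilon)\setminus P$ is an open subset of $W$ contained in $Q$, hence empty; therefore $\textnormal{int}(\mathcal{X}_\epsilon)\subseteq P$, and taking closures (relative to $W$) gives $\mathcal{X}_\epsilon=\overline{\textnormal{int}(\mathcal{X}_\epsilon)}\subseteq P$. Thus $P=\mathcal{X}_\epsilon$.

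Then I would pick $x^\star$ in the relative interior of $\mathcal{X}_\epsilon(a_j)$. By the covering fact, $x^\star\in\mathcal{X}_\epsilon(a_i)$ for some $a_i$ with $\text{vol}(\mathcal{X}_\epsilon(a_i))>0$, and necessarily $a_i\neq a_j$. Both $a_i$ and $a_j$ are best responses at $x^\star$ (since $\mathcal{X}_\epsilon(a)\subseteq\mathcal{X}(a)$), so they are tied there, i.e.\ $x^\star\in H_{ij}$. Because $\mathcal{X}_\epsilon(a_j)\subseteq\mathcal{X}(a_j)\subseteq\mathcal{H}_{ji}$, the hyperplane $H_{ij}=\partial\mathcal{H}_{ji}$ is a supporting hyperplane of the convex set $\mathcal{X}_\epsilon(a_j)$, and it contains the relative-interior point $x^\star$; by the standard fact that a supporting hyperplane meeting the relative interior of a convex set contains the whole set, $\mathcal{X}_\epsilon(a_j)\subseteq H_{ij}$. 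Finally, for any $x\in\mathcal{X}_\epsilon(a_j)$ we have $\sum_{\theta\in\Theta}\mu_\theta x_\theta u_\theta(a_i)=\sum_{\theta\in\Theta}\mu_\theta x_\theta u_\theta(a_j)\ge\sum_{\theta\in\Theta}\mu_\theta x_\theta u_\theta(a_m)$ for every $a_m\in\mathcal{A}$ — the equality from $x\in H_{ij}$, the inequality because $a_j$ is a best response at $x$ — so $a_i$ is also a best response at $x$, i.e.\ $x\in\mathcal{X}(a_i)$. Together with $\mathcal{X}_\epsilon(a_j)\subseteq\mathcal{X}_\epsilon$ this yields $\mathcal{X}_\epsilon(a_j)\subseteq\mathcal{X}(a_i)\cap\mathcal{X}_\epsilon=\mathcal{X}_\epsilon(a_i)$, completing the reduction.

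The main obstacle is the combination of the covering fact with the ``supporting hyperplane through a relative-interior point'' step: together these are what upgrade a single point $x^\star$ at which some positive-volume action $a_i$ is a best response into the global containment $\mathcal{X}_\epsilon(a_j)\subseteq\mathcal{X}(a_i)$. The rest is bookkeeping with the definitions of $\mathcal{X}(a)$, $\mathcal{H}_{ij}$ and $\mathcal{X}_\epsilon(a)$, plus the single appeal to Lemma~\ref{lem:intersection_zero_vol_face}. I would also note explicitly that the statement is only meaningful when $\text{vol}(\mathcal{X}_\epsilon)>0$ (otherwise no positive-volume region exists), which holds under $\mathcal{E}_1$.
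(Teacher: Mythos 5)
Your proof is correct and follows essentially the same route as the paper's: reduce to Lemma~\ref{lem:intersection_zero_vol_face} by showing $\mathcal{X}_\epsilon(a_j)\subseteq\mathcal{X}_\epsilon(a_i)$ for some positive-volume region, obtained by taking a relative-interior point of $\mathcal{X}_\epsilon(a_j)$, locating it in a positive-volume region, and using the supporting-hyperplane-through-a-relative-interior-point fact to conclude $\mathcal{X}_\epsilon(a_j)\subseteq H_{ij}\subseteq\mathcal{H}_{ij}$. The only differences are in the justification of individual steps (your measure-theoretic covering argument and direct tie-at-$x^\star$ observation replace the paper's contradiction via a segment crossing $H_{ij}$), and these are if anything cleaner than the original.
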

\begin{proof}
	First, we observe that if $\mathcal{X}_\epsilon(a_j)$ is empty, then it is the improper face of any region $\mathcal{X}_\epsilon(a_i)$ with $\text{vol}(\mathcal{X}_\epsilon(a_i))>0$.
	Thus, in the following, we consider $\mathcal{X}_\epsilon(a_j)$ to be non-empty.
	
	As a first step, we observe that any normalized slice $x \in \mathcal{X}_\epsilon(a_j)$ belongs also to some region $\mathcal{X}_\epsilon(a_k)$, where $a_k \in \mathcal{A}$ depends on $x$, such that $\text{vol}(\mathcal{X}_\epsilon(a_k))>0$.
	Suppose, by contradiction, that $x \in \text{int}(\mathcal{X}_\epsilon(a_i))$.
	Then $a_i$ is a best-response in $\mathcal{X}_\epsilon(a_j) \cap \mathcal{H}_{ij}$, \emph{i.e.}, $\mathcal{X}_\epsilon(a_j) \cap \mathcal{H}_{ij} \subseteq \mathcal{X}_\epsilon(a_i)$.
	One can easily observe that such a region has positive volume, thus contradicting the hypothesis that $\text{vol}(\mathcal{X}_\epsilon(a_j))=0$.
	
	Now we prove that there exists some $\mathcal{X}_\epsilon(a_i)$ with $\text{vol}(\mathcal{X}_\epsilon(a_i))>0$ such that $\mathcal{X}_\epsilon(a_j) \subseteq \mathcal{X}_\epsilon(a_i)$. 
	If $\mathcal{X}_\epsilon(a_j)$ is a single normalized slice $x$, then this trivially holds.
	
	Suppose instead that $\mathcal{X}_\epsilon(a_j)$ has dimension at least one.
	Consider a fixed normalized slice $\bar{x} \in \text{int}(\mathcal{X}_\epsilon(a_j))$, where the interior is taken relative to the subspace that contains $\mathcal{X}_\epsilon(a_j)$ and has minimum dimension.
	There exists a region $\mathcal{X}_\epsilon(a_i)$ with $\text{vol}(\mathcal{X}_\epsilon(a_i))>0$ such that $\bar{x} \in \mathcal{X}_\epsilon(a_i)$.
	
	We prove that $\mathcal{X}_\epsilon(a_j) \subseteq \mathcal{X}(a_i)$.
	Suppose, by contradiction, that there exists a normalized slice $x \in \mathcal{X}_\epsilon(a_j)$ such that $x \notin \mathcal{X}_\epsilon(a_i)$.
	It follows that the line segment $\text{co}(\bar{x},x)$ intersect the separating hyperplane $H_{ij}$ in some normalized slice $\widetilde{x} \in \text{co}(\bar{x},x) \cap H_{ij}$.
	Furthermore, since $\widetilde{x} \neq x$ and $\bar{x} \in \text{int}(\mathcal{X}_\epsilon(a_j))$, then $\widetilde{x} \in \text{int}(\mathcal{X}_\epsilon(a_j))$.
	However, if the internal point $\widetilde{x}$ belongs to the hyperplane $H_{ij}$ and $\mathcal{X}_\epsilon(a_j) \subseteq \mathcal{H}_{ji}$, then it must be the case that $\mathcal{X}_\epsilon(a_j) \subseteq H_{ij}$.
	This implies that $\mathcal{X}_\epsilon(a_j) \subseteq \mathcal{X}_\epsilon(a_i)$ and thus $x \in \mathcal{X}_\epsilon(a_i)$, which contradicts the hypothesis.. 
	
	Given that there exists some $\mathcal{X}_\epsilon(a_i)$ with $\text{vol}(\mathcal{X}_\epsilon(a_i))>0$ such that $\mathcal{X}_\epsilon(a_j) \subseteq \mathcal{X}_\epsilon(a_i)$, then $\mathcal{X}_\epsilon(a_j) = \mathcal{X}_\epsilon(a_i) \cap \mathcal{X}_\epsilon(a_j)$ is a face of $\mathcal{X}_\epsilon(a_i)$ by Lemma~\ref{lem:intersection_zero_vol_face}.
\end{proof}

\FindHReprZeroVol*
\begin{proof}
	In the following, for the sake of notation, given a polytope $\mathcal{X}_\epsilon(a)$ and the a set of hyperplanes $\mathcal{H}(a)$, with an abuse of notation we denote with $\mathcal{X}_\epsilon(a) \cap \mathcal{H}(a)$ the intersection of $\mathcal{X}_\epsilon(a)$ with every hyperplane in $\mathcal{H}(a)$.
	Formally:
	\begin{equation}
		\label{eq:abuse_intersect_x_h}
		\mathcal{X}_\epsilon(a) \cap\ \mathcal{H}(a) \coloneqq \mathcal{X}_\epsilon(a) \cap \bigcap_{H \in \mathcal{H}(a)} H.
	\end{equation}
	
	Suppose that $\mathcal{X}_\epsilon(a_j) = \emptyset$.
	Then, one can easily verify that Algorithm~\ref{alg:find_face} returns $\emptyset$.
	Thus, in the following we assume $\mathcal{X}_\epsilon(a_j) \neq \emptyset$.
	
	Let $a_i$ be action selected at Line~\ref{line:return_h_representation} Algorithm~\ref{alg:find_face}.
	We denote with $\mathcal{F}_\epsilon(a_i)$ the face returned by Algorithm~\ref{alg:find_face}:
	\begin{equation*}
		\mathcal{F}_\epsilon(a_i) \coloneqq \mathcal{X}_\epsilon(a_i) \cap \mathcal{H}(a_i).
	\end{equation*}
	We observe that by Lemma~\ref{lem:zero_vol_face}, there exists an action $a_k \in \mathcal{C}$ such that $\mathcal{X}_\epsilon(a_j)$ is a face of $\mathcal{X}_\epsilon(a_k)$.
	Consequently, Algorithm~\ref{alg:find_face} queries every vertex $x \in \mathcal{V}_\epsilon(a_j)$.	
	
	As a first step we show that $\mathcal{F}_\epsilon(a_i)$ actually is a face of $\mathcal{X}_\epsilon(a_i)$ and contains every vertex of $\mathcal{V}_\epsilon(a_j)$.
	Being the non-empty intersection of $\mathcal{X}_\epsilon(a_i)$ with some hyperplanes in $\mathcal{M}(\mathcal{X}_\epsilon(a_i))$, $\mathcal{F}_\epsilon(a_i)$ is a face of $\mathcal{X}_\epsilon(a_i)$.
	One can easily prove by induction that $\mathcal{H}(a_i)$ includes all and only the hyperplanes within $\mathcal{M}(\mathcal{X}_\epsilon(a_i))$ containing every vertex in $\mathcal{V}_\epsilon(a_j)$.
	Thus, $\mathcal{V}_\epsilon(a_j) \subseteq \mathcal{F}_\epsilon(a_i)$.
	
	
	Now we show that $\mathcal{F}_\epsilon(a_i)$ is not only a (proper) face of $\mathcal{X}_\epsilon(a_i)$ containing the set $\mathcal{V}_\epsilon(a_j)$, but also a face of $\mathcal{X}_\epsilon(a_j)$ (possibly the improper face $\mathcal{X}_\epsilon(a_j)$ itself).
	We consider the set $\mathcal{X}_\epsilon(a_i) \cap \mathcal{X}_\epsilon(a_j)$, which is a face of both $\mathcal{X}_\epsilon(a_i)$ and $\mathcal{X}_\epsilon(a_j)$ thanks to Lemma~\ref{lem:intersection_zero_vol_face}.
	Thus, there exists some set of hyperplanes $\mathcal{H}'(a_i) \subset \mathcal{M}(\mathcal{X}_\epsilon(a_i))$ such that:
	\begin{equation}
		\label{eq:h_repr_i_cap_j}
		\mathcal{X}_\epsilon(a_i) \cap \mathcal{H}'(a_i) = \mathcal{X}_\epsilon(a_i) \cap \mathcal{X}_\epsilon(a_j).
	\end{equation}
	Furthermore, we observe that $\mathcal{V}_\epsilon(a_j) \subseteq \mathcal{X}_\epsilon(a_i) \cap \mathcal{X}_\epsilon(a_j)$.
	Indeed, we have that $\mathcal{V}_\epsilon(a_j) \subseteq \mathcal{X}_\epsilon(a_i)$ since $\mathcal{V}_\epsilon(a_j) \subseteq \mathcal{F}_\epsilon(a_i)$ and $\mathcal{F}_\epsilon(a_i)$ is a face of $\mathcal{X}_\epsilon(a_i)$, and $\mathcal{V}_\epsilon(a_j) \subseteq \mathcal{X}_\epsilon(a_j)$ by definition.
	
	We want to prove that $\mathcal{H}'(a_i) \subseteq \mathcal{H}(a_i)$, where the set $\mathcal{H}(a_i)$ contains all and only the hyperplanes within $\mathcal{M}(\mathcal{X}_\epsilon(a_i))$ that include the whole set $\mathcal{V}_\epsilon(a_j)$.
	In order to do that, suppose, by contradiction, that there exists a vertex $x \in \mathcal{V}_\epsilon(a_j)$ such that $x \notin H$ for some $H \in \mathcal{H}'(a_i)$.
	Then, $x \notin \mathcal{X}_\epsilon(a_i) \cap \mathcal{H}'(a_i) \subseteq H$.
	However, we proved that $\mathcal{V}_\epsilon(a_j) \subseteq \mathcal{X}_\epsilon(a_i) \cap \mathcal{X}_\epsilon(a_j)$ and $\mathcal{X}_\epsilon(a_i) \cap \mathcal{X}_\epsilon(a_j) = \mathcal{X}_\epsilon(a_i) \cap \mathcal{H}'(a_i)$ by definition of $\mathcal{H}'(a_i)$, reaching a contradiction.
	
	Consequently:
	\begin{align*}
		\mathcal{F}_\epsilon(a_i) \coloneqq 	\mathcal{X}_\epsilon(a_i) \cap \mathcal{H}(a_i) &= \mathcal{X}_\epsilon(a_i) \cap \bigcap_{H \in \mathcal{H}(a_i)} H \\
		&\subseteq \mathcal{X}_\epsilon(a_i) \cap \bigcap_{H \in \mathcal{H}'(a_i)} H \\
		&= \mathcal{X}_\epsilon(a_i) \cap \mathcal{H}'(a_i) \\
		&= \mathcal{X}_\epsilon(a_i) \cap \mathcal{X}_\epsilon(a_j),
	\end{align*}
	where we applied Equation~\ref{eq:abuse_intersect_x_h}, the fact that $\mathcal{H}'(a_i) \subseteq \mathcal{H}(a_i)$, and Equation~\ref{eq:h_repr_i_cap_j}.
	
	Finally, we can show that $\mathcal{F}_\epsilon(a_i)$ is a face of $\mathcal{X}_\epsilon(a_j)$.
	We have that $\mathcal{F}_\epsilon(a_i)$ is a face of $\mathcal{X}_\epsilon(a_i)$ and $\mathcal{F}_\epsilon(a_i) \subseteq \mathcal{X}_\epsilon(a_i) \cap \mathcal{X}_\epsilon(a_j)$, which is itself a face of $\mathcal{X}_\epsilon(a_i)$ by Lemma~\ref{lem:intersection_zero_vol_face}.
	Thus, $\mathcal{F}_\epsilon(a_i)$ is a face of $\mathcal{X}_\epsilon(a_i) \cap \mathcal{X}_\epsilon(a_j)$.
	Furthermore, Lemma~\ref{lem:intersection_zero_vol_face} states that $\mathcal{X}_\epsilon(a_i) \cap \mathcal{X}_\epsilon(a_j)$ is also a face of $\mathcal{X}_\epsilon(a_j)$.
	This implies that $\mathcal{F}_\epsilon(a_i)$ is a face of a face of $\mathcal{X}_\epsilon(a_j)$, and thus a face of $\mathcal{X}_\epsilon(a_j)$ itself. 
	
	In order to conclude the proof, we have to prove that at Line~\ref{line:return_h_representation} Algorithm~\ref{alg:find_face} can actually find an action $a_i$ such that $\mathcal{F}_\epsilon(a_i) = \mathcal{X}_\epsilon(a_i) \cap \mathcal{H}(a_i)$ is non-empty.
	Let $a_k \in \mathcal{C}$ be such that $\mathcal{X}_\epsilon(a_j)$ is a face of $\mathcal{X}_\epsilon(a_k)$, which exists thanks to Lemma~\ref{lem:zero_vol_face}.
	Let $x$ be any vertex in the set $\mathcal{V}_\epsilon(a_j)$ and define $\mathcal{H}''(a_j)$ as:
	\begin{equation*}
		\mathcal{H}''(a_k) \coloneqq \{H \in \mathcal{R}^H(\mathcal{X}_\epsilon(a_k)) \mid x \in H \}.
	\end{equation*}
	Then $\mathcal{X}_\epsilon(a_k) \cap \mathcal{H}''(a_k) = \{x\}$.
	Consequently, $\mathcal{H}(a_k) \subseteq \mathcal{H}''(a_k)$, and thus:
	\begin{align*}
		\mathcal{X}_\epsilon(a_k) \cap \mathcal{H}(a_k) &= \mathcal{X}_\epsilon(a_k) \cap \bigcap_{H \in \mathcal{H}(a_k)} H \\
		&\subseteq \mathcal{X}_\epsilon(a_k) \cap \bigcap_{H \in \mathcal{H}''(a_k)} H \\
		&= \mathcal{X}_\epsilon(a_k) \cap \mathcal{H}''(a_k) \\
		&= \{x\} \neq \emptyset.
	\end{align*}
	As a result, there is always an action $a_k \in \mathcal{C}$ such that $\mathcal{X}_\epsilon(a_k) \cap \mathcal{H}(a_k) \neq \emptyset$.
	
	Finally, we observe that Algorithm~\ref{alg:find_face} executes Algorithm~\ref{alg:action_oracle} once for every vertex in the set $\bigcup_{a_i \in \mathcal{C}} V(\mathcal{X}_\epsilon(a_i))$, which has size $\mathcal{O}(n\binom{d+n}{d})$.
\end{proof}

\section{Omitted proofs from Section~\ref{sec:lower_bounds_regret}}\label{appendix:lower_bounds_regret}
\hardnessfirst*
\begin{proof}
In the following, for the sake of the presentation, we consider a set of instances characterized by an even number $d \in \mathbb{N}_{+}$ of states of nature and $n = d+2$ receiver's actions. All the instances share the same uniform prior distribution and the same sender's utility, given by $u^\text{s}_\theta(a_{d+1})=1$ for all $\theta \in \Theta$, and $u^\text{s}_\theta(a)=0$ for all $\theta \in \Theta$ and $\forall a \in {\mathcal{A}} \setminus \{a_{d+1}\}$. 
Each instance is parametrized by a vector $p$ belonging to a set $\mathcal{P} $ defined as follows: 
\[
\mathcal{P} \coloneqq \left \{ p \in \{ 0,1\}^d \, | \, \sum_{i=1}^{d} p_i =  \frac{d}{2} \right \}.
\]
Furthermore, we assume that the receiver's utility in each instance $I_p$ is given by:
\[
\circled{$I_p$}\begin{cases}
	u_{\theta_i}(a_j)= \delta_{ij} \,\,\, \forall i,j \in [d],\\
	u_{\theta_i}(a_{d+1})= \frac{2}{d} p_i \,\,\, \forall i \in [d], \\
	u_{\theta_i}(a_{d+2})=\frac{2}{d} \,\,\, \forall i \in [d].
\end{cases}
\]
We show that $\xi_{\theta_i}' \coloneqq \frac{2}{d}p_i$ for each $ i \in [d]$ is the only posterior inducing the receiver's action $a_{d+1} \in \mathcal{A}$ in the instance $I_p$, since the receiver breaks ties in favor of the sender. To prove that, we observe that the action $a_{d+1}$ is preferred to the action $a_{d+2}$ only in those posteriors that satisfy the condition $\xi_{\theta_i}=0$ for each $i \in [d]$ with $p_i=0$. Furthermore, to incentivize the action $a_{d+1}$ over the set of actions $a_i$ with $i \in [d]$, the following condition must hold:
\[ \sum_{i \in [d]: p_i >0} \xi_{\theta_i} u_{\theta_i}(a_{n+1}) = \frac{2}{d}\sum_{i \in [d]: p_i >0} \xi_{\theta_i} 	= \frac{2}{d} \ge  \max_{i \in [d]: p_i >0} \xi_{\theta_i}.\,\, 
\]
We notice that the last step holds only if $\xi_{\theta_i} \le 2/d$ for each $i \in [d]$ such that $p_i>0$. Consequently, since the number of $\xi_{\theta_i} >0$ is equal to $d/2$, it holds $\xi_{\theta_i} = 2/d$ for each $i \in [d]$ such that $p_i>0$.
Thus, the only posterior inducing action $a_{d+1}$ is equal to $\xi_{\theta_i}' \coloneqq \frac{2}{d}p_i$.

We also notice that, given $p \in \mathcal{P}$, the optimal signaling scheme $\gamma$ is defined as $\gamma(\xi') = \nicefrac{1}{2}$ and $\gamma(\xi'') = \nicefrac{1}{2}$, with $\xi_\theta'' = \mu_\theta - \frac{1}{2} \xi'_\theta$ for each $\theta \in \Theta$. With a simple calculation, we can show that the expected sender's utility in $\gamma$ is equal to $\nicefrac{1}{2}$. 

We set the time horizon $T= \lfloor\nicefrac{|\mathcal{P}|}{4}\rfloor$ to show that any algorithm suffers regret of at least $2^{\Omega(d)}$. This is sufficient to prove the statement. We start by making the following simplifications about the behavior of the algorithm. First, we observe that if the algorithm can choose any posterior (instead of a signaling scheme), then this will only increase the performance of the algorithm. Consequently, we assume that the algorithm chooses a posterior $\xi_t$ at each round $t \in [T]$.

Thus, we can apply Yao's minimax principle to show that any deterministic algorithm fails against a distribution over instances. In the following, we consider a uniform distribution over instances $I_p$ with $p \in \mathcal{P}$. Furthermore, we observe that the feedback of any algorithm is actually binary. Thus, it is easy to see that an optimal algorithm works as follows: (i) it ignores the feedback whenever the action is not $a_{d+1}$, and (ii) it starts to play the optimal posterior when the action is $a_{d+1}$ since it found an optimal posterior.

This observation is useful for showing that any deterministic algorithm does not find a posterior that induces action $a_{d+1}$ with a probability of at least $1 - \nicefrac{|\mathcal{P}|}{(4|\mathcal{P}|)} = \nicefrac{3}{4}$ (since it can choose only $\lfloor\nicefrac{|\mathcal{P}|}{4}\rfloor$ posteriors among the $|\mathcal{P}|$ possible optimal posteriors). Hence, by Yao's minimax principle, for any (randomized) algorithm there exists an instance such that the regret suffered in the $T$ rounds is at least:
\begin{equation*}
	R_T \ge \frac{3}{4}\frac{T}{2} \ge \frac{1}{4} \left\lfloor \frac{|\preg|}{4} \right\rfloor \ge   \frac{|\preg|}{32},
\end{equation*}
since $\lfloor x \rfloor \ge x -1 \ge x/2$, for each $x \ge 2$. 
Finally, we notice that $|\mathcal{P}|=\binom{d}{\nicefrac{d}{2}} = 2^{\Omega(d)}$, which concludes the proof. 
%
%
%
\end{proof}

\hardnessthird*
\begin{proof}
	To prove the theorem, we introduce two instances characterized by two states of nature and four receiver's actions. In both the instances the sender's utility is given by $u^\text{s}_\theta(a_{1})=u^\text{s}_\theta(a_{2})=0$ for all $\theta \in \Theta$, while $u^\text{s}_{\theta_1}(a_{3})=1$, $u^\text{s}_{\theta_2}(a_{3})=0$ and $u^\text{s}_{\theta_2}(a_{4})=0$ $u^\text{s}_{\theta_2}(a_{4})=1$. The receiver's utilities and the prior distributions in the two instances are:
	\begin{alignat*}{2}
		&\circled{1 }
		\begin{cases}
			\mu^{1}_{\theta_1}= \frac{1}{2}+\epsilon   ,\, \mu^{1}_{\theta_2} =\frac{1}{2}- \epsilon\\
			u^{1}_{\theta_1}(a_1) =\frac{1}{\left(2+4\epsilon \right)} ,\, u^{1}_{\theta_2}(a_1) =\frac{1}{\left(10-20\epsilon \right)}  \\
			u^{1}_{\theta_1}(a_2) =\frac{1}{\left(10+20\epsilon \right)},\, u^{1}_{\theta_2}(a_2) =\frac{1}{\left(2-4\epsilon \right)}\\
			u^{1}_{\theta_1}(a_3) =\frac{3}{10},\, u^{1}_{\theta_2}(a_3) =\frac{3}{10}\\
			u^{1}_{\theta_1}(a_4) =0,\, u^{1}_{\theta_2}(a_4) = \frac{1}{\left(2-4\epsilon \right)}  \\
		\end{cases}
		&&  
		\hspace{1mm}\circled{2 }
		\begin{cases}
			\mu^{2}_{\theta_1}=\frac{1}{2} - \epsilon,\, \mu^{2}_{\theta_2} =\frac{1}{2} + \epsilon  \\
			u^{2}_{\theta_1}(a_1) =\frac{1}{\left(2-4\epsilon \right)} ,\, u^{2}_{\theta_2}(a_1) =\frac{1}{\left(10+20\epsilon \right)}  \\
			u^{2}_{\theta_1}(a_2) =\frac{1}{\left(10-20\epsilon \right)},\, u^{2}_{\theta_2}(a_2) =\frac{1}{\left(2+4\epsilon \right)}\\
			u^{2}_{\theta_1}(a_3) =\frac{3}{10},\, u^{2}_{\theta_2}(a_3) =\frac{3}{10}\\
			u^{2}_{\theta_1}(a_4) =0,\, u^{2}_{\theta_2}(a_4) = \frac{1}{\left(2+4\epsilon \right)}  \\
		\end{cases}
	\end{alignat*}
	with $\epsilon \in (0, \nicefrac{1}{4})$. With a simple calculation, we can show that, in both the two instances, for any signaling scheme $\phi$ employing a generic set of signals $\mathcal{S}$, the sender receives the following feedback:
	\begin{enumerate}
		\item	$\forall s \in \mathcal{S}$ such that $\phi_{\theta_1}(s)>  \phi_{\theta_2}(s)$, then $a^\phi(s) = a_1$. 
		\item	$\forall s \in \mathcal{S}$ such that $0<\phi_{\theta_1}(s)<  \phi_{\theta_2}(s)$, then $a^\phi(s) = a_2$. 
		\item	$\forall s \in \mathcal{S}$ such that $\phi_{\theta_1}(s)= \phi_{\theta_2}(s)$, then $a^\phi(s) = a_3$. 
		\item	$\forall s \in \mathcal{S}$ such that $0=\phi_{\theta_1}(s)$, then $a^\phi(s) = a_4$. 
	\end{enumerate}
	As a result, for any signaling scheme the sender may commit to, the resulting feedback in each signal of the signaling scheme is the same. Thus, we assume without loss of generality, that the sender only commits to signaling schemes that maximizes the probability of inducing actions $a_3$ or $a_4$. This is because, the sender does not gain any information by committing to one signaling scheme over another, while the signaling schemes that induce these two actions are the only ones that provide the sender with strictly positive expected utility.
	
	Furthermore, thanks to what we have observed before, we can restrict our attention to direct signaling schemes, \emph{i.e.}, those in which the set of signals coincides with the set of actions. Thus, at each round, we assume that the sender commits to a signaling scheme $\phi^t$ of the form:
	\begin{align}\label{eq:signaling_lower_bound}
		\phi^t\coloneqq \begin{cases}
				\phi^t_{\theta_1}(a_3)=	\phi^t_{\theta_2}(a_3) \coloneqq 	\phi^t_1, \,\,\   \\
				\phi^t_{\theta_1}(a_4)=0, \,	\phi^t_{\theta_2}(a_4)= 1- 	\phi^t_1, \,\,\   \\
				\phi^t_{\theta_1}(a_1)=1 - \phi^t_1,\,	\phi^t_{\theta_2}(a_1)= 0, \,\,\   \\
		\end{cases}
	\end{align}
	with $\phi_1^t \in [0,1]$. We also notice that, in each round, the optimal signaling scheme in the first instance is the one that induces action $a_3$, meaning $\phi^t_1=1$ for each $t \in [T]$. While the optimal signaling scheme in the second instance at each round is the one that reveals the state of nature to the receiver, meaning $\phi^t_1=0$ for each $t \in [T]$. In such a way, the learning task undertaken by the sender reduces to select a value of $\phi_1^t \in [0,1]$ for each $t \in [T]$ controlling the probability of inducing action $a_3$ over action $a_4$.
	
	In the following, we define $\mathbb{P}^{1}$ $(\mathbb{P}^{2})$ as the probability distribution generated by the execution of a given algorithm in the first (second) instance and we let $\mathbb{E}^{1}$ $(\mathbb{E}^{2})$ be  the expectation induced by such a distribution.
	
	The cumulative regret in the first instance can be written as follows:
	\begin{align*}
	R^{1}_T&= \mathbb{E}^{1} \left[\sum_{t=1}^{T}\left( \frac{1}{2} + \epsilon -\phi^t_1 \left( \frac{1}{2}+ \epsilon \right) - \left(1-\phi^t_1\right) \left( \frac{1}{2}- \epsilon \right)  \right) \right]\\
	&=2\epsilon \mathbb{E} \left[\sum_{t=1}^{T} \left(1-\phi_1^t\right) \right].
	\end{align*}
	Similarly, in the second instance, the cumulative regret is given by:
	\begin{align*}
	R^{2}_T= 2\epsilon  \mathbb{E} \left[\sum_{t=1}^{T} \phi_1^t\right].
	\end{align*}
	%
	Furthermore, it is easy to check that:
	\[R^1_T \ge \mathbb{P}^{1}\left(\sum_{t=1}^T \phi^1_t \le T/2\right) \epsilon T \quad \text{and} \quad R^2_T \ge \mathbb{P}^{2}\left(\sum_{t=1}^T \phi^1_t \ge T/2\right) \epsilon T.\]
	By employing the relative entropy identities divergence decomposition we also have that: 
	\begin{align*}
		\mathcal{KL} \left(\mathbb{P}^{1},\mathbb{P}^{2} \right) & =  T\cdot\mathcal{KL} \left(\mu^1, \mu^2 \right)\\
		& \le \frac{64}{3} T \epsilon^2 \le 22 T \epsilon^2,
	\end{align*}
	where we employed the fact that for two Bernoulli distribution it holds 
	\[\mathcal{KL} (\textnormal{Be}(p), \textnormal{Be}(q)) \le \frac{(p-q)^2}{q(1-q)}.\]
	Then, by employing the Bretagnolle–Huber inequality we have that,
	\begin{align*}
		R_T^1 + R_T^2  & \ge  \epsilon T \left( \mathbb{P}^1 \left(\sum_{t=1}^T \phi^1_t \le T/2 \right)  + \mathbb{P}^2 \left(\sum_{t=1}^T \phi^1_t \ge T/2\right) \right)   \\
		& \ge\frac{1}{2} \epsilon T \exp{ \left( - \mathcal{KL} \left(\mathbb{P}^{1}, \mathbb{P}^{2} \right) \right) } \\
		& \ge  \frac{1}{2} \epsilon T \exp{ \left(- {22} T \epsilon^2 \right) }.
	\end{align*}
	By taking $\epsilon = \sqrt{\frac{1}{22T }}$ we get:
	\begin{equation*}
		R_T^1 + R_T^2   \ge  C_1 \sqrt{T}.
	\end{equation*}
	 with $C_1=\nicefrac{e^{-1}}{(2 \sqrt{22} ) }$ 
	 Thus, we have:
	\begin{equation*}
		R_T^1 \ge \frac{C_1}{2} \sqrt{T}  \quad \vee \quad R_T^2 \ge \frac{C_1}{2} \sqrt{T},
	\end{equation*}
	concluding the proof.
\end{proof}

\section{Details and omitted proofs from Section~\ref{sec:sample_complexity}}\label{appendix:sample_complexity}\label{sec:appendix_sample_complexity}
\subsection{\texttt{Compute-Threshold} procedure}
The \texttt{Compute-Threshold} procedure takes as input a real parameter $\epsilon_1 > 0$. Then, it iteratively halves the value of a different parameter $\epsilon$, initially set to one, until it is smaller than or equal to $\epsilon_1$. 
In this way, Algorithm~\ref{alg:compute_epsilon} computes a parameter $\epsilon \in [\epsilon_1/2, \epsilon_1]$ in $\mathcal{O}(\log(\nicefrac{1}{\epsilon_1}))$ rounds with bit complexity $B_\epsilon = \mathcal{O}(\log(\nicefrac{1}{\epsilon_1}))$. 
This technical component is necessary to ensure that the bit-complexity of the parameter $\epsilon$ is not too large while guaranteeing that the solution returned by Algorithm~\ref{alg:main_algorithm_sc} is still $\gamma$-optimal with probability at least $1-\eta$. 

\begin{algorithm}[H]
	\caption{\texttt{Compute-Threshold}}\label{alg:compute_epsilon}
	\begin{algorithmic}[1]
		\Require $\epsilon_1 \in (0,1)$
		\State $\epsilon \gets 1$
		\While{$\epsilon \ge \epsilon_1$}
			\State $\epsilon \gets \epsilon/2$
		\EndWhile
		\State \textbf{Return} $\epsilon$
	\end{algorithmic}
\end{algorithm}

\subsection{Omitted proofs from Section~\ref{sec:sample_complexity}}
\PriorEstimateSC*
\begin{proof}
	Thanks to the definition of $T_1 \coloneqq \left\lceil \nicefrac{1}{2\epsilon^2} \log\left(\nicefrac{2d}{\delta} \right) \right\rceil$ in Algorithm~\ref{alg:main_algorithm_sc} and employing both a union bound and the Hoeffding bound we have:
	\begin{align*}
			\mathbb{P}\left(|\widehat{\mu}_{\theta} - \mu_\theta| \le \epsilon\right) \ge 1 - \delta, \,\,\, \forall \theta \in \Theta.
	\end{align*} 
	Consequently, if $|\widehat{\mu}_\theta -\mu_\theta| \le \epsilon$ for each $\theta \in \Theta$, then for each $x \in \mathcal{X}_\epsilon$, the probability of inducing the slice $x$ can be lower bounded as follows:
	\begin{equation*}
		\epsilon \le
		\sum_{\theta  \in \Theta'} \widehat{\mu}_\theta x_\theta - \epsilon \le
		\sum_{\theta \in \Theta'}(\widehat{\mu}_\theta -\epsilon)x_\theta \le
		\sum_{\theta \in \Theta'} \mu_{\theta} x_\theta \le
		\sum_{\theta \in \Theta} \mu_\theta x_\theta 
	\end{equation*}
	where the above inequalities hold because each $x \in \mathcal{X}_\epsilon$ satisfies the constraint $\sum_{\theta \in \Theta'} \widehat{\mu}_\theta x_\theta \ge 2\epsilon$. Furthermore, if $|\widehat{\mu}_\theta -\mu_\theta| \le \epsilon$ for each $\theta \in \Theta$, then for each $x \not \in \mathcal{X}_\epsilon$ the following holds:
	\begin{equation}\label{eq:slice_sc_1}
		\sum_{\theta  \in \Theta'} {\mu}_\theta  x_\theta \le
		\epsilon+ \sum_{\theta  \in \Theta'} ({\mu}_\theta - \epsilon )x_\theta \le
		\epsilon +\sum_{\theta  \in \Theta'} \widehat{\mu}_\theta x_\theta \le
		3 \epsilon,
	\end{equation}
	since, if $x \not \in \mathcal{X}_\epsilon$, it holds $\sum_{\theta \in \Theta'} \widehat{\mu}_\theta x_\theta \le 2\epsilon$, and, 
	\begin{equation}\label{eq:slice_sc_2}
		\sum_{\theta  \not \in \Theta'} {\mu}_\theta  x_\theta  \le  \sum_{\theta  \not \in \Theta'} ({\mu}_\theta-\epsilon)  x_\theta + \epsilon \le \sum_{\theta  \not \in \Theta'} \widehat{\mu}_\theta  x_\theta + \epsilon \le 3 \epsilon.
	\end{equation}
	Thus, by combining Inequality~\eqref{eq:slice_sc_1} and Inequality~\eqref{eq:slice_sc_2}, we have:
	\begin{equation*}
		\sum_{\theta  \in \Theta} {\mu}_\theta  x_\theta  \le  6 \epsilon,
	\end{equation*}
	when $x \not \in \mathcal{X}_\epsilon$, concluding the proof.
\end{proof}

\MainSC*
\begin{proof}
	Thanks to Lemma~\ref{lem:prior_estimate_sc}, with probability at least $1-\delta = 1-\nicefrac{\eta}{2}$ Algorithm~\ref{alg:main_algorithm_sc} correctly completes Phase 1 in $T_1=\mathcal{O}(\nicefrac{1}{\epsilon^2}\log(\nicefrac{1}{\eta})\log(d))$ rounds.
	Thus, with probability at least $1-\nicefrac{\eta}{2}$, both the event $\mathcal{E}_1$ and the inequalities $|\widehat{\mu}_{\theta} - \mu_\theta| \le \epsilon$ for each $\theta \in \Theta$ hold.
	
	Consequently, under the event $\mathcal{E}_1$, with probability at least $1-\zeta = 1-\nicefrac{\eta}{2}$, Algorithm~\ref{alg:main_algorithm_sc} correctly partitions the search space $\mathcal{X}_\epsilon$ in at most: 
	\begin{equation*}
		\widetilde{\mathcal{O}}\left(\frac{n^2}{\epsilon} \log^2\left(\frac{1}{\eta}\right) \left(d^7L +\binom{d+n}{d}\right)\right)
	\end{equation*}
	rounds, as stated by Lemma~\ref{lem:final_partition}.
	Furthermore, we notice that $L= B +B_\epsilon +B_{\widehat{\mu}}$, with: 
	\begin{equation*}
		B_{\widehat{\mu}} = \mathcal{O}(\log(T_1)) = \mathcal{O}\left(\log(\nicefrac{1}{\epsilon}) +\log(\nicefrac{1}{\eta}) +\log(d)\right).
	\end{equation*}
	As a result, with probability at least $1-\eta$, Algorithm~\ref{alg:main_algorithm_sc} correctly terminates in a number of rounds $N$ which can be upper bounded as follows:
	\begin{equation*}
		N \le \widetilde{\mathcal{O}}\left(\frac{1}{\epsilon^2}\log\left(\frac{1}{\eta}\right)\log(d) +\frac{n^2}{\epsilon} \log^2\left(\frac{1}{\eta}\right) \left(d^7(B +B_\epsilon) +\binom{d+n}{d}\right)\right).
	\end{equation*}
	Furthermore, we observe that if $|\widehat{\mu}_{\theta} - \mu_\theta| \le \epsilon$ for each $\theta \in \Theta$, then the following holds: 
	\begin{equation*}
		\left| \sum_{\theta  \in \Theta}\widehat{\mu}_{\theta} - \mu_\theta \right| \le  \sum_{\theta  \in \Theta}|\widehat{\mu}_{\theta} - \mu_\theta| \le \sum_{\theta  \in \Theta} \epsilon = \epsilon d.
	\end{equation*}
	Consequently, thanks to the result provided by Lemma~\ref{lem:find_signaling}, with probability at least $1-\eta$, Algorithm~\ref{alg:main_algorithm_sc} computes a $12n\epsilon d$-optimal solution.
	Thus, by setting $\epsilon_1 \coloneqq \nicefrac{\gamma}{12nd}$ and $\epsilon \le \epsilon_1$, with probability at least $1-\eta$ Algorithm~\ref{alg:main_algorithm_sc} computes a $\gamma$-optimal solution in a number of rounds $N$ bounded by:
	\begin{align*}
		N &\le \widetilde{\mathcal{O}}\left( \frac{n^2 d^2}{\gamma^2}\log\left(\frac{1}{\eta}\right) +\frac{n^3}{\gamma} \log^2\left(\frac{1}{\eta}\right) \left(d^8(B +B_\epsilon) +d\binom{d+n}{d}\right) \right) \\
		&= \widetilde{\mathcal{O}}\left( \frac{n^3}{\gamma^2} \log^2\left(\frac{1}{\eta}\right) \left(d^8(B +B_\epsilon) +d\binom{d+n}{d}\right) \right) \\
		&= \widetilde{\mathcal{O}}\left( \frac{n^3}{\gamma^2} \log^2\left(\frac{1}{\eta}\right) \left(d^8B +d\binom{d+n}{d}\right) \right),
	\end{align*}
	where the last equality holds because the bit-complexity of $\epsilon$ is  $B_\epsilon = \mathcal{O}(\log(\nicefrac{nd}{\gamma}))$, concluding the proof.
\end{proof}

\hardnessfirstsc*
\begin{proof}
	In the proof of Theorem~\ref{thm:hardness1} we showed that, with probability $\nicefrac{3}{4}$, in $N=\lfloor \nicefrac{\mathcal{|P|}}{4}\rfloor$ rounds any algorithm does not correctly identify the posterior inducing action $a_{d+1}$. This is because, any deterministic algorithm can identify the optimal posterior only in $\lfloor\nicefrac{\mathcal{|P|}}{4}\rfloor$ instances, as observed in the proof of Theorem~\ref{thm:hardness1}.
	
	As a result, in the remaining $|\preg|-N$ instances, any deterministic algorithm will receive the same feedback and thus will always output the same posterior after $N$ rounds, which will result in the optimal one in only a single instance.
	
	Thus, thanks to the Yao's minimax principle, there is no algorithm that is guaranteed to return an optimal solution with probability at least: 
	\begin{equation*}
		\frac{3}{4} \left(\frac{|\preg| -N -1 }{|\preg| -N }\right)\ge \frac{3}{8} \left(\frac{|\preg| -N }{|\preg| -N }\right) = \frac{3}{8}.
	\end{equation*}
	Finally, we observe that $\textnormal{OPT}=\nicefrac{1}{2}$, while any algorithm that does not induce the posterior $\xi'$ provides an expected utility equal to zero.
	As a result, for each $\kappa<\nicefrac{1}{2}$ there is no algorithm that is guaranteed to return a solution which is $\kappa$-optimal in $\lfloor|\preg|/4 \rfloor= 2^{\Omega(d)}$ rounds with probability at least $\nicefrac{3}{8}$.
\end{proof}

\hardnessthirssc*
\begin{proof}
	To prove the theorem, we consider the same instance and the same definitions introduced in the proof of Theorem~\ref{thm:hardness3}. In this case, we let $\mathbb{P}^{1}$ $(\mathbb{P}^{2})$ be the probability distribution generated by the execution of a given algorithm in the first (second) instance for $N=\lceil \nicefrac{ \log(1/ 4\eta)}{22 \epsilon^2} \rceil$ rounds. Furthermore, we introduce the event $\mathcal{E}$, under which the signaling scheme returned at the round $N$, according to the definition presented in Equation~\ref{eq:signaling_lower_bound}, is such that $\phi^N \le 1/2$. We notice that, if such signaling scheme is such that $\phi^N < 1/2$, then the sender’s expected utility in the first instance is smaller or equal to $1/2$, thus being $\epsilon/2$-optimal. At the same time, if $\phi^N \ge 1/2$ in the second instance, then the solution returned by the algorithm is not $\epsilon/2$-optimal.
	
	Thus, by employing the Bretagnolle–Huber inequality we have:
	\[ \mathbb{P}^{1}\left( \mathcal{E} \right)  + \mathbb{P}^{2}\left( \mathcal{E}^{C} \right) \ge \frac{1}{2}\exp{ \left( - \mathcal{KL} \left(\mathbb{P}^{1}, \mathbb{P}^{2} \right) \right) } \ge \frac{1}{2} \exp{ \left(- {22} N \epsilon^2 \right) }, \]
	since $	\mathcal{KL} \left(\mathbb{P}^{1},\mathbb{P}^{2} \right) \le 22 N \epsilon^2$, as observed in the proof of Theorem~\ref{thm:hardness3_sc}. Finally, by employing the definition of $N$, we have:
	\[ \mathbb{P}^{1}\left( \mathcal{E} \right)  \ge \eta \,\,\ \vee \,\,\, \mathbb{P}^{2}\left( \mathcal{E}^{C} \right)  \ge \eta.\]
	As a result, by setting $2\gamma = \epsilon$, the statement of the lemma holds. 
\end{proof}

\section{ Sample complexity with known prior}\label{appendix:sample_comlexity_known_mu}
%
In this section, we discuss the \emph{Bayesian persuasion PAC-learning problem} when the prior distribution $\mu$ is known to the sender. To tackle the problem, we propose Algorithm~\ref{alg:main_algorithm_sc_known}. The main difference with respect to Algorithm~\ref{alg:main_algorithm_sc} is that, in this case, we do not need to employ the~\texttt{Build-Search-Space} procedure, as the prior is already known to the sender.
This allows us to compute a $\gamma$-optimal signaling scheme in only $\mathcal{O}(\nicefrac{1}{\gamma})$ rounds, instead of $\mathcal{O}(\nicefrac{1}{\gamma^2})$ rounds as in the case with an unknown prior.

\begin{algorithm}[H]
	\caption{\texttt{PAC-Persuasion-w/o-Clue-Known}}\label{alg:main_algorithm_sc_known}
	\begin{algorithmic}[1]
		\Require $\eta \in (0,1), \gamma \in (0,1), \mu \in \Delta_\Theta$
		\State $\epsilon_1 \gets \nicefrac{\gamma}{10nd}$
		\State $\epsilon \gets \texttt{Compute-Epsilon}(\epsilon_1),$ $\widehat{\mu} \gets \mu$
		\State $\widetilde{\Theta} \gets \{\theta \in \Theta \mid \widehat{\mu}_{\theta} > 2\epsilon\}$
		\State $\mathcal{X}_\epsilon \gets \left\{ x \in \mathcal{X} \mid \sum_{\theta  \in \widetilde{\Theta}} \widehat{\mu}_{\theta}x_\theta \ge 2\epsilon \right\}$
		\State $ \mathcal{R}_\epsilon \gets \texttt{Find-Polytopes}(\mathcal{X}_\epsilon, \eta)$ 
		\State $\phi \gets \texttt{Compute-Signaling} (\mathcal{R}_\epsilon,\mathcal{X}_\epsilon,  \mu ) $  		
		\State \textbf{return} $\phi $
	\end{algorithmic}
\end{algorithm}

In this case, the following theorem holds.

\begin{restatable}{theorem}{MainSCKnown}\label{th:main_th_Sc_known} 
	With probability at least $1-\eta$ and in Algorithm~\ref{alg:estimate_prior} computes a $\gamma$-optimal signaling scheme in $\widetilde{\mathcal{O}}\left( \nicefrac{n^3}{\gamma} \log^2\left(\nicefrac{1}{\eta}\right) \left(d^8B +d\binom{d+n}{d}\right) \right)$ rounds.
\end{restatable}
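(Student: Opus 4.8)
The plan is to mirror the proof of Theorem~\ref{th:main_th_sc}, but taking advantage of the fact that the prior $\mu$ is known exactly, so that Phase~1 (the \texttt{Build-Search-Space} procedure of Algorithm~\ref{alg:estimate_prior}) can be skipped entirely. First I would observe that, since $\widehat{\mu}\coloneqq\mu$ is used verbatim in Algorithm~\ref{alg:main_algorithm_sc_known}, the analogue of Lemma~\ref{lem:prior_estimate_sc} holds \emph{deterministically}: for every normalized slice $x\in\mathcal{X}_\epsilon$ we have $\sum_{\theta\in\Theta}\mu_\theta x_\theta\ge 2\epsilon\ge\epsilon$, and for every $x\in\mathcal{X}\setminus\mathcal{X}_\epsilon$ we have $\sum_{\theta\in\Theta}\mu_\theta x_\theta\le 6\epsilon$ (in fact one gets $\le 2\epsilon + 2\epsilon\sum_{\theta\notin\widetilde\Theta}x_\theta\le 4\epsilon$, but the weaker bound suffices). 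Hence the clean event $\mathcal{E}_1$ holds with probability $1$, and $B_{\widehat\mu}=B_\mu\le B$ is just a fixed constant of the instance rather than something growing with $T_1$.

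Next I would invoke Lemma~\ref{lem:final_partition} applied to the call $\texttt{Find-Polytopes}(\mathcal{X}_\epsilon,\eta)$: under $\mathcal{E}_1$ (which holds surely) and with probability at least $1-\eta$, the procedure outputs a collection $\mathcal{R}_\epsilon\coloneqq\{\mathcal{R}_\epsilon(a)\}_{a\in\mathcal{A}}$ such that $\mathcal{V}_\epsilon(a)\subseteq\mathcal{R}_\epsilon(a)$ for every $a$ (so that $\mathcal{E}_2$ holds), in a number of rounds bounded by $\widetilde{\mathcal{O}}\!\left(\frac{n^2}{\epsilon}\log^2(\nicefrac 1\eta)\bigl(d^7 L+\binom{d+n}{d}\bigr)\right)$, where $L=B+B_\epsilon+B_{\widehat\mu}$. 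Since $B_{\widehat\mu}=B_\mu=\mathcal{O}(B)$ and, by the \texttt{Compute-Threshold}/\texttt{Compute-Epsilon} step, $\epsilon\in[\epsilon_1/2,\epsilon_1]$ with $\epsilon_1=\nicefrac{\gamma}{10nd}$ and $B_\epsilon=\mathcal{O}(\log(\nicefrac{nd}{\gamma}))$, this becomes $\widetilde{\mathcal{O}}\!\left(\frac{n^3 d}{\gamma}\log^2(\nicefrac 1\eta)\bigl(d^7 B+\binom{d+n}{d}\bigr)\right)=\widetilde{\mathcal{O}}\!\left(\frac{n^3}{\gamma}\log^2(\nicefrac 1\eta)\bigl(d^8 B+d\binom{d+n}{d}\bigr)\right)$ — this is the only phase contributing to the round count, which explains the $\nicefrac{1}{\gamma}$ rather than $\nicefrac{1}{\gamma^2}$ improvement.

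Finally I would close the argument by feeding $\mathcal{R}_\epsilon$, $\mathcal{X}_\epsilon$ and the exact prior $\mu$ into \texttt{Compute-Signaling} (Algorithm~\ref{alg:find_signaling_scheme}) and applying Lemma~\ref{lem:find_signaling}. Under $\mathcal{E}_1$ and $\mathcal{E}_2$ — which jointly hold with probability at least $1-\eta$ — the returned signaling scheme $\phi$ is $\mathcal{O}(\epsilon nd+\nu)$-optimal with $\nu\le\bigl|\sum_{\theta\in\Theta}\widehat\mu_{t,\theta}-\mu_\theta\bigr|=0$, since $\widehat\mu=\mu$ exactly. Thus $\phi$ is $O(\epsilon nd)$-optimal; tracking the constant (it is the $10\epsilon nd$ appearing in Lemma~\ref{lem:lp_vertices_approx_opt}, with the $\widehat\mu=\mu$ term vanishing) gives $10\epsilon nd$-optimality, and choosing $\epsilon\le\epsilon_1=\nicefrac{\gamma}{10nd}$ yields a $\gamma$-optimal scheme. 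Combining with the round bound above completes the proof. The only subtlety — and the step I would be most careful about — is bookkeeping the exact leading constant in the approximation guarantee so that the threshold $\epsilon_1=\nicefrac{\gamma}{10nd}$ in Algorithm~\ref{alg:main_algorithm_sc_known} (as opposed to $\nicefrac{\gamma}{12nd}$ in the unknown-prior case) is the right one once the $\nu=0$ simplification is taken into account; everything else is a routine specialization of the Section~\ref{sec:sample_complexity} analysis with $\mathcal{E}_1$ promoted to a sure event.
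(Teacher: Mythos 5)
Your proposal is correct and follows essentially the same route as the paper's proof: $\mathcal{E}_1$ holds with probability one because $\widehat{\mu}=\mu$, Lemma~\ref{lem:final_partition} gives the round count (the only phase consuming rounds) with $L=\mathcal{O}(B+\log(nd)+\log(\nicefrac{1}{\gamma}))$, and Lemma~\ref{lem:find_signaling} with $\nu=0$ yields $10\epsilon nd$-optimality, so $\epsilon\le\nicefrac{\gamma}{10nd}$ suffices. Your extra observation that the out-of-$\mathcal{X}_\epsilon$ mass is in fact bounded by $4\epsilon$ (rather than $10\epsilon$) is a harmless strengthening that the paper does not bother to record.
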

\begin{proof}
	Since $\widehat{\mu} = \mu$, the clean event  $\mathcal{E}_1$ holds with probability one. Consequently, thanks to Lemma~\ref{lem:final_partition}, with probability at least $1-\eta$, Algorithm~\ref{alg:main_algorithm_sc_known} correctly partitions the search space $\mathcal{X}_\epsilon$ in at most:
	\begin{equation*}
		\widetilde{\mathcal{O}}\left( \frac{n^2}{\epsilon} \log^2\left(\nicefrac{1}{\eta}\right) \left(d^7L +\binom{d+n}{d}\right) \right)
	\end{equation*}
	rounds, with $L \coloneqq B+B_\epsilon+B_{\widehat{\mu}}$. According to Algorithm~\ref{alg:main_algorithm_sc_known}, we have $\epsilon \le \epsilon_1 \coloneqq \nicefrac{\gamma}{(10nd)}$.
	As a result, $L=\mathcal{O}\left(B +\log(nd)+\log(\nicefrac{1}{\gamma})\right)$, since $B_{\widehat{\mu}} \le B$ and $B_\epsilon = \mathcal{O}(\log(\nicefrac{1}{\epsilon_1})) =   \mathcal{O}(\log(nd)+\log(\nicefrac{1}{\gamma}))$.
	
	Furthermore, under the event $\mathcal{E}_2$, 
	Algorithm~\ref{alg:main_algorithm_sc_known} computes a $10\epsilon n d$-optimal solution, as guaranteed by Lemma~\ref{lem:find_signaling}, with $\widehat{\mu} = \mu$.
	
	Thus, with probability at least $1-\eta$,  Algorithm~\ref{alg:main_algorithm_sc_known} computes a $\gamma$-optimal solution in:
	\begin{equation*}
		\widetilde{\mathcal{O}}\left( \frac{n^3}{\gamma} \log^2\left(\nicefrac{1}{\eta}\right) \left(d^8B +d\binom{d+n}{d}\right) \right)
	\end{equation*}
	rounds, concluding the proof.
\end{proof}

We notice that, differently from the case with an unknown prior, it is possible to achieve a $\mathcal{O}\left(\nicefrac{\log(1/\eta)}{\gamma}\right)$ upper bound with respect to the input parameters $\gamma, \eta > 0$. Finally, we show that such a dependence is tight, as shown in the following theorem.
\begin{restatable}{theorem}{hardnessecondsc}\label{thm:hardness2_sc}
	Given $\gamma,\eta > 0$ no algorithm is guaranteed to return an $\gamma$-optimal signaling scheme with probability of at least $1 - \eta$ employing $ \Omega \left( \nicefrac{\log(1/\eta)}{\gamma}\right)   $ rounds, even when the prior distribution is known to the sender.
\end{restatable}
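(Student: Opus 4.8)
The plan is to adapt the two‑instance, information‑theoretic argument of Theorem~\ref{thm:hardness3_sc} to the known‑prior regime, where the prior can no longer be used to conceal information, so the entire ``signal'' that separates the two instances must come through the feedback. I will build two Bayesian persuasion instances $I^{1},I^{2}$ over $\Theta=\{\theta_{1},\theta_{2}\}$ that share the \emph{same known} prior $\mu$ with $\mu_{\theta_{1}}=3\gamma$ (assume $\gamma$ below a universal constant, say $\gamma\le\nicefrac1{12}$) and $\mu_{\theta_{2}}=1-3\gamma$. Writing $x\coloneqq\xi_{\theta_{1}}$ for a posterior, the receiver's utilities (four or five actions, linear best‑response map in $x$) are chosen so that the best‑response map is \emph{identical} in $I^{1}$ and $I^{2}$ for all $x<\nicefrac12$, where it returns a fixed action $a_{3}$ with sender utility $0$ in both instances; a sender‑valued action $a_{1}$ with $u^{\text{s},1}_{\theta_{1}}(a_{1})=1$ (and all other sender utilities in $I^{1}$ equal to $0$) is the receiver's best response on $[\nicefrac34,1]$ in $I^{1}$, whereas in $I^{2}$ a distinct sender‑worthless action is played there; $I^{2}$'s only source of sender utility is a different action $a_{4}$ with $u^{\text{s},2}_{\theta_{1}}(a_{4})=1$, which is the receiver's best response --- in \emph{both} instances --- on a disjoint ``moderate'' region $[\nicefrac12,\nicefrac23]$, separated from $[\nicefrac34,1]$ by a buffer region $(\nicefrac23,\nicefrac34)$ on which a further action, worthless to both senders, is the best response. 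All breakpoints $\nicefrac12,\nicefrac23,\nicefrac34$ and the underlying lines are independent of $\gamma$, and ties are broken in favour of the sender.

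This construction is engineered to deliver two facts. (i) \emph{Incompatibility of approximate optima.} Since in each instance the only sender‑valued action is played on a region of the simplex close to the vertex $e_{\theta_{1}}$, the sender's expected utility of a scheme $\phi$ equals $\mu_{\theta_{1}}$ times the total $\theta_{1}$‑probability that $\phi$ routes to signals whose posterior lies in that region; hence $\text{OPT}_{1}=\text{OPT}_{2}=\mu_{\theta_{1}}=3\gamma$, and a scheme is $\gamma$‑optimal for $I^{1}$ (resp.\ $I^{2}$) only if it routes at least a $\nicefrac23$ fraction of the scarce $\theta_{1}$‑mass to posteriors in $[\nicefrac34,1]$ (resp.\ $[\nicefrac12,\nicefrac23]$). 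As these regions are disjoint while the $\theta_{1}$‑mass totals $1$, and $\nicefrac23+\nicefrac23>1$, the sets $G_{1},G_{2}$ of $\gamma$‑optimal schemes for $I^{1},I^{2}$ are disjoint --- this is exactly where the buffer region and the choice $\mu_{\theta_{1}}=3\gamma$ are used, to keep the two ``value regions'' from touching while making the (necessarily $O(\gamma)$) value gap wide enough relative to $\gamma$. (ii) \emph{Rarity of distinguishing feedback.} The feedback $(a^{t},u^{\text{s}}_{t})$ of a round can differ between $I^{1}$ and $I^{2}$ only if the realized signal $s^{t}$ has posterior $x^{s^{t}}\ge\nicefrac12$; by the consistency identity $\sum_{s}\bigl(\sum_{\theta}\mu_{\theta}\phi_{\theta}(s)\bigr)x^{s}=\mu_{\theta_{1}}$, any signaling scheme induces such a signal with total probability at most $2\mu_{\theta_{1}}=6\gamma$, and this per‑round bound is identical in $I^{1}$ and $I^{2}$ because it depends only on $\phi$ and the common prior $\mu$.

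Given (i)--(ii), the lower bound follows by a coupling argument in the spirit of the Bretagnolle--Huber step of the proof of Theorem~\ref{thm:hardness3_sc}. Let $\mathbb{P}^{1},\mathbb{P}^{2}$ be the laws of the algorithm's transcript over $N$ rounds on $I^{1},I^{2}$, and let $A$ be the event that no ``informative'' signal (posterior $\ge\nicefrac12$) is realized in the $N$ rounds. An induction on the round index shows that, conditioned on no informative signal having occurred so far, the histories under $\mathbb{P}^{1}$ and $\mathbb{P}^{2}$ are identically distributed and the per‑round informative probability is at most $6\gamma$; hence $\mathbb{P}^{1}(A)=\mathbb{P}^{2}(A)\ge(1-6\gamma)^{N}\ge e^{-12\gamma N}$, and the output scheme has the same conditional law on $A$ under both instances. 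By (i), for every scheme $O$ we have $O\notin G_{1}$ or $O\notin G_{2}$, so on $A$ the two conditional failure probabilities sum to at least $1$, whence $\max_{i}\mathbb{P}^{i}(\text{fail})\ge\tfrac12\mathbb{P}(A)\ge\tfrac12 e^{-12\gamma N}$. If the algorithm uses fewer than $\nicefrac{\ln(1/(2\eta))}{12\gamma}$ rounds this exceeds $\eta$, so it fails with probability $>\eta$ on at least one of $I^{1},I^{2}$, matching the $\widetilde{\mathcal{O}}(\nicefrac1\gamma)$ upper bound of Theorem~\ref{th:main_th_Sc_known} up to the $\log(1/\eta)$ factor. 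The main obstacle is establishing (i): the instances must differ only on an $O(\gamma)$‑rare region near a prior‑light vertex (forcing the value gap down to $O(\gamma)$), yet the utilities must be tuned --- via the buffer region and the precise prior mass $3\gamma$ --- so that this $O(\gamma)$ gap still renders $\gamma$‑optimality for the two instances mutually exclusive.
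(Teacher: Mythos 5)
Your overall strategy is the same as the paper's: two instances sharing the (known) prior with mass $\Theta(\gamma)$ on $\theta_1$, feedback that differs only on signals whose posterior concentrates near $\theta_1$ (hence realized with per-round probability $O(\gamma)$), disjoint sets of $\gamma$-optimal schemes, and a conditioning/coupling argument giving indistinguishability probability $(1-O(\gamma))^N$. The paper executes this with a three-action construction in which only the \emph{receiver's} utility $u_{\theta_1}(a_3)$ differs (1 vs.\ $\nicefrac12$), so the informative posterior is exactly the vertex $(1,0)$, and disjointness of the good sets comes from a threshold $\gamma(\xi_1)\gtrless 2\gamma$ rather than your pigeonhole on two disjoint value regions.

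However, your construction has a genuine flaw: the two instances have \emph{different sender utility functions} ($a_1$ is the only valuable action in $I^1$, $a_4$ in $I^2$). The sender knows their own utility function --- it is part of the algorithm's input, not part of what is being learned --- so the two instances are distinguishable at round zero. Concretely, the algorithm ``if $u^{\text{s}}=u^{\text{s},1}$, fully reveal; if $u^{\text{s}}=u^{\text{s},2}$, route all $\theta_1$-mass to posterior $\nicefrac12$'' achieves $\text{OPT}$ on both instances with no interaction, so no lower bound follows. For the indistinguishability argument to work, the instances must agree on everything the algorithm observes a priori (sender utility and, here, the prior) and differ only in the receiver's utility; the paper keeps $u^{\text{s}}_\theta(a_1)=0$, $u^{\text{s}}_\theta(a_2)=\nicefrac12$, $u^{\text{s}}_\theta(a_3)=1$ in both instances and flips only which action the receiver plays at the vertex posterior. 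Your construction is likely repairable in the same spirit --- fix a single sender-valued action $a^\star$ and arrange the receiver's utilities so that $a^\star$ is the best response on $[\nicefrac34,1]$ in $I^1$ but on $[\nicefrac12,\nicefrac23]$ in $I^2$, with worthless actions elsewhere --- after which your parts (i), (ii) and the coupling step go through. As written, though, the proof does not establish the theorem.
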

\begin{proof}	
	We consider two instances characterized by two states of nature and three receiver's actions. The two instances share the same prior distribution, defined as $\mu_{\theta_1}=4\gamma$ and $\mu_{\theta_2}=1-4\gamma$. In both the instances the sender's utility is given by $u^\text{s}_\theta(a_{1})=0$,  $u^\text{s}_\theta(a_{2})=1/2$ and $u^\text{s}_\theta(a_{3})=1$ for all $\theta \in \Theta$. Furthermore, we assume that the receiver's utility in the two instances are given by:
	\begin{alignat*}{2}
		&\circled{1 }
		\begin{cases} u_{\theta_1}(a_1) =1,\, u_{\theta_2}(a_1) =1/2  \\
			u_{\theta_1}(a_2) =1/2,\, u_{\theta_2}(a_2) =1 \\
			u_{\theta_1}(a_3) =1,\, u_{\theta_2}(a_3) =0  \\
		\end{cases}
		&&  
		\hspace{15mm}\circled{2 }
		\begin{cases} u_{\theta_1}(a_1) =1,\, u_{\theta_2}(a_1) =1/2  \\
			u_{\theta_1}(a_2) =1/2,\, u_{\theta_2}(a_2) =1 \\
			u_{\theta_1}(a_3) =1/2,\, u_{\theta_2}(a_3) =0  \\
		\end{cases}
	\end{alignat*}
	We observe that the only case in which the sender receives different feedback in the two instances is when they induce the posterior distribution $\xi_1:=(1,0)$. This is because, when the sender induces $\xi_1$ in the first instance, the receiver plays the action $a_3 \in \mathcal{A}$, breaking ties in favor of the sender, while in the second instance, the receiver plays the action $a_1 \in \mathcal{A}$. Such a posterior can be induced, in both the two instances, with a probability of at most $\gamma$ to be consistent with the prior.

	We also observe that in the first instance the optimal sender's signaling scheme $\gamma$ is such that $\gamma(\xi_1) =4 \gamma$ and $\gamma(\xi_2) = 1 - 4\gamma$, where we let $\xi_2 := (0,1)$. Furthermore, the sender's expected utility in $\gamma$ is equal to $({1+4\gamma})/{2}$. In the second instance, the optimal sender's signaling scheme $\gamma$ is such that $\gamma(\xi_2) = 1 - 8\gamma$ and $\gamma(\xi_3) = 8\gamma$, with $\xi_3 := (1/2,1/2)$. It is easy to verify that such a signaling scheme achieves an expected utility of $1/{2}$.
	

	
	In the following, we let $\mathbb{P}^1$ and $\mathbb{P}^2$ be the probability measures induced by the interconnection of a given algorithm executed in the first and in the second instances, respectively.
	Furthermore, we introduce the event $E_N$, under which, during the first $N$ rounds, the sender never observes the action $a_3 \in \mathcal{A}$. It is easy to verify that such an event holds with a probability of at least $\mathbb{P}^1(E_N) \ge (1-4\gamma)^N$ in the first instance. This is because, at each round, the action $a_3$ can be observed with a probability of at most $4\gamma$. In contrast, since in the second instance the receiver never plays the action $a_3$, it holds $\mathbb{P}^2(E_N)=1 \ge (1-4\gamma)^N$.
	
	Then, by letting $\gamma_1^N$ ($\gamma_2^N$) be the signaling scheme returned after $N$ rounds in the first (second) instance, we have: 
	\begin{align}
		\mathbb{P}^1\left( \gamma_1^{N}(\xi_1) \ge 2\gamma \right) & + \mathbb{P}^2\left( \gamma_2^{N}(\xi_1) \le 2\gamma \right) \ge \mathbb{P}^1\left( \gamma_1^{N}(\xi_1) \ge 2\gamma , E_N \right) + \mathbb{P}^2\left( \gamma_2^{N}(\xi_1) \le 2\gamma , E_N \right) \nonumber \\
		& \ge \mathbb{P}^1\left( \gamma_1^{N}(\xi_1) \ge 2\gamma \, | E_N \right)  \mathbb{P}^1\left(E_N \right) + \mathbb{P}^2\left( \gamma_2^{N}(\xi_1) \le 2\gamma \, | E_N \right)   \mathbb{P}^2\left(E_N \right) \nonumber \\
		& \ge \left(\mathbb{P}^1\left( \gamma_1^{N}(\xi_1) \ge 2\gamma \, | E_N \right) + \mathbb{P}^2\left( \gamma_2^{N}(\xi_1) \le 2\gamma \,| E_N \right)  
		\right) \mathbb{P}^1\left(E_N \right) \nonumber\\
		& = \mathbb{P}^1\left(E_N \right) \nonumber \\
		& \ge (1-4\gamma)^N \ge 2 \eta. \label{eq:geo_gebra}
	\end{align}
	The above result holds observing that, under the event $E_N$, the behaviour of any algorithm working in the first instance coincides with the behaviour of the same algorithm working in the second one, as they receive the same feedback. Furthermore, Inequality~\ref{eq:geo_gebra} holds when $N$ is such that:
	\[
	N  \le \frac{\log(1/ 2\eta)}{10 \gamma} \le \frac{\log(2\eta)}{\log(1 - 4 \gamma)},
	\]
	if $\gamma \le 1/5$. 
	
	Finally, we observe that if $\gamma_1^N(\xi_1) \le 2\gamma$, then the sender's expected utility in the first instance is of most $1/2 + \gamma$. This is because, for any signaling scheme $\gamma_1^N$, we have:
	\begin{equation*}
		u^s(\gamma_1^N) \le  \gamma_1^N(\xi_1) u^s(\xi_1) + \frac{1}{2}\left(  1- \gamma_1^N(\xi_1)  \right) = \frac{1}{2} + \gamma .
	\end{equation*}
	Thus, if $\gamma_1^N(\xi_1) \le 2\gamma$ the the signaling scheme $\gamma_1^N$ is at most $\gamma$-optimal.
	
	Equivalently, if the sender's final signaling scheme $\gamma_2^N$ in the second instance is such that $\gamma_2^N(\xi_1) \ge 2\gamma$, then the sender's utility in the second instance is of at most $1/2 -\gamma$. Thus, if $\gamma_2^N(\xi_1) \ge 2\gamma$ the the signaling scheme $\gamma_2^N$ is at most $\gamma$-optimal. 
	Then, we either have:
	\begin{equation*}
		\mathbb{P}^1\left( \gamma_1^{N}(\xi_1) \le 2\gamma \right) \ge \eta \,\,\,\ \textnormal{or} \,\,\,\ \mathbb{P}^2\left( \gamma_2^{N}(\xi_1) \ge 2\gamma \right) \ge \eta,		
	\end{equation*}
	showing that if the number of rounds $ N\le {\log(1/2\eta)}/{(10 \gamma)}$, there exists an instance such that no algorithm is guaranteed to return a $\gamma$-optimal signaling scheme with probability greater than or equal to $1-\eta$.
\end{proof}

\newpage
\section*{NeurIPS Paper Checklist}

\begin{enumerate}
	
	\item {\bf Claims}
	\item[] Question: Do the main claims made in the abstract and introduction accurately reflect the paper's contributions and scope?
	\item[] Answer: \answerYes{} 
	\item[] Justification: The abstract and the introduction state all the main contributions of this work. 
	\item[] Guidelines:
	\begin{itemize}
		\item The answer NA means that the abstract and introduction do not include the claims made in the paper.
		\item The abstract and/or introduction should clearly state the claims made, including the contributions made in the paper and important assumptions and limitations. A No or NA answer to this question will not be perceived well by the reviewers. 
		\item The claims made should match theoretical and experimental results, and reflect how much the results can be expected to generalize to other settings. 
		\item It is fine to include aspirational goals as motivation as long as it is clear that these goals are not attained by the paper. 
	\end{itemize}
	
	\item {\bf Limitations}
	\item[] Question: Does the paper discuss the limitations of the work performed by the authors?
	\item[] Answer: \answerYes{} 
	\item[] Justification: All the assumptions are stated in the paper.
	\item[] Guidelines:
	\begin{itemize}
		\item The answer NA means that the paper has no limitation while the answer No means that the paper has limitations, but those are not discussed in the paper. 
		\item The authors are encouraged to create a separate "Limitations" section in their paper.
		\item The paper should point out any strong assumptions and how robust the results are to violations of these assumptions (e.g., independence assumptions, noiseless settings, model well-specification, asymptotic approximations only holding locally). The authors should reflect on how these assumptions might be violated in practice and what the implications would be.
		\item The authors should reflect on the scope of the claims made, e.g., if the approach was only tested on a few datasets or with a few runs. In general, empirical results often depend on implicit assumptions, which should be articulated.
		\item The authors should reflect on the factors that influence the performance of the approach. For example, a facial recognition algorithm may perform poorly when image resolution is low or images are taken in low lighting. Or a speech-to-text system might not be used reliably to provide closed captions for online lectures because it fails to handle technical jargon.
		\item The authors should discuss the computational efficiency of the proposed algorithms and how they scale with dataset size.
		\item If applicable, the authors should discuss possible limitations of their approach to address problems of privacy and fairness.
		\item While the authors might fear that complete honesty about limitations might be used by reviewers as grounds for rejection, a worse outcome might be that reviewers discover limitations that aren't acknowledged in the paper. The authors should use their best judgment and recognize that individual actions in favor of transparency play an important role in developing norms that preserve the integrity of the community. Reviewers will be specifically instructed to not penalize honesty concerning limitations.
	\end{itemize}
	
	\item {\bf Theory Assumptions and Proofs}
	\item[] Question: For each theoretical result, does the paper provide the full set of assumptions and a complete (and correct) proof?
	\item[] Answer: \answerYes{} 
	\item[] Justification: The assumptions needed are reported in the statements of the theorems and lemmas, while all the proofs are reported in the appendices. 
	\item[] Guidelines:
	\begin{itemize}
		\item The answer NA means that the paper does not include theoretical results. 
		\item All the theorems, formulas, and proofs in the paper should be numbered and cross-referenced.
		\item All assumptions should be clearly stated or referenced in the statement of any theorems.
		\item The proofs can either appear in the main paper or the supplemental material, but if they appear in the supplemental material, the authors are encouraged to provide a short proof sketch to provide intuition. 
		\item Inversely, any informal proof provided in the core of the paper should be complemented by formal proofs provided in appendix or supplemental material.
		\item Theorems and Lemmas that the proof relies upon should be properly referenced. 
	\end{itemize}
	
	\item {\bf Experimental Result Reproducibility}
	\item[] Question: Does the paper fully disclose all the information needed to reproduce the main experimental results of the paper to the extent that it affects the main claims and/or conclusions of the paper (regardless of whether the code and data are provided or not)?
	\item[] Answer: \answerNA{} 
	\item[] Justification: The paper does not include experiments.
	\item[] Guidelines:
	\begin{itemize}
		\item The answer NA means that the paper does not include experiments.
		\item If the paper includes experiments, a No answer to this question will not be perceived well by the reviewers: Making the paper reproducible is important, regardless of whether the code and data are provided or not.
		\item If the contribution is a dataset and/or model, the authors should describe the steps taken to make their results reproducible or verifiable. 
		\item Depending on the contribution, reproducibility can be accomplished in various ways. For example, if the contribution is a novel architecture, describing the architecture fully might suffice, or if the contribution is a specific model and empirical evaluation, it may be necessary to either make it possible for others to replicate the model with the same dataset, or provide access to the model. In general. releasing code and data is often one good way to accomplish this, but reproducibility can also be provided via detailed instructions for how to replicate the results, access to a hosted model (e.g., in the case of a large language model), releasing of a model checkpoint, or other means that are appropriate to the research performed.
		\item While NeurIPS does not require releasing code, the conference does require all submissions to provide some reasonable avenue for reproducibility, which may depend on the nature of the contribution. For example
		\begin{enumerate}
			\item If the contribution is primarily a new algorithm, the paper should make it clear how to reproduce that algorithm.
			\item If the contribution is primarily a new model architecture, the paper should describe the architecture clearly and fully.
			\item If the contribution is a new model (e.g., a large language model), then there should either be a way to access this model for reproducing the results or a way to reproduce the model (e.g., with an open-source dataset or instructions for how to construct the dataset).
			\item We recognize that reproducibility may be tricky in some cases, in which case authors are welcome to describe the particular way they provide for reproducibility. In the case of closed-source models, it may be that access to the model is limited in some way (e.g., to registered users), but it should be possible for other researchers to have some path to reproducing or verifying the results.
		\end{enumerate}
	\end{itemize}

	\item {\bf Open access to data and code}
	\item[] Question: Does the paper provide open access to the data and code, with sufficient instructions to faithfully reproduce the main experimental results, as described in supplemental material?
	\item[] Answer:\answerNA{} 
	\item[] Justification: The paper does not include experiments. 
	\item[] Guidelines:
	\begin{itemize}
		\item The answer NA means that paper does not include experiments requiring code.
		\item Please see the NeurIPS code and data submission guidelines (\url{https://nips.cc/public/guides/CodeSubmissionPolicy}) for more details.
		\item While we encourage the release of code and data, we understand that this might not be possible, so “No” is an acceptable answer. Papers cannot be rejected simply for not including code, unless this is central to the contribution (e.g., for a new open-source benchmark).
		\item The instructions should contain the exact command and environment needed to run to reproduce the results. See the NeurIPS code and data submission guidelines (\url{https://nips.cc/public/guides/CodeSubmissionPolicy}) for more details.
		\item The authors should provide instructions on data access and preparation, including how to access the raw data, preprocessed data, intermediate data, and generated data, etc.
		\item The authors should provide scripts to reproduce all experimental results for the new proposed method and baselines. If only a subset of experiments are reproducible, they should state which ones are omitted from the script and why.
		\item At submission time, to preserve anonymity, the authors should release anonymized versions (if applicable).
		\item Providing as much information as possible in supplemental material (appended to the paper) is recommended, but including URLs to data and code is permitted.
	\end{itemize}

	\item {\bf Experimental Setting/Details}
	\item[] Question: Does the paper specify all the training and test details (e.g., data splits, hyperparameters, how they were chosen, type of optimizer, etc.) necessary to understand the results?
	\item[] Answer: \answerNA{} 
	\item[] Justification: The paper does not include experiments. 
	\item[] Guidelines:
	\begin{itemize}
		\item The answer NA means that the paper does not include experiments.
		\item The experimental setting should be presented in the core of the paper to a level of detail that is necessary to appreciate the results and make sense of them.
		\item The full details can be provided either with the code, in appendix, or as supplemental material.
	\end{itemize}
	
	\item {\bf Experiment Statistical Significance}
	\item[] Question: Does the paper report error bars suitably and correctly defined or other appropriate information about the statistical significance of the experiments?
	\item[] Answer: \answerNA{} 
	\item[] Justification: The paper does not include experiments. 
	\item[] Guidelines:
	\begin{itemize}
		\item The answer NA means that the paper does not include experiments.
		\item The authors should answer "Yes" if the results are accompanied by error bars, confidence intervals, or statistical significance tests, at least for the experiments that support the main claims of the paper.
		\item The factors of variability that the error bars are capturing should be clearly stated (for example, train/test split, initialization, random drawing of some parameter, or overall run with given experimental conditions).
		\item The method for calculating the error bars should be explained (closed form formula, call to a library function, bootstrap, etc.)
		\item The assumptions made should be given (e.g., Normally distributed errors).
		\item It should be clear whether the error bar is the standard deviation or the standard error of the mean.
		\item It is OK to report 1-sigma error bars, but one should state it. The authors should preferably report a 2-sigma error bar than state that they have a 96\% CI, if the hypothesis of Normality of errors is not verified.
		\item For asymmetric distributions, the authors should be careful not to show in tables or figures symmetric error bars that would yield results that are out of range (e.g. negative error rates).
		\item If error bars are reported in tables or plots, The authors should explain in the text how they were calculated and reference the corresponding figures or tables in the text.
	\end{itemize}
	
	\item {\bf Experiments Compute Resources}
	\item[] Question: For each experiment, does the paper provide sufficient information on the computer resources (type of compute workers, memory, time of execution) needed to reproduce the experiments?
	\item[] Answer: \answerNA{} 
	\item[] Justification: The paper does not include experiments. 
	\item[] Guidelines:
	\begin{itemize}
		\item The answer NA means that the paper does not include experiments.
		\item The paper should indicate the type of compute workers CPU or GPU, internal cluster, or cloud provider, including relevant memory and storage.
		\item The paper should provide the amount of compute required for each of the individual experimental runs as well as estimate the total compute. 
		\item The paper should disclose whether the full research project required more compute than the experiments reported in the paper (e.g., preliminary or failed experiments that didn't make it into the paper). 
	\end{itemize}
	
	\item {\bf Code Of Ethics}
	\item[] Question: Does the research conducted in the paper conform, in every respect, with the NeurIPS Code of Ethics \url{https://neurips.cc/public/EthicsGuidelines}?
	\item[] Answer: \answerYes{} 
	\item[] Justification: The paper conforms with the NeurIPS Code of Ethics. 
	\item[] Guidelines:
	\begin{itemize}
		\item The answer NA means that the authors have not reviewed the NeurIPS Code of Ethics.
		\item If the authors answer No, they should explain the special circumstances that require a deviation from the Code of Ethics.
		\item The authors should make sure to preserve anonymity (e.g., if there is a special consideration due to laws or regulations in their jurisdiction).
	\end{itemize}

	\item {\bf Broader Impacts}
	\item[] Question: Does the paper discuss both potential positive societal impacts and negative societal impacts of the work performed?
	\item[] Answer: \answerNA{} 
	\item[] Justification: There is no societal impact of the work performed, since the work is mainly theoretical.
	\item[] Guidelines:
	\begin{itemize}
		\item The answer NA means that there is no societal impact of the work performed.
		\item If the authors answer NA or No, they should explain why their work has no societal impact or why the paper does not address societal impact.
		\item Examples of negative societal impacts include potential malicious or unintended uses (e.g., disinformation, generating fake profiles, surveillance), fairness considerations (e.g., deployment of technologies that could make decisions that unfairly impact specific groups), privacy considerations, and security considerations.
		\item The conference expects that many papers will be foundational research and not tied to particular applications, let alone deployments. However, if there is a direct path to any negative applications, the authors should point it out. For example, it is legitimate to point out that an improvement in the quality of generative models could be used to generate deepfakes for disinformation. On the other hand, it is not needed to point out that a generic algorithm for optimizing neural networks could enable people to train models that generate Deepfakes faster.
		\item The authors should consider possible harms that could arise when the technology is being used as intended and functioning correctly, harms that could arise when the technology is being used as intended but gives incorrect results, and harms following from (intentional or unintentional) misuse of the technology.
		\item If there are negative societal impacts, the authors could also discuss possible mitigation strategies (e.g., gated release of models, providing defenses in addition to attacks, mechanisms for monitoring misuse, mechanisms to monitor how a system learns from feedback over time, improving the efficiency and accessibility of ML).
	\end{itemize}
	
	\item {\bf Safeguards}
	\item[] Question: Does the paper describe safeguards that have been put in place for responsible release of data or models that have a high risk for misuse (e.g., pretrained language models, image generators, or scraped datasets)?
	\item[] Answer:  \answerNA{} 
	\item[] Justification: The paper poses no such risks.
	\item[] Guidelines:
	\begin{itemize}
		\item The answer NA means that the paper poses no such risks.
		\item Released models that have a high risk for misuse or dual-use should be released with necessary safeguards to allow for controlled use of the model, for example by requiring that users adhere to usage guidelines or restrictions to access the model or implementing safety filters. 
		\item Datasets that have been scraped from the Internet could pose safety risks. The authors should describe how they avoided releasing unsafe images.
		\item We recognize that providing effective safeguards is challenging, and many papers do not require this, but we encourage authors to take this into account and make a best faith effort.
	\end{itemize}
	
	\item {\bf Licenses for existing assets}
	\item[] Question: Are the creators or original owners of assets (e.g., code, data, models), used in the paper, properly credited and are the license and terms of use explicitly mentioned and properly respected?
	\item[] Answer: \answerNA{} 
	\item[] Justification: The paper does not release new assets.
	\item[] Guidelines:
	\begin{itemize}
		\item The answer NA means that the paper does not use existing assets.
		\item The authors should cite the original paper that produced the code package or dataset.
		\item The authors should state which version of the asset is used and, if possible, include a URL.
		\item The name of the license (e.g., CC-BY 4.0) should be included for each asset.
		\item For scraped data from a particular source (e.g., website), the copyright and terms of service of that source should be provided.
		\item If assets are released, the license, copyright information, and terms of use in the package should be provided. For popular datasets, \url{paperswithcode.com/datasets} has curated licenses for some datasets. Their licensing guide can help determine the license of a dataset.
		\item For existing datasets that are re-packaged, both the original license and the license of the derived asset (if it has changed) should be provided.
		\item If this information is not available online, the authors are encouraged to reach out to the asset's creators.
	\end{itemize}
	
	\item {\bf New Assets}
	\item[] Question: Are new assets introduced in the paper well documented and is the documentation provided alongside the assets?
	\item[] Answer: \answerNA{} 
	\item[] Justification: The paper does not release new assets. 
	\item[] Guidelines:
	\begin{itemize}
		\item The answer NA means that the paper does not release new assets.
		\item Researchers should communicate the details of the dataset/code/model as part of their submissions via structured templates. This includes details about training, license, limitations, etc. 
		\item The paper should discuss whether and how consent was obtained from people whose asset is used.
		\item At submission time, remember to anonymize your assets (if applicable). You can either create an anonymized URL or include an anonymized zip file.
	\end{itemize}
	
	\item {\bf Crowdsourcing and Research with Human Subjects}
	\item[] Question: For crowdsourcing experiments and research with human subjects, does the paper include the full text of instructions given to participants and screenshots, if applicable, as well as details about compensation (if any)? 
	\item[] Answer: \answerNA{} 
	\item[] Justification:  The paper does not involve crowdsourcing nor research with human subjects. 
	\item[] Guidelines:
	\begin{itemize}
		\item The answer NA means that the paper does not involve crowdsourcing nor research with human subjects.
		\item Including this information in the supplemental material is fine, but if the main contribution of the paper involves human subjects, then as much detail as possible should be included in the main paper. 
		\item According to the NeurIPS Code of Ethics, workers involved in data collection, curation, or other labor should be paid at least the minimum wage in the country of the data collector. 
	\end{itemize}
	
	\item {\bf Institutional Review Board (IRB) Approvals or Equivalent for Research with Human Subjects}
	\item[] Question: Does the paper describe potential risks incurred by study participants, whether such risks were disclosed to the subjects, and whether Institutional Review Board (IRB) approvals (or an equivalent approval/review based on the requirements of your country or institution) were obtained?
	\item[] Answer: \answerNA{} 
	\item[] Justification: The paper does not involve crowdsourcing nor research with human subjects. 
	\item[] Guidelines:
	\begin{itemize}
		\item The answer NA means that the paper does not involve crowdsourcing nor research with human subjects.
		\item Depending on the country in which research is conducted, IRB approval (or equivalent) may be required for any human subjects research. If you obtained IRB approval, you should clearly state this in the paper. 
		\item We recognize that the procedures for this may vary significantly between institutions and locations, and we expect authors to adhere to the NeurIPS Code of Ethics and the guidelines for their institution. 
		\item For initial submissions, do not include any information that would break anonymity (if applicable), such as the institution conducting the review.
	\end{itemize}
	
\end{enumerate}

\end{document}